\def\op#1{\mathop{{\it\fam0} #1}\limits}
\newcommand{\beq}{\begin{equation}}
\newcommand{\eeq}{\end{equation}}
\newcommand{\ben}{\begin{eqnarray}}
\newcommand{\een}{\end{eqnarray}}
\newcommand{\be}{\begin{eqnarray*}}
\newcommand{\ee}{\end{eqnarray*}}
\newcommand{\bea}{\begin{eqalph}}
\newcommand{\eea}{\end{eqalph}}
\newcommand{\di}{{\mathrm {dim}\,}}
\newcommand{\al}{\alpha}
\newcommand{\bt}{\beta}
\newcommand{\dl}{\delta}
\newcommand{\la}{\lambda}
\newcommand{\La}{\Lambda}
\newcommand{\f}{\phi}
\newcommand{\p}{\pi}
\newcommand{\om}{\omega}
\newcommand{\Om}{\Omega}
\newcommand{\m}{\mu}
\newcommand{\n}{\nu}
\newcommand{\g}{\gamma}
\newcommand{\G}{\Gamma}
\newcommand{\Ker}{\mathrm{Ker}\,}
\newcommand{\thh}{\theta}
\newcommand{\bth}{\mathbf \Theta}
\newcommand{\bom}{\mathbf \Omega}
\newcommand{\vt}{\vartheta}
\newcommand{\cG}{{\mathfrak g}}
\newcommand{\ve}{\varepsilon}
\newcommand{\ap}{\approx}
\newcommand{\nm}[1]{|{#1}|}
\newcommand{\id}{{\mathrm{Id}\,}}
\newcommand{\si}{\sigma}
\newcommand{\cJ}{{\mathcal J}}
\newcommand{\cR}{{\mathcal R}}
\newcommand{\cT}{{\mathcal T}}
\newcommand{\cP}{{\mathcal P}}
\newcommand{\cL}{{\mathcal L}}
\newcommand{\cE}{{\mathcal E}}
\newcommand{\cH}{{\mathcal H}}
\newcommand{\cF}{{\mathcal F}}
\newcommand{\cN}{{\mathcal N}}
\newcommand{\cS}{{\mathcal S}}
\newcommand{\bL}{{\mathbf L}}
\newcommand{\w}{\wedge}
\newcommand{\wt}{\widetilde}
\newcommand{\wh}{\widehat}
\newcommand{\ol}{\overline}
\newcommand{\dr}{\partial}
\newcommand{\x}{\xi}
\newcommand{\llra}{\longleftrightarrow}
\newcommand{\ar}{\op\longrightarrow}
\newcommand{\ot}{\otimes}
\let\ssection=\section
\renewcommand{\section}{\setcounter{equation}{0}\ssection}
\newenvironment{eqalph}{\stepcounter{equation}
\setcounter{equationa}{\value{equation}} \setcounter{equation}{0}
\begin{eqnarray}}{\end{eqnarray}\setcounter{equation}{\value{equationa}}}
\newcounter{equationa}[section]
\newcounter{remark}[section]
\newcounter{example}[section]
\newcounter{theorem}[section]
\newcounter{condition}[section]
\newcounter{lemma}[section]
\newcounter{corollary}[section]
\newcounter{definition}[section]
\def\theremark{\arabic{section}.\arabic{remark}}
\def\thetheorem{\arabic{section}.\arabic{theorem}}
\def\thedefinition{\arabic{section}.\arabic{theorem}}
\newenvironment{proof}{{\it Proof.}}{\hfill {\Large $\bullet$}
\medskip }
\newenvironment{remark}{\refstepcounter{remark} \medskip {\bf Remark
\theremark.} }{\hfill {\Large $\bullet$} \medskip }
\newenvironment{example}{\refstepcounter{remark} \medskip {\bf
Example \theremark.} }{ \hfill {\Large $\bullet$} \medskip }
\newenvironment{theorem}{\refstepcounter{theorem} \medskip{\bf
Theorem \thetheorem.} }{ \hfill $\Box$ \medskip }
\newenvironment{lemma}{\refstepcounter{theorem} \medskip{\bf  Lemma
\thetheorem.}}{\hfill $\Box$ \medskip }
\newenvironment{corollary}{\refstepcounter{theorem} \medskip{\bf
Corollary \thetheorem.} }{ \hfill $\Box$ \medskip }
\newenvironment{definition}{\refstepcounter{theorem} \medskip{\bf
Definition \thedefinition.} }{\hfill $\Box$ \medskip }
\newcommand{\mar}[1]{}
\begin{document}

\hbox{}

\begin{center}

{\Large\bf Polysymplectic Hamiltonian Field Theory}

\bigskip

G. SARDANASHVILY,

\medskip

Department of Theoretical Physics, Moscow State University, Russia

\bigskip

\end{center}

\begin{abstract}
Applied to field theory, the familiar symplectic technique leads
to instantaneous Hamiltonian formalism on an infinite-dimensional
phase space. A true Hamiltonian partner of first order Lagrangian
theory on fibre bundles $Y\to X$ is covariant Hamiltonian
formalism in different variants, where momenta correspond to
derivatives of fields relative to all coordinates on $X$. We
follow polysymplectic (PS) Hamiltonian formalism on a Legendre
bundle over $Y$ provided with a polysymplectic $TX$-valued form.
If $X=\mathbb R$, this is a case of time-dependent
non-relativistic mechanics. PS Hamiltonian formalism is equivalent
to the Lagrangian one if Lagrangians are hyperregular. A
non-regular Lagrangian however leads to constraints and requires a
set of associated Hamiltonians. We state comprehensive relations
between Lagrangian and PS Hamiltonian theories in a case of
semiregular and almost regular Lagrangians. Quadratic Lagrangian
and PS Hamiltonian systems, e.g. Yang -- Mills gauge theory  are
studied in detail. Quantum PS Hamiltonian field theory can be
developed in the frameworks both of familiar functional integral
quantization and quantization of the PS bracket.
\end{abstract}

\bigskip
\bigskip
\bigskip

\tableofcontents

\section*{Introduction}

\addcontentsline{toc}{section}{Introduction}

Applied to field theory, the familiar symplectic Hamiltonian
technique takes the form of instantaneous Hamiltonian formalism on
an infinite-dimensional phase space, where canonical coordinates
are field functions at some instant of time \cite{got91a}. The
true Hamiltonian counterpart of classical first order Lagrangian
field theory on a fibre bundle $Y\to X$ is covariant Hamiltonian
formalism, where canonical momenta $p^\m_i$ correspond to jets
$y^i_\m$ of field variables $y^i$ with respect to all coordinates
$(x^\m)$ on a base $X$. This formalism has been vigorously
developed since 1970s in the Hamilton -- De Donder,
polysymplectic, multisymplectic, $k$-symplectic, $k$-cosymplectic
and other variants (see
\cite{cantr,cari,ech,forg05,jpa99,got,gun,helein,2kij,krupk2,leon2,leon,marsd,rey,rr,rossi,sardz93,zakh}
and references therein).

Here, we are not concerned with higher order Lagrangian  theory
\cite{camp,priet,saun,vitagl} and poly-Poisson formalism
\cite{poly1,poly2}.

We follow polysymplectic (PS) Hamiltonian formalism where the
Legendre bundle $\Pi$ (\ref{00}) plays a role of the momentum
phase space of field theory on a fibre bundle $Y\to X$
\cite{book,jpa99,book09,sardz93,sard94,book95}. It is provided
with the canonical $TX$-valued PS form (\ref{2.4}) and, if $Y\to
X$ is a vector bundle, with the PS bracket (\ref{xx3}) (Section
4).

If $X=\mathbb R$, this is the case of Hamiltonian time-dependent
(non-autonomous) non-relativistic mechanics on a fibre bundle
$Q\to\mathbb R$ \cite{book10,sard98,sard13}. Its momentum phase
space is the vertical cotangent bundle $V^*Q$ of $Q\to\mathbb R$
endowed with the vertical (fibrewise) Poisson structure
(\ref{m72}) (Section 7).

A Hamiltonian in PS theory on the Legendre bundle $\Pi$ (\ref{00})
is defined as a section $h$ of the one-dimensional affine bundle
$Z_L\to \Pi$ (\ref{b418'}) where $Z_Y$ (\ref{N41}) is a
homogeneous Legendre bundle endowed with the canonical
multisymplectic form (\ref{ps13})). Its pull-back with respect to
a Hamiltonian $-h$ is a Hamiltonian form $H$ (\ref{b418}) on a
Legendre bundle $\Pi$. It defines the first order covariant
Hamilton equation (\ref{b4100a}) -- (\ref{b4100b}) on $\Pi$.

The Legendre bundle $\Pi$ (\ref{00}) and the homogeneous Legendre
bundle $Z_Y$ (\ref{N41}) play a role of the momentum and
homogeneous momentum phase spaces in PS Hamiltonian theory,
respectively (Sections 5 -- 6).

Any first order Lagrangian $L$ (\ref{23f2}) on a jet  manifold
$J^1Y$ yields the Legendre map $\wh L: J^1Y\to \Pi$ (\ref{b330}).
Conversely, any Hamiltonian form $H$ (\ref{b418}) on $\Pi$ defines
the momentum map $\wh H: \Pi\to J^1Y$ (\ref{415}). With the
Legendre and momentum maps, on can relate Lagrangian formalism on
$J^1Y$ and PS Hamiltonian formalism on $\Pi$. They are not
equivalent in general.

Lagrangian and PS Hamiltonian formalisms are equivalent in a case
of a hyperregular Lagrangian $L$ (Theorem \ref{ps25}). In Section
9, we state the comprehensive relations between Lagrangian and PS
theories in a case of semiregular and almost regular Lagrangians
(Definition \ref{d11}).

It should be emphasized that PS Hamiltonian formalism on a
Legendre bundle $\Pi$ is equivalent to particular first order
Lagrangian theory on $\Pi$. Given the Hamiltonian form $H$
(\ref{b418}) on $\Pi$, the corresponding covariant Hamilton
equation (\ref{b4100a}) -- (\ref{b4100b}) is the Euler -- Lagrange
equation (\ref{b327}) of the affine first order Lagrangian $L_H$
(\ref{Q3}) (Theorem \ref{91t3}). Moreover, it follows from the
equality (\ref{b4180}) that a Hamiltonian form $H$ possesses the
same classical symmetries as a Lagrangian $L_H$ (Section 10). This
fact enable us to describe symmetries of PS Hamiltonian theory
similarly to those in Lagrangian formalism.

In Section 11, the vertical extension of Lagrangian and PS
Hamiltonian systems are considered. They describe linear
deviations of solutions of Euler -- Lagrange and covariant
Hamilton equations which are Jacobi fields.

Section 12 provides the detailed analysis of quadratic Lagrangian
and PS Hamiltonian systems. The most physically relevant example
of these systems is Yang -- Mills gauge theory of principal
connections (Section 13). Its analysis shows that the main
ingredients in gauge theory are not directly related  with the
gauge invariance property, but are common for theories with almost
regular quadratic Lagrangians.

Affine Lagrangian and PS Hamiltonian systems also are considered
(Section 14). For instance, this is the case of metric-affine
gravitation theory \cite{book,sard94,book95}.

In order to quantize covariant Hamiltonian field theory, one
usually attempts to construct different multimomentum
generalizations of a Poisson bracket
\cite{forg05,forg,helein,kanatch97,kanatch15}. In Section 16, we
discuss some variants of such kind quantization based on the PS
bracket (\ref{xx3}).

At the same time, the above mentioned fact that PS Hamiltonian
formalism on a Legendre bundle $\Pi$ is equivalent to particular
first order Lagrangian theory on $\Pi$ enables us to quantize PS
Hamiltonian field theory in the framework of familiar perturbative
quantum field theory (Section 15).

Throughout the work, we follow familiar technique of fibre
bundles, jet manifolds and connections \cite{book00,book13,sau}.
All morphisms are smooth, and manifolds are smooth real and
finite-dimensional. Smooth manifolds customarily are assumed to be
Hausdorff second-countable and, consequently, locally compact and
paracompact. Unless otherwise stated, they are connected.

The standard symbols $\otimes$, $\vee$, and $\wedge$ stand for the
tensor, symmetric, and exterior products, respectively. The
interior product (contraction) is  denoted by $\rfloor$. By
$\dr^A_B$  are meant the partial derivatives with respect to
coordinates with indices $^B_A$.

If $Z$ is a manifold, we denote by
\be
\pi_Z:TZ\to Z, \qquad \pi^*_Z:T^*Z\to Z
\ee
its tangent and cotangent bundles, respectively. Given manifold
coordinates $(z^\al)$ on $Z$, they are equipped with the holonomic
coordinates
\be
(z^\la,\dot z^\la), \quad \dot z'^\la= \frac{\dr z'^\la}{\dr
z^\mu}\dot z^\m, \qquad (z^\la,\dot z_\la), \quad \dot z'_\la=
\frac{\dr z'^\m}{\dr z^\la}\dot z_\m,
\ee
with respect to the holonomic frames $\{\dr_\la\}$ and  coframes
$\{dz^\la\}$ in the tangent and cotangent spaces to $Z$,
respectively. Any manifold morphism $f:Z\to Z'$ yields the tangent
morphism
\be
Tf:TZ\to TZ', \qquad \dot z'^\la\circ Tf = \frac{\dr f^\la}{\dr
x^\m}\dot z^\m.
\ee
We use the compact notation $\dot\dr_\m = \dr/\dr\dot z^\mu.$

Given a fibre bundle $Y\to X$ endowed with bundle coordinates
$(x^\la,y^i)$, we denote by $VY$ and $V^*$ its vertical tangent
and cotangent bundles provided with holonomic coordinates
$(x^\la,y^i,\dot y^i)$ and $(x^\la, y^i, \ol y_i)$, respectively.

The symbol $C^\infty(Z)$ stands for the ring of smooth real
functions on a manifold $Z$.

\section{First order Lagrangian formalism on fibre bundles}

As was mentioned above, we restrict our consideration to first
order Lagrangian formalism on a smooth fibre bundle $\pi:Y\to X$
over an oriented $(1<n)$-dimensional base $X$ (see Section 7 for a
case of $n=1$) \cite{book,book09,book13}. Let $Y$ be provided with
an atlas of bundle coordinates $(x^\la, y^i)$.

A velocity phase space of first order Lagrangian theory on a fibre
bundle $Y$ is the first order jet manifold $J^1Y$ of sections of
$Y\to X$ (or, simply, of $Y$). It is endowed with the adapted
coordinates $(x^\la, y^i, y^i_\la)$ possessing transition
functions
\mar{50}\beq
{y'}^i_\la = \frac{\dr x^\m}{\dr{x'}^\la}(\dr_\m
+y^j_\m\dr_j)y'^i.\label{50}
\eeq
There are natural fibrations
\mar{1.15}\beq
\pi^1:J^1Y\to X, \qquad \pi^1_0:J^1Y\to Y, \label{1.15}
\eeq
where the latter is an affine bundle modelled over a vector bundle
\mar{cc9}\beq
T^*X \op\otimes_Y VY\to Y.\label{cc9}
\eeq
There are the canonical imbeddings
\mar{18,24}\ben
&&\la_{(1)}:J^1Y\op\to_Y
T^*X \op\otimes_Y TY,\qquad \la_{(1)}=dx^\la
\otimes (\dr_\la + y^i_\la \dr_i)=dx^\la\otimes d_\la, \label{18}\\
&&\thh_{(1)}:J^1Y \op\to_Y T^*Y\op\otimes_Y VY,\qquad
\thh_{(1)}=(dy^i- y^i_\la dx^\la)\otimes \dr_i=\thh^i \otimes
\dr_i,\label{24}
\een
where $d_\la$ denote the total derivatives, and $\thh^i$ are local
contact forms.

A first order Lagrangian of Lagrangian theory on a fibre bundle
$Y\to X$ is defined as a density
\mar{23f2}\beq
L=\cL\om: J^1Y\to \op\w^n T^*X \label{23f2}
\eeq
on the first order jet manifold $J^1Y$ of $Y$ where, for the sake
of simplicity, we denote
\be
\op\w^n T^*X=\pi_1^*(\op\w^n T^*X)=J^1Y\op\times_X \op\w^n
T^*X=J^1Y\op\times_Y \op\w^n T^*X.
\ee

The corresponding second-order Euler -- Lagrange operator reads
\mar{305}\ben
&& \dl L=\cE_L: J^2Y\to T^*Y\w(\op\w^nT^*X), \nonumber \\
&& \cE_L= (\dr_i\cL- d_\la\pi^\la_i) \thh^i\w\om, \qquad
\pi^\la_i=\dr^\la_i\cL,  \label{305}\\
&& d_\la=\dr_\la +y^i_\la\dr_i +y^i_{\la\m}\dr^\m_i, \nonumber
\een
where $J^2Y$ is the second order jet manifold of $Y\to X$
coordinated by $(x^\la,y^i,y^i_\la, y^i_{\la\m}=y^i_{\m\la})$. Its
kernel $\Ker\cE_L\subset J^2Y$ is the second order Euler --
Lagrange equation on $Y$ locally given by equalities
\mar{b327}\beq
\dl_i L(\dr_i- d_\la\dr^\la_i)\cL=0. \label{b327}
\eeq

\begin{remark} Strictly speaking, the Euler -- Lagrange equation
(\ref{b327}) fails to be a differential equation in general
because $\Ker\cE_L\subset J^2Y$ need not be a closed subbundle of
$J^2Y\to X$ \cite{book,book09}.
\end{remark}

In a general setting, Lagrangians (\ref{23f2}) and Euler --
Lagrange operators $\dl L$ (\ref{305}) in Lagrangian formalism on
a fibre bundle $Y\to X$ are introduced as elements of the
variational bicomplex of exterior forms on an infinite order jet
manifold $J^\infty Y$ \cite{cmp05,book09,book13}. Its cohomology
defines the global variational decomposition
\mar{+421}\beq
dL=\dl L-d_H \Xi_L \label{+421}
\eeq
over $J^2Y$ where
\mar{ps6}\beq
d_H\phi=dx^\la\w d_\la\phi \label{ps6}
\eeq
is the total differential of exterior forms $\phi$ on $J^2Y$, and
$\Xi_L$ is some Lepage form which is a Lepage equivalent of a
Lagrangian $L$, i.e.,
\be
&& L=h_0(\Xi_L), \\
&& h_0(dx^\la)= dx^\la,\quad h_0(dy^i)=y^i_\la dx^\la, \quad
h_0(dy^i_\m)=y^i_{\la\m} dx^\la.
\ee
Defined up to a $d_H$-closed form, a form $\Xi_L$ reads
\mar{22f44}\beq
\Xi_L=L+(\p^\la_i-d_\m \si^{\m\la}_i)\thh^i\w\om_\la
+\si^{\la\m}_i \thh^i_\m\w\om_\la, \qquad
\om_\la=\dr_\la\rfloor\om, \label{22f44}
\eeq
where $\si^{\m\la}_i=-\si^{\la\m}_i$ are skew-symmetric local
functions on $Y$. Lepage equivalents constitute an affine space
modelled over a vector space of $d_H$-exact one-contact Lepage
forms
\be
\rho= -d_\m \si^{\m\la}_i\thh^i\w\om_\la +\si^{\la\m}_i
\thh^i_\m\w\om_\la.
\ee
Let us choose the Poincar\'e -- Cartan form
\mar{303}\beq
H_L=\cL\om +\p^\la_i\thh^i\w\om_\la  \label{303}
\eeq
as the origin of this affine space because it is defined on
$J^1Y$.

Given the Lagrangian $L$ (\ref{23f2}), let us consider the
vertical tangent map
\mar{ps2}\beq
 VL: V_YJ^1Y\op\ar_Y V_YJ^1Y\op\times_Y \op\w^n T^*X \label{ps2}
\eeq
to $L$ over $Y$, where $V_YJ^1Y$ denotes the vertical tangent
bundle of $J^1Y\to Y$. Since $J^1Y\to Y$ is an affine
 bundle modelled over the vector bundle (\ref{cc9}), we have the canonical vertical splitting
\be
V_YJ^1Y=J^1Y\op\times_Y (T^*X \op\otimes_Y VY).
\ee
Accordingly, the vertical tangent map $VL$ (\ref{ps2}) yields a
linear morphism
\be
J^1Y\op\times_Y(T^*X\op\ot_YVY) \ar J^1Y\op\times_Y (\op\w^n T^*X)
\ee
over $J^1Y$ and the corresponding morphism
\mar{b330}\beq
\wh L:J^1Y \to V^*Y\op\ot_Y(\op\w^nT^*X)\op\ot_YTX \label{b330}
\eeq
over $Y$. It is called the Legendre map associated to a Lagrangian
$L$.

A fibre bundle
\mar{00}\beq
\Pi =V^*Y\op\ot_Y(\op\w^nT^*X)\op\ot_YTX\op\ar^{\pi_{\Pi Y}} Y
\label{00}
\eeq
over $Y$ is called the Legendre bundle. It is provided with the
holonomic coordinates $(x^\la, y^i, p^\la_i)$ possessing
transition functions
\mar{2.3}\beq
{p'}^\la_i = \det \left(\frac{\dr x^\ve}{\dr {x'}^\nu}\right)
\frac{\dr y^j}{\dr{y'}^i} \frac{\dr {x'}^\la}{\dr x^\m}p^\m_j.
\label{2.3}
\eeq
With respect to these coordinates, the Legendre map (\ref{b330})
reads
\mar{m11}\beq
p^\la_i\circ\wh L =\pi^\la_i. \label{m11}
\eeq

\begin{remark}
There is the canonical isomorphism
\mar{000}\beq
\Pi= V^*Y\op\w_Y(\op\w^{n-1} T^*X), \qquad (p^\la_i)\to p^\la_i
\ol dy^i\om_\la, \label{000}
\eeq
where $\{\ol dy^i\}$ are fibre bases for the vertical cotangent
bundle $V^*Y$ of $Y\to X$.
\end{remark}

Certainly, the Legendre map (\ref{b330}) need not be a bundle
isomorphism. Its range
\mar{ps3}\beq
N_L=\wh L(J^1Y)\subset \Pi \label{ps3}
\eeq
is called the Lagrangian constraint space.

\begin{definition} \label{d11} \mar{d11}
A Lagrangian $L$ is said to be:

$\bullet$ hyperregular if the Legendre map $\wh L$ is a
diffeomorphism;

$\bullet$ regular if $\wh L$ is a local diffeomorphism over $Y$,
i.e., $\det(\dr^\m_i\dr^\n_j\cL)\neq 0$;

$\bullet$ semiregular if the inverse image $\wh L^{-1}(z)$ of any
point $z\in N_L$ is a connected submanifold of $J^1Y$;

$\bullet$ almost regular if the Lagrangian constraint space $N_L$
(\ref{ps3}) is a closed imbedded subbundle
\mar{ps4}\beq
i_N:N_L\to\Pi \label{ps4}
\eeq
of a Legendre bundle $\Pi\to Y$ and the Legendre map
\mar{cmp12}\beq
\wh L:J^1Y\to N_L \label{cmp12}
\eeq
is a fibred manifold with connected fibres (i.e., a Lagrangian $L$
is semiregular).
\end{definition}

The Poincar\'e -- Cartan form (\ref{303}) in turn takes its values
into a subbundle
\be
J^1Y\op\times_Y (T^*Y\op\w_Y (\op\w^{n-1}T^*X))
\ee
of $\op\w^n T^*J^1Y$. Hence, it defines a bundle morphism
\mar{N41}\beq
\wh H_L: J^1Y\to Z_Y=T^*Y\op\w_Y (\op\w^{n-1}T^*X), \label{N41}
\eeq
over $Y$ whose range
\mar{23f10}\beq
Z_L= \wh H_L(J^1Y) \label{23f10}
\eeq
is an imbedded subbundle $i_L:Z_L\to Z_Y$ of the fibre bundle
$Z_Y\to Y$ (\ref{N41}). This morphism is called the homogeneous
Legendre map.
 Accordingly, the fibre bundle $Z_Y\to Y$ (\ref{N41}) is said to
be the homogeneous Legendre bundle.  It is equipped with holonomic
coordinates $(x^\la,y^i,p^\la_i,p)$ possessing transition
functions
\mar{2.3'}\beq
{p'}^\la_i = \det \left(\frac{\dr x^\ve}{\dr {x'}^\nu}\right)
\frac{\dr y^j}{\dr{y'}^i} \frac{\dr {x'}^\la}{\dr x^\m}p^\m_j,
\qquad p'=\det \left(\frac{\dr x^\ve}{\dr {x'}^\nu}\right)
\left(p-\frac{\dr y^j}{\dr {y'}^i}\frac{\dr {y'}^i}{\dr
x^\mu}p^\mu_j\right). \label{2.3'}
\eeq
With respect to these coordinates, the morphism $\wh H_L$
(\ref{N41}) reads
\be
(p^\m_i, p)\circ \wh H_L =(\p^\m_i, \cL-y^i_\m\p^\m_i).
\ee

A glance at the transition functions (\ref{2.3'}) shows that $Z_Y$
(\ref{N41}) is a one-dimensional affine bundle
\mar{b418'}\beq
\pi_{Z\Pi}:Z_Y\to \Pi\label{b418'}
\eeq
over the Legendre bundle $\Pi$ (\ref{000}) modelled over the
pull-back vector bundle
\mar{ps12}\beq
\Pi\op\times_X\op\w^nT^*X\to \Pi. \label{ps12}
\eeq
Moreover, the Legendre map $\wh L$ (\ref{b330}) is exactly the
composition of morphisms
\mar{m11'}\beq
\wh L=\pi_{Z\Pi}\circ \wh H_L:J^1Y \op\to_Y \Pi. \label{m11'}
\eeq

As was mentioned above, the Legendre bundle $\Pi$ (\ref{00}) and
the homogeneous Legendre bundle $Z_Y$ (\ref{N41}) play a role of
the momentum phase space and a homogeneous momentum phase space in
PS Hamiltonian theory, respectively (Section 5).

\section{Cartan and Hamilton -- De Donder equations}

Note that the Euler -- Lagrange equation (\ref{b327}) do not
exhaust all equations considered in first order Lagrangian theory.

Being a Lepage equivalent of a Lagrangian $L$, the Poincar\'e --
Cartan form $H_L$ (\ref{303}) also is a Lepage equivalent of a
first order Lagrangian
\mar{cmp80}\beq
\ol L=\wh h_0(H_L) = (\cL + (\wh y_\la^i - y_\la^i)\p_i^\la)\om,
\qquad \wh h_0(dy^i)=\wh y^i_\la dx^\la, \label{cmp80}
\eeq
on the repeated jet manifold $J^1J^1Y$, coordinated by
$(x^\la,y^i,y^i_\la,\wh y_\m^i,y^i_{\m\la})$.

The Euler -- Lagrange operator for $\ol L$ (called the Euler --
Lagrange -- Cartan operator) reads
\mar{2237}\ben
&& \dl\ol L=\cE_{\ol L} : J^1J^1Y\to T^*J^1Y\w(\op\w^n T^*X), \nonumber \\
&& \cE_{\ol L} = [(\dr_i\cL - \wh d_\la\p_i^\la
+ \dr_i\p_j^\la(\wh y_\la^j - y_\la^j))dy^i + \dr_i^\la\p_j^\m(\wh
y_\m^j - y_\m^j) dy_\la^i]\w
\om, \label{2237} \\
&&\wh d_\la=\dr_\la +\wh y^i_\la\dr_i +
y^i_{\la\m}\dr_i^\m. \nonumber
\een
Its kernel $\Ker\cE_{\ol L}\subset J^1J^1Y$ is the first order
Cartan equation on $J^1Y$ locally given by equalities
\mar{b336}\ben
&& \dr_i^\la\p_j^\m(\wh y_\m^j - y_\m^j)=0, \label{b336a}\\
&& \dr_i \cL - \wh d_\la\p_i^\la
+ (\wh y_\la^j - y_\la^j)\dr_i\p_j^\la=0. \label{b336b}
\een

Since $\cE_{\ol L}|_{J^2Y}=\cE_L$, the Cartan equation
(\ref{b336a})  --  (\ref{b336b}) is equivalent to the Euler --
Lagrange on (\ref{b327}) on integrable sections $\ol s=J^1s$ of
$J^1Y\to X$, where $s$ are sections of $Y\to X$. These equations
are equivalent if a Lagrangian is regular. The Cartan equation
(\ref{b336a})  --  (\ref{b336b}) on sections $\ol s: X\to J^1Y$ is
equivalent to the relation
\mar{C28}\beq
\ol s^*(u\rfloor dH_L)=0, \label{C28}
\eeq
which is assumed to hold for all vertical vector fields $u$ on
$J^1Y\to X$.

The homogeneous Legendre bundle $Z_Y$ (\ref{N41}) admits the
canonical multisymplectic Liouville form
\mar{N43}\beq
\Xi_Y= p\om + p^\la_i dy^i\w\om_\la. \label{N43}
\eeq
Accordingly, its  imbedded subbundle $Z_L$ (\ref{23f10}) is
provided with the pull-back De Donder form $\Xi_L=i^*_L\Xi_Y$.
There is the equality
\mar{cmp14}\beq
H_L=\wh H_L^*\Xi_L=\wh H_L^*(i_L^*\Xi_Y).  \label{cmp14}
\eeq
By analogy with the Cartan equation (\ref{C28}), the Hamilton --
De Donder equation for sections $\ol r$ of $Z_L\to X$ is written
as
\mar{N46}\beq
\ol r^*(u\rfloor d\Xi_L)=0, \label{N46}
\eeq
where $u$ is an arbitrary vertical vector field on $Z_L\to X$.
Then the following holds \cite{got}.

\begin{theorem}\label{ddd} \mar{ddd} Let the homogeneous Legendre map
$\wh H_L$ be a submersion. Then a section $\ol s$ of $J^1Y\to X$
is a solution of the Cartan equation (\ref{C28}) iff $\wh
H_L\circ\ol s$ is a solution of the Hamilton -- De Donder equation
(\ref{N46}), i.e., the Cartan and Hamilton -- De Donder equations
are quasi-equivalent.
\end{theorem}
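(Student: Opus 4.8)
The plan is to reduce the equivalence to a pull-back identity between the De Donder form $\Xi_L$ on $Z_L$ and the Poincaré--Cartan form $H_L$ on $J^1Y$, and then to exploit the surjectivity of $\wh H_L$ to transport solutions back and forth. First I would recall the two structural facts already established in the excerpt: the equality $H_L=\wh H_L^*\Xi_L$ of \eqref{cmp14}, and the fact that $\wh H_L:J^1Y\to Z_L$ is a fibred morphism over $Y$ onto the imbedded subbundle $Z_L\subset Z_Y$. Since $\wh H_L$ is assumed to be a submersion and is surjective onto $Z_L$ by construction \eqref{23f10}, every vertical vector field $u$ on $J^1Y\to X$ is $\wh H_L$-related to some vertical vector field $v$ on $Z_L\to X$ along $\wh H_L$, and conversely every vertical vector field $v$ on $Z_L\to X$ pulls back (locally, and globally on the image) to vertical vector fields on $J^1Y\to X$; this is the point where the submersion hypothesis is used in an essential way.

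Next I would carry out the forward implication. Suppose $\ol s$ is a section of $J^1Y\to X$ solving the Cartan equation \eqref{C28}, i.e.\ $\ol s^*(u\rfloor dH_L)=0$ for all vertical $u$. Put $\ol r=\wh H_L\circ\ol s$, a section of $Z_L\to X$. For an arbitrary vertical vector field $v$ on $Z_L\to X$, choose a vertical vector field $u$ on $J^1Y\to X$ that is $\wh H_L$-related to $v$; then $\wh H_L$-relatedness together with $H_L=\wh H_L^*\Xi_L$ and naturality of the exterior differential gives
\[
\ol r^*(v\rfloor d\Xi_L)=\ol s^*\bigl(\wh H_L^*(v\rfloor d\Xi_L)\bigr)=\ol s^*\bigl(u\rfloor \wh H_L^*(d\Xi_L)\bigr)=\ol s^*(u\rfloor dH_L)=0,
\]
so $\ol r$ solves \eqref{N46}. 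For the converse, suppose $\ol r$ solves the Hamilton--De Donder equation \eqref{N46}. Because $\wh H_L$ is a surjective submersion onto $Z_L$, there is a section $\ol s$ of $J^1Y\to X$ with $\wh H_L\circ\ol s=\ol r$ (this requires a short argument: locally $\wh H_L$ admits sections, and one patches them, or one observes that the Cartan equation restricted to the fibres of $\wh H_L$ forces $\ol s$ to land in the appropriate level set; in the original reference this is where ``quasi-equivalence'' rather than strict equivalence enters, since $\ol s$ need not be unique). Given such $\ol s$, the same relatedness computation run in reverse — now using that every vertical $u$ on $J^1Y\to X$ pushes to a vertical $v$ on $Z_L\to X$ along $\wh H_L$ — yields $\ol s^*(u\rfloor dH_L)=\ol r^*(v\rfloor d\Xi_L)=0$, so $\ol s$ solves \eqref{C28}.

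The main obstacle is the converse direction: lifting a solution $\ol r$ on $Z_L$ to a solution $\ol s$ on $J^1Y$. Surjectivity of $\wh H_L$ only guarantees a set-theoretic preimage; one must produce a \emph{smooth} section, and then verify it actually solves the Cartan equation and not merely maps to $\ol r$. This is handled by noting that $\wh H_L$ is an affine bundle morphism in suitable coordinates (compare the local expression $(p^\m_i,p)\circ\wh H_L=(\p^\m_i,\cL-y^i_\m\p^\m_i)$ with \eqref{N41}), so its fibres over $Z_L$ are affine subspaces of $J^1Y$; along these fibres the form $u\rfloor dH_L$ behaves controllably, and the Hamilton--De Donder equations pin down $\ol s$ up to the fibre direction of $\wh H_L$, which is precisely the source of the word ``quasi-equivalent'' in the statement. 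Verifying that this residual freedom is exactly the kernel of $\wh H_L$, so that no spurious obstruction arises, is the delicate point; everything else is bookkeeping with $\wh H_L$-related vector fields and the single identity \eqref{cmp14}.
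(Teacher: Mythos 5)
The paper gives no proof of Theorem \ref{ddd} at all; it is quoted from the reference [Gotay], so there is nothing internal to compare against. Your strategy is nonetheless the standard one: reduce everything to the single identity $H_L=\wh H_L^*\Xi_L$ of (\ref{cmp14}) and transport interior products through $\wh H_L$-related vertical vector fields. Your forward direction is correct as written, and you correctly locate where the submersion hypothesis enters: it is needed to lift a vertical field $v$ on $Z_L\to X$ to a vertical field $u$ on $J^1Y\to X$ with $T\wh H_L\circ u=v\circ\wh H_L$, after which $\ol r^*(v\rfloor d\Xi_L)=\ol s^*(u\rfloor \wh H_L^*d\Xi_L)=\ol s^*(u\rfloor dH_L)=0$.

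Two points in your converse need repair. First, the claim that every vertical $u$ on $J^1Y\to X$ ``pushes to a vertical $v$ on $Z_L\to X$'' is false: a vector field on the total space of a submersion is projectable only if $T\wh H_L(u)$ is constant along the fibres, which an arbitrary vertical $u$ need not satisfy. The correct argument is pointwise: for $\xi_1,\dots,\xi_n\in T_xX$ one has
\[
\ol s^*(u\rfloor dH_L)(\xi_1,\dots,\xi_n)=(d\Xi_L)_{\ol r(x)}\bigl(T\wh H_L(u_{\ol s(x)}),\,T\ol r\,\xi_1,\dots,T\ol r\,\xi_n\bigr),
\]
and $T\wh H_L(u_{\ol s(x)})$ is a vertical vector at $\ol r(x)$ that extends to some vertical field on $Z_L\to X$; the Hamilton--De Donder equation for $\ol r$ then annihilates the right-hand side. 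No submersion hypothesis is needed here. Second, your final paragraph attacks a problem the theorem does not pose: the statement fixes $\ol s$ in both directions and asserts only that $\ol s$ solves the Cartan equation iff $\ol r=\wh H_L\circ\ol s$ solves the Hamilton--De Donder equation, so no lifting of an abstract solution $\ol r$ to a smooth section of $J^1Y\to X$ is required, and the difficulties you describe there (smooth selection of preimages, affine fibres of $\wh H_L$) are not obstacles to the stated biconditional --- they are only relevant to the stronger claim that every Hamilton--De Donder solution arises from some Cartan solution, which is what the word ``quasi-equivalent'' alludes to. With the pointwise fix in the converse, your argument is complete.
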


Note that the Cartan and Hamilton -- De Donder equations play a
role of the Lagrangian partners of Hamilton equations in PS
Hamiltonian theory (Section 9).

\section{Polysymplectic structure}

Treated as a momentum phase space of fields, the Legendre bundle
$\Pi$ (\ref{00}) is endowed with the following polysymplectic (PS)
structure.

There is the canonical bundle monomorphism
\mar{2.4}\beq
\bth_Y :\Pi\op\ar_Y\op\w^{n+1}T^*Y\op\otimes_Y TX, \qquad  \bth_Y
=p^\la_idy^i\w\om\otimes\dr_\la, \label{2.4}
\eeq
called the tangent-valued Liouville form on $\Pi$. Strictly
speaking, it is $TX$-valued, but not a tangent-valued (i.e.,
$T\Pi$-valued) form on $\Pi$. Therefore, standard technique of
tangent-valued forms, as like as that of exterior forms, is not
applied to $\bth_Y$ (\ref{2.4}).

 At the same
time, there is a unique $TX$-valued $(n+2)$-form
\mar{406}\beq
\bom_Y =dp_i^\la\w dy^i\w \om\ot\dr_\la \label{406}
\eeq
on $\Pi$ such that the relation
\mar{ps30}\beq
\bom_Y\rfloor\f =d(\bth_Y\rfloor\f) \label{ps30}
\eeq
holds for an arbitrary exterior one-form $\f$ on $X$. The form
(\ref{406}) is called the polysymplectic (PS) form.

\begin{remark}
It should be emphasized that, following \cite{gun}, one often
provides $\Pi$ (\ref{00}) with an exterior form
\be
dp_i^\la\w dy^i\w \om_\la, \qquad \om_\la=\dr_\la\rfloor\om,
\ee
which however globally is ill defined because it is not maintained
under the transition functions (\ref{2.3}) (cf. the form
(\ref{000})). Our variant of PS formalism on the Legendre bundle
$\Pi$ (\ref{00}) is based just on the $TX$-valued PS form $\bom_Y$
(\ref{406}) \cite{book,jpa99,book09,sardz93,book95}.
\end{remark}

Let $J^1\Pi$ be the first order jet manifold of a fibre bundle
$\Pi\to X$. It is equipped with the adapted coordinates $(x^\la
,y^i,p^\la_i,y^i_\m,p^\la_{\m i})$. A connection
\mar{cmp33}\beq
\g =dx^\la\otimes(\dr_\la +\g^i_\la\dr_i +\g^\m_{\la i}\dr^i_\m)
\label{cmp33}
\eeq
on $\Pi\to X$ is called the Hamiltonian connection if an exterior
form
\be
\g\rfloor\bom_Y=(\dr_\la +\g^i_\la\dr_i +\g^\m_{\la
i}\dr^i_\m)\rfloor (dp_i^\la\w dy^i\w \om)
\ee
on $J^1\Pi$ is closed. Components of a Hamiltonian connection
satisfy the conditions
\mar{b422}\beq
\dr^i_\la\g^j_\m-\dr^j_\m\g^i_\la=0,\qquad
  \dr_i\g_{\m j}^\m- \dr_j\g_{\m i}^\m=0,
\qquad \dr_j\g_\la^i+\dr_\la^i\g_{\m j}^\m=0. \label{b422}\\
\eeq

If a form $\g\rfloor\bom_Y$ is closed, there is a contractible
neighborhood $U$ of each point of $\Pi$ where a local form
$\g\rfloor\bom_Y$ is exact, i.e.,
\mar{cmp4}\beq
\g\rfloor\bom_Y =dp^\la_i\w dy^i\w\om_\la - (\g^i_\la dp^\la_i
-\g^\la_{\la i}dy^i)\w\om =dH_U \label{cmp4}
\eeq
on $U$. It is readily observed that, by virtue of the conditions
(\ref{b422}), the second term in the right-hand side of this
equality also is closed and, consequently, an exact form on $U$.
In accordance with the relative Poincar\'e lemma (\cite{book},
Remark 4.4.2), this term can be brought into the form
$d\cH_U\w\om$ where $\cH_U$ is a local function on $U$. Then a
form $H_U$ in the expression (\ref{cmp4}) reads
\mar{mos06}\beq
H_U=p^\la_idy^i\w\om_\la -\cH_U\om. \label{mos06}
\eeq

\begin{example}\label{e43.5} Every connection
\mar{ps9}\beq
\G=dx^\la\ot(\dr_\la +\G^i_\la\dr_i) \label{ps9}
\eeq
on a fibre bundle $Y\to X$ gives rise to a connection
\mar{404}\ben
\wt\G =dx^\la\otimes[\dr_\la +\G^i_\la (y)\dr_i + (-\dr_j\G^i_\la
p^\m_i+ K_\la{}^\m{}_\nu p^\nu_j- K_\la{}^\al{}_\al
p^\m_j)\dr^j_\m]  \label{404}
\een
on a Legendre bundle $\Pi\to X$, where $K$ is a symmetric linear
connection
\mar{08}\beq
K= dx^\la\otimes (\dr_\la +K_\la{}^\m{}_\n \dot x^\n
\frac{\dr}{\dr\dot x^\m}), \label{08}
\eeq
on the tangent bundle $TX\to X$. Due to the isomorphism
(\ref{000}), the connection (\ref{404}) is constructed as follows
\cite{book}. It is a tensor product
\beq
\wt\G=(\G\times K)\op\ot_\G V^*\G \label{b422'}
\eeq
over $\G$  of the product connection $\G\times K$ on the pull-back
bundle
\be
Y\op\times_X\op\w^{n-1}T^*X\to X
\ee
and the covertical connection $V^*\G$ to $\G$:
\mar{44}\beq
V^*\G
=dx^\la\otimes(\dr_\la +\G^i_\la\frac{\dr}{\dr y^i}-\dr_j\G^i_\la
\ol y_i \frac{\dr}{\dr \ol y_j}), \label{44}
\eeq
on the vertical cotangent bundle $V^*Y\to X$. Since the
connections $\G\times K$ and $V^*\G$ are linear connections over
$\G$, their tensor product (\ref{b422'}) is well defined. The
connection (\ref{404}) on $\Pi\to X$, by construction, projects
onto the connection $\G$ on $Y\to X$. It obeys a relation
\be
\wt\G\rfloor\bom_Y =-d(\G\rfloor\bth_Y)=dH_\G
\ee
where a form
\mar{3.6}\beq
H_\G =\G^*\Xi_Y =p^\la_i dy^i\w\om_\la -p^\la_i\G^i_\la\om
\label{3.6}
\eeq
globally is well defined. It follows that $\wt\G$ (\ref{404}) is a
Hamiltonian connection.
\end{example}

Thus, Hamiltonian connections always exist on a Legendre bundle
$\Pi\to X$, and every connection $\G$ on $Y\to X$ gives rise to a
Hamiltonian connection on $\Pi\to X$.

\section{PS bracket}

If $Y\to X$ is a vector bundle, the Legendre bundle $\Pi$
(\ref{00}) also is provided with a PS bracket. Note that different
generalizations of a  Poisson bracket have been suggested in the
framework of covariant Hamiltonian field theory
\cite{forg05,forg,helein,kanatch97,kanatch15,mang99}. Here, we
consider such a bracket in the framework of PS geometry, but it
differs from that in our work \cite{mang99}. In a case of
time-dependent mechanics it reduces to the familiar vertical
Poisson bracket $\{,\}_V$ (\ref{m72}) \cite{book10,sard13}.

Given an exterior bundle
\be
\w TX= (X\times \mathbb R)\op\oplus_X TX\op\oplus_X\op\w^2
TX\op\oplus_X\cdots \op\oplus_X\op\w^nTX
\ee
let us consider a fibre bundle
\mar{ps10}\beq
\Pi\op\times_X (\w TX\ot\op\w^n T^*X)\to \Pi. \label{ps10}
\eeq
One can think of its sections
\mar{ps11}\beq
F=\frac{1}{k!}F^{\m_1\ldots\m_k}\dr_{\m_1}\w\cdots\w\dr_{\m_k}\ot\om,
\qquad |F|=k, \label{ps11}
\eeq
as being $TX$-multivalued densities on $\Pi$. Let $\cT^*(\Pi)$
denote their real space. It is an exterior algebra with respect to
the exterior product
\be
&& F\w G=
\frac{1}{k!q!}F^{\m_1\ldots\m_k}G^{\nu_1\ldots\al_q}\dr_{\m_1}\w\cdots\w\dr_{\m_k}\w
\dr_{\al_1}\w\cdots\w\dr_{\al_q}\ot\om, \\
&& |F\w G|=|F|+|G|, \qquad F\w G=(-1)^{|F||G|}G\w F.
\ee

A manifested PS bracket on elements $F$ (\ref{ps11}) of an algebra
$\cT^*\Pi)$ is introduced by the law
\mar{xx3}\ben
&& \{F,G\}_{PS}= \frac{1}{(k-1)!q!} \frac{\dr F^{\m\m_2\ldots\m_k}}{\dr
p^\m_i} \frac{\dr G^{\al_1\ldots\al_q}}{\dr y^i}
\dr_{\m_2}\w\cdots\w\dr_{\m_k}\w
\dr_{\al_1}\w\cdots\w\dr_{\al_q}\ot\om -\nonumber\\
&& \qquad \frac{1}{(q-1)!k!} \frac{\dr G^{\al\al_2\ldots\al_q}}{\dr p^\al_j}\frac{\dr F^{\m_1\ldots\m_k}}{\dr
y^j}
\dr_{\al_2}\w\cdots\w\dr_{\al_q}\w\dr_{\m_1}\w\cdots\w\dr_{\m_k}\ot\om.
\label{xx3}
\een
If $Y\to X$ is a vector bundle, it is maintained under the
transformations (\ref{2.3}) and, consequently, globally is well
defined.

The PS bracket (\ref{xx3}) obeys the relations
\be
\{F,G\}_{PS}= -\{G,F\}_{PS}, \qquad |\{F,G\}_{PS}|=|F|+|G|-1.
\ee
The Jacobi identity
\be
\{S,\{G,F\}\}_{PS} +\{G,\{F,S\}\}_{PS} +\{F,\{S,D\}\}_{PS}=0
\ee
also holds if all the products
\be
|S-1||G-1|, \qquad |G-1||F-1|, |F-1||S-1|
\ee
are even. In particular, it follows that a space of $TX$-valued
densities $F\in\cT^1(\Pi)$ constitute a real Lie algebra with
respect to the PS bracket $\{,\}_{PS}$ (\ref{xx3}).

If $\di X=1$, the PS bracket (\ref{xx3}) always is defined and, as
was mentioned above, it is reduced to the Poisson bracket
$\{,\}_V$ (\ref{m72}) of the canonical vertical Poisson structure
on the vertical cotangent bundle $V^*Y$ (Section 7).

It should be emphasized that the PS bracket $\{F,G\}_{PS}$
(\ref{xx3}), as like as the Poisson bracket $\{,\}_V$ (\ref{m72}),
fails to describe dynamics of Hamiltonian systems \cite{mang99},
but in particular it yields the bracket (\ref{ps45}) of Noether
Hamiltonian currents. Similarly, the Poisson bracket $\{,\}_V$
(\ref{m72}) defines a Lie bracket of Noether currents in mechanics
\cite{book10}.

The PS bracket (\ref{xx3}) can be applied to quantization of
Hamiltonian field systems (Section 16).

\section{Hamiltonian forms}

In the framework of PS formalism, dynamics of sections of the
Legendre bundle $\Pi$ (\ref{00}) is described in terms of
Hamiltonian forms.

Let us consider the homogeneous Legendre bundle $Z_Y$ (\ref{N41})
and the affine bundle $Z_Y\to\Pi$ (\ref{b418'}) modelled over the
pull-back bundle (\ref{ps12}). The homogeneous Legendre bundle
$Z_Y$ is provided with the canonical multisymplectic Liouville
form $\Xi_Y$ (\ref{N43}). Its exterior differential $d\Xi_Y$ is
the canonical multisymplectic form
\mar{ps13}\beq
\Om_Y=dp\w\om +dp^\la_i\w dy^i\w \om_\la. \label{ps13}
\eeq

Let $h=-\cH\om$ be a section the affine bundle $Z_Y\to\Pi$
(\ref{b418'}). A glance at the transformation law (\ref{2.3'})
shows that it is not a density. By analogy with Hamiltonian
time-dependent mechanics (Section 7), $-h$ is said to be the
covariant Hamiltonian of PS Hamiltonian theory. It defines the
pull-back
\mar{b418}\beq
 H=h^*\Xi_Y= p^\la_i dy^i\w \om_\la -\cH\om \label{b418}
\eeq
of the multisimplectic Liouville form $\Xi_Y$ onto a Legendre
bundle $\Pi$ which is called the Hamiltonian form on $\Pi$.

The following is a straightforward corollary of this definition.

\begin{theorem} \label{91t1} \mar{91t1}
(i) Every connection $\G$ (\ref{ps9}) on a fibre bundle $Y\to X$
yields a section
\be
\G=\ol dy^i\to dy^i- \G^i_\la dx^\la
\ee
of a fibre bundle $T^*Y\to V^*Y$ which gives rise to a section
\be
 h_\G : p^\la_i\ol dy^i\ot\om_\la \to  p^\la_i dy^i\w\om_\la -
p^\la_i \G^i_\la \om
\ee
of the affine bundle  (\ref{b418'}). Consequently, it defines a
Hamiltonian form
\mar{ps14}\beq
h_\G^*\Xi=p^\la_i dy^i\w\om_\la -p^\la_i\G^i_\la\om \label{ps14}
\eeq
on a Legendre bundle $\Pi$ which coincides with the form $H_\G$
(\ref{3.6}).

(ii) Hamiltonian forms constitute a non-empty affine space
modelled over the linear space of densities $\wt H=\wt{\cH}\om$ on
$\Pi\to X$ which are sections of the pull-back bundle
(\ref{ps12}).

(iii) Given a connection $\G$ on $Y\to X$, every Hamiltonian form
$H$ (\ref{b418}) admits a decomposition
\mar{4.7}\beq
H=H_\G -\wt H_\G =p^\la_idy^i\w\om_\la
-p^\la_i\G^i_\la\om-\wt{\cH}_\G\om. \label{4.7}
\eeq
\end{theorem}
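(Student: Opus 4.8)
The plan is to read all three assertions directly off the definition (\ref{b418}) of a Hamiltonian form as the pull-back $h^*\Xi_Y$ of the multisymplectic Liouville form $\Xi_Y$ (\ref{N43}) along a section $h$ of the affine bundle $Z_Y\to\Pi$ (\ref{b418'}), together with the elementary fact that the global sections of an affine bundle form an affine space modelled over the global sections of the underlying vector bundle. First I would establish (i) by an explicit construction and a one-line coordinate check; (ii) then follows from (i) and the affine-bundle fact, and (iii) is merely a restatement of (ii) relative to the base point $H_\G$.

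For (i), a connection $\G$ (\ref{ps9}) on $Y\to X$ determines, dually to its horizontal splitting of $TY$, a splitting of the canonical surjection $T^*Y\to V^*Y$ (the one sending $dy^i\mapsto\ol{dy}^i$ and $dx^\la\mapsto 0$), namely $\ol{dy}^i\mapsto dy^i-\G^i_\la dx^\la$; these are precisely the one-forms that annihilate the horizontal vectors $\dr_\la+\G^j_\la\dr_j$ and that project back onto $\ol{dy}^i$. Wedging this splitting with $\op\w^{n-1}T^*X$ and invoking the canonical isomorphism (\ref{000}) produces a morphism $h_\G:\Pi\to Z_Y$ carrying $p^\la_i\ol{dy}^i\w\om_\la$ to $p^\la_i(dy^i-\G^i_\m dx^\m)\w\om_\la=p^\la_i dy^i\w\om_\la-p^\la_i\G^i_\la\om$, where I use $dx^\m\w\om_\la=\dl^\m_\la\om$. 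Composing $h_\G$ with the projection $Z_Y\to\Pi$ returns the original point, so $h_\G$ is a section of (\ref{b418'}), and its global well-definedness is automatic because the construction is canonical in $\G$. In the holonomic coordinates $(x^\la,y^i,p^\la_i,p)$ on $Z_Y$ this section reads $p\circ h_\G=-p^\la_i\G^i_\la$, so pulling $\Xi_Y=p\,\om+p^\la_i dy^i\w\om_\la$ back along $h_\G$ gives $p^\la_i dy^i\w\om_\la-p^\la_i\G^i_\la\om$, which is exactly the form $H_\G$ of (\ref{3.6}); this is the content of (\ref{ps14}).

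For (ii), I would observe that $h\mapsto h^*\Xi_Y$ is a bijection of sections of (\ref{b418'}) onto Hamiltonian forms: a section with $p\circ h=-\cH$ is sent to $p^\la_i dy^i\w\om_\la-\cH\om$, so modifying $h$ by a section $\wt{\cH}\om$ of the underlying vector bundle (\ref{ps12}) modifies its image by precisely the density $\wt H=\wt{\cH}\om$ on $\Pi\to X$. Part (i) supplies at least one such section (a connection on the fibre bundle $Y\to X$ always exists), so the set of Hamiltonian forms is non-empty, and by the affine-bundle fact it is an affine space modelled over the linear space of those densities. Assertion (iii) is then immediate: with $H_\G$ as base point, an arbitrary Hamiltonian form $H$ differs from $H_\G$ by some density $\wt H_\G=\wt{\cH}_\G\om$, so $H=H_\G-\wt H_\G=p^\la_i dy^i\w\om_\la-p^\la_i\G^i_\la\om-\wt{\cH}_\G\om$, which is (\ref{4.7}). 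The only point demanding a little care is the bijectivity, not merely the affineness, of $h\mapsto h^*\Xi_Y$ — i.e. that pulling back $\Xi_Y$ loses no information and transports the affine structure of sections faithfully onto Hamiltonian forms; but in the coordinates above this reduces to the transparent correspondence $h\leftrightarrow\cH$ with $H=p^\la_i dy^i\w\om_\la-\cH\om$, so I expect no genuine obstacle, the remainder being routine bookkeeping with (\ref{000}) and the transition law (\ref{2.3'}).
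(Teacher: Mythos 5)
Your proposal is correct and follows essentially the route the paper intends: the paper offers no separate argument, introducing the theorem as "a straightforward corollary" of the definition of a Hamiltonian form as the pull-back $h^*\Xi_Y$ along a section of the affine bundle (\ref{b418'}), and your write-up simply makes that explicit — the connection-induced splitting of $T^*Y\to V^*Y$ for (i), the affine structure on sections of (\ref{b418'}) transported through the bijection $h\leftrightarrow\cH$ for (ii), and the choice of base point $H_\G$ for (iii). No gaps.
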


Moreover, every Hamiltonian form $H$ (\ref{b418}) admits the
canonical decomposition (\ref{3.8}) as follows.

We agree to call any bundle morphism
\mar{2.7}\beq
\Phi=dx^\la\otimes(\dr_\la
+\Phi^i_\la(x^\m,y^j,p^\m_j)\dr_i):\Pi\op\to_Y J^1Y \label{2.7}
\eeq
over $Y$ the  Hamiltonian map.

In particular, let $\G$ be a connection on $Y\to X$. Then the
composition
\mar{b420}\beq
\wh\G=\G\circ\pi_{\Pi Y}=dx^\la\otimes (\dr_\la
+\G^i_\la\dr_i):\Pi\to Y\to J^1Y \label{b420}
\eeq
is a Hamiltonian map. Conversely, every Hamiltonian map
$\Phi:\Pi\to J^1Y$ yields the associated connection
\mar{ps16}\beq
 \G_\Phi =\Phi\circ\wh 0=dx^\la\otimes(\dr_\la
+\Phi^i_\la(x^\m,y^j,0)\dr_i) \label{ps16}
\eeq
on $Y\to X$, where $\wh 0$ is the global zero section of a
Legendre bundle $\Pi\to Y$. In particular, we have
$\G_{\wh\G}=\G$.

\begin{theorem} \label{91t2} \mar{91t2} Every Hamiltonian map
(\ref{2.7}) defines a Hamiltonian form
\mar{414}\beq
H_\Phi=-\Phi\rfloor\bth_Y =p^\la_idy^i\w\om_\la
-p^\la_i\Phi^i_\la\om. \label{414}
\eeq
\end{theorem}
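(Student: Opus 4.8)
The plan is to verify that the expression $H_\Phi=-\Phi\rfloor\bth_Y$ is a well-defined Hamiltonian form, i.e.\ a section of the affine bundle $Z_Y\to\Pi$ pulled back along a suitable covariant Hamiltonian, and then to confirm the coordinate formula (\ref{414}). First I would compute the contraction directly. Writing $\Phi=dx^\la\otimes(\dr_\la+\Phi^i_\la\dr_i)$ and $\bth_Y=p^\m_j\,dy^j\w\om\otimes\dr_\m$, the contraction $\Phi\rfloor\bth_Y$ is taken by pairing the $TX$-coframe part $dx^\la$ of $\Phi$ with the $TX$-vector part $\dr_\m$ of $\bth_Y$ (so that $\la=\m$) and letting the vector field $\dr_\la+\Phi^i_\la\dr_i$ act by interior product on the $(n+1)$-form $p^\m_j\,dy^j\w\om$. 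Since $\dr_\la\rfloor(dy^j\w\om)=dy^j\w(\dr_\la\rfloor\om)=dy^j\w\om_\la$ up to the sign coming from moving $\dr_\la$ past $dy^j$, and $\dr_i\rfloor(dy^j\w\om)=\dl^j_i\om$, one obtains $\Phi\rfloor\bth_Y=-p^\la_i\,dy^i\w\om_\la+p^\la_i\Phi^i_\la\,\om$, whence $H_\Phi=-\Phi\rfloor\bth_Y=p^\la_i\,dy^i\w\om_\la-p^\la_i\Phi^i_\la\,\om$, which is exactly (\ref{414}) and has the shape (\ref{b418}) with $\cH=p^\la_i\Phi^i_\la$.

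Next I would argue that $H_\Phi$ is globally well defined and is genuinely a Hamiltonian form in the sense of the definition preceding Theorem \ref{91t1}, i.e.\ the pull-back $h^*\Xi_Y$ of the multisymplectic Liouville form along a section $h=-\cH\om$ of $Z_Y\to\Pi$. Because $\bth_Y$ (the tangent-valued Liouville form (\ref{2.4})) is canonical and $\Phi$ is a genuine bundle morphism $\Pi\to_Y J^1Y$, the contraction $\Phi\rfloor\bth_Y$ is a globally defined $n$-form on $\Pi$; this disposes of any worry about the transition law (\ref{2.3}). Comparing with (\ref{b418}), the term $p^\la_i\,dy^i\w\om_\la$ is precisely $h^*(p^\la_i\,dy^i\w\om_\la)$ for any section $h$, and the remaining term identifies the covariant Hamiltonian as $\cH=p^\la_i\Phi^i_\la$; so the corresponding section of $Z_Y\to\Pi$ is $h:(x^\la,y^i,p^\la_i)\mapsto(x^\la,y^i,p^\la_i,p=-p^\la_i\Phi^i_\la)$, and one checks it respects (\ref{2.3'}) exactly because $\Phi\rfloor\bth_Y$ is well defined. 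Then $H_\Phi=h^*\Xi_Y$ by a one-line comparison using $\Xi_Y=p\,\om+p^\la_i\,dy^i\w\om_\la$.

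The only genuinely delicate point — and the one I expect to be the main obstacle, though it is minor — is bookkeeping the signs in the interior product on the $(n+1)$-form $dy^i\w\om$, together with correctly interpreting the ``contraction'' $\rfloor$ between the $TX$-valued form $\bth_Y$ and the $J^1Y$-valued (hence in particular $TX$-coframe-valued) morphism $\Phi$: one must contract the $dx^\la$ factor of $\Phi$ against the $\dr_\la$ factor of $\bth_Y$ and simultaneously let the horizontal lift act as a vector field on the base of the $TX$-valued part. Once the convention is fixed consistently with (\ref{ps30}), the computation is forced and yields (\ref{414}). I would close by noting that the particular case $\Phi=\wh\G$ of (\ref{b420}) recovers $H_{\wh\G}=H_\G$ of (\ref{3.6}), consistent with Theorem \ref{91t1}(i), which serves as a sanity check on the signs.
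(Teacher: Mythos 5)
Your proof is correct, but it takes a genuinely different route from the paper's. The paper does not compute the contraction at all: it observes that for $\Phi=\wh\G$ the form $-\wh\G\rfloor\bth_Y$ is exactly $H_\G$ (\ref{3.6}), which is already known to be a Hamiltonian form by item (i) of Theorem \ref{91t1}; then, since $J^1Y\to Y$ is an affine bundle modelled over $T^*X\ot_Y VY$, the difference $\Phi-\wh\G$ of two Hamiltonian maps is a $VY$-valued basic one-form on $\Pi\to X$, so $H_\Phi-H_\G=-(\Phi-\wh\G)\rfloor\bth_Y$ is a density on $\Pi\to X$, and item (ii) of Theorem \ref{91t1} (Hamiltonian forms are an affine space modelled over such densities) finishes the argument in one line. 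You instead verify everything directly: the coordinate computation of $-\Phi\rfloor\bth_Y$ (your signs are right, and the result matches (\ref{414})), the global well-definedness of the contraction by naturality, and the identification of the section $p=-p^\la_i\Phi^i_\la$ of $Z_Y\to\Pi$ compatible with (\ref{2.3'}). What the paper's route buys is brevity and the avoidance of any transition-function check, at the cost of leaning on Theorem \ref{91t1}; what your route buys is self-containedness and an explicit covariant Hamiltonian $\cH=p^\la_i\Phi^i_\la$. Amusingly, your closing ``sanity check'' that $\Phi=\wh\G$ recovers $H_\G$ is precisely the anchor on which the paper's entire proof rests. The only slightly soft spot in your write-up is the phrase ``one checks it respects (\ref{2.3'}) exactly because $\Phi\rfloor\bth_Y$ is well defined''; this is correct, but it silently uses that an $n$-form with the local shape $p^\la_i dy^i\w\om_\la-\cH\om$ is global iff $-\cH\om$ is a section of the affine bundle (\ref{b418'}), which is the content of the transition law (\ref{2.3'}) and deserves a sentence.
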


\begin{proof}
Given an arbitrary connection $\G$ on a fibre bundle $Y\to X$, the
corresponding Hamiltonian map (\ref{b420}) defines the form
$-\wh\G\rfloor\bth_Y$ which is exactly the Hamiltonian form $H_\G$
(\ref{3.6}). Since $\Phi-\wh\G$ is a $VY$-valued basic one-form on
$\Pi\to X$, $H_\Phi-H_\G$ is a density on $\Pi$. Then the result
follows from item (ii) of Theorem \ref{91t1}.
\end{proof}

The converse also is true.

\begin{theorem}\label{hammap}
Every Hamiltonian form $H$ (\ref{b418}) on a Legendre bundle
$\Pi\to Y$ yields the associated Hamiltonian map
\mar{415}\beq
\wh H:\Pi\to J^1Y,\qquad y_\la^i\circ\wh H=\dr^i_\la\cH.
\label{415}
\eeq
\end{theorem}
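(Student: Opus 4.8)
The plan is to exhibit the Hamiltonian map $\wh H$ explicitly in coordinates and then check that the coordinate formula $y^i_\la\circ\wh H=\dr^i_\la\cH$ is independent of the choice of bundle coordinates, so that $\wh H$ is a well-defined bundle morphism $\Pi\to J^1Y$ over $Y$ of the form (\ref{2.7}). First I would write the Hamiltonian form as $H=p^\la_i dy^i\w\om_\la-\cH\om$ and, given a connection $\G$ on $Y\to X$, use the canonical decomposition $H=H_\G-\wt H_\G$ of item (iii) of Theorem \ref{91t1}, so that $\cH=p^\la_i\G^i_\la+\wt{\cH}_\G$. Differentiating, $\dr^i_\la\cH=\G^i_\la+\dr^i_\la\wt{\cH}_\G$. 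I then define $\wh H$ to be the bundle morphism over $Y$ whose components are $\F^i_\la=\dr^i_\la\cH$; equivalently, $\wh H=\wh\G+\wh H'$ where $\wh\G$ is the Hamiltonian map (\ref{b420}) associated to $\G$ and $\wh H'$ is the vertical-valued part with components $\dr^i_\la\wt{\cH}_\G$.

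The key step is the invariance check. Since $\wh\G$ is manifestly a Hamiltonian map (it comes from a genuine connection on $Y\to X$), it suffices to show that the remaining piece, with components $\dr^i_\la\wt{\cH}_\G$, transforms as a $VY$-valued basic one-form on $\Pi\to X$ — i.e. exactly like the affine coordinates $y^i_\la$ minus the connection coefficients $\G^i_\la$, which is the content of the transition law (\ref{50}). This reduces to the fact that $\wt H_\G=\wt{\cH}_\G\om$ is a density on $\Pi$ (a section of the pull-back bundle (\ref{ps12})), established in Theorem \ref{91t1}(ii)–(iii): the combination $p^\la_i dy^i\w\om_\la-\cH\om$ transforms so that, after subtracting $H_\G$, one is left with an honest $n$-form pulled back from $X$, and differentiating such a density in the fibre momentum directions $p^\la_i$ produces precisely a $VY$-valued object with the correct weight. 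Concretely I would differentiate the identity obtained by equating the two coordinate expressions for $H$ in overlapping charts, using the transition functions (\ref{2.3}) for $p^\la_i$, and read off that $\dr^i_\la\cH$ transforms by (\ref{50}). A cleaner route is intrinsic: $\wh H$ is the composition $\Pi\xrightarrow{(\id,H)}\Pi\times_Y(T^*Y\w\ldots)\to J^1Y$ dual to the vertical derivative of $H$ along the fibres of $\Pi\to Y$, mirroring the construction of the Legendre map $\wh L$ from $VL$ in Section 1; the target is $J^1Y$ precisely because $J^1Y\to Y$ is affine over $T^*X\ot_Y VY$, whose fibre dual pairs with the $p^\la_i$-directions.

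The main obstacle I anticipate is bookkeeping the determinant factors in (\ref{2.3}): $p^\la_i$ carries a Jacobian weight $\det(\dr x/\dr x')$ together with the matrices $\dr y/\dr y'$ and $\dr x'/\dr x$, so one must verify that $\dr/\dr p^\la_i$ applied to the scalar-density combination $\cH\om$ cancels these weights and yields an object transforming with the inverse pattern — which is exactly the transformation law of the affine fibre coordinates $y^i_\la$ on $J^1Y$. Establishing this cancellation is the technical heart; everything else (that $\wh H$ is smooth, over $Y$, and of the stated form (\ref{2.7})) is immediate once the components are shown to be well defined. I would present it by invoking Theorem \ref{91t1}(iii) to localize the computation to the density $\wt{\cH}_\G$, for which the transformation law is transparent, thereby avoiding grinding through the full Jacobian algebra directly.
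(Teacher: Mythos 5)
Your proposal is correct, but it takes a genuinely different route from the paper. The paper's proof is a one-line reduction to Theorem \ref{91t3}: it invokes the existence of a global Hamiltonian connection $\g_H$ on $\Pi\to X$ satisfying $\g_H\rfloor\bom_Y=dH$ with $\g^i_\la=\dr^i_\la\cH$, and then defines $\wh H=J^1\pi_{\Pi Y}\circ\g_H$, so that well-definedness of the components $\dr^i_\la\cH$ is inherited from the globality of $\g_H$ (which in turn rests on the affine-bundle section argument for the kernel of the Hamilton operator). You instead verify the transformation law directly: splitting $\cH=p^\la_i\G^i_\la+\wt\cH_\G$ via Theorem \ref{91t1}(iii), noting that $\G^i_\la$ transforms affinely by (\ref{50}) while $\dr^i_\la\wt\cH_\G$ transforms as a section of $T^*X\ot_Y VY$ because the Jacobian determinant in the density weight of $\wt\cH_\G$ exactly cancels the determinant in the momentum transition law (\ref{2.3}), leaving the inverse pattern $\frac{\dr y'^i}{\dr y^j}\frac{\dr x^\m}{\dr x'^\la}$. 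Your computation checks out, and it has two advantages: it is self-contained at the level of Section 5, whereas the paper's proof forward-references Theorem \ref{91t3} which is only proved in Section 6 (a slightly awkward logical ordering); and it makes transparent exactly which structural fact (the density character of $\wt H_\G$) is responsible for $\wh H$ being well defined. What the paper's route buys is the identity $\wh H=J^1\pi_{\Pi Y}\circ\g_H$ (\ref{415'}) itself, which is then reused to derive the key equality (\ref{N10}) relating solutions of the covariant Hamilton equation to jet prolongations; your construction would need a separate argument to recover that relation.
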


\begin{proof} In accordance with Theorem \ref{91t3} below, any
Hamiltonian form $H$ admits the associated Hamiltonian connection
$\g_H$ (\ref{cmp3}) and defines the Hamiltonian map (\ref{415}):
\mar{415'}\beq
\wh H=J^1\pi_{\Pi Y}\circ \g_H:\Pi \to J^1\Pi \to J^1Y.
\label{415'}
\eeq
\end{proof}

\begin{corollary}\label{hammap2} Every  Hamiltonian form $H$ (\ref{b418}) on
a Legendre bundle $\Pi\to Y$ yields the associated connection
(\ref{ps16}):
\mar{ps17}\beq
\G_H =\wh H\circ\wh 0=dx^\la\otimes(\dr_\la
+\dr^i_\la\cH(x^\m,y^j,0)\dr_i) \label{ps17}
\eeq
on a fibre bundle $Y\to X$.
\end{corollary}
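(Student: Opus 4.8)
The plan is to combine the two immediately preceding results, Theorem \ref{hammap} and Corollary \ref{91t2}'s underlying machinery, in the most direct possible way. By Theorem \ref{hammap}, the Hamiltonian form $H$ (\ref{b418}) gives rise to the associated Hamiltonian map $\wh H:\Pi\to J^1Y$ whose coordinate expression is $y^i_\la\circ\wh H=\dr^i_\la\cH$. On the other hand, the general construction (\ref{ps16}) shows that every Hamiltonian map $\Phi:\Pi\to J^1Y$ yields an associated connection $\G_\Phi=\Phi\circ\wh 0$ on $Y\to X$, where $\wh 0:\Pi\to Y$ wait --- $\wh 0$ is the global zero section of $\Pi\to Y$, so $\wh 0:Y\to\Pi$ and the composite $\Phi\circ\wh 0:Y\to\Pi\to J^1Y$ is a section of $J^1Y\to Y$, i.e.\ a connection. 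Applying this with $\Phi=\wh H$ is precisely what is asserted, so the whole content is that the substitution is legitimate and the formula comes out as stated.

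First I would recall that $\wh 0$ is characterized in coordinates by $p^\la_i\circ\wh 0=0$, so that for any point $y\in Y$ with coordinates $(x^\m,y^j)$ we have $\wh 0(y)=(x^\m,y^j,0)$. Then I would simply compose: the connection components of $\G_{\wh H}$ are read off from (\ref{2.7}) applied to $\Phi=\wh H$, namely $\Phi^i_\la=\dr^i_\la\cH(x^\m,y^j,p^\m_j)$, and evaluating along $\wh 0$ sets $p^\m_j=0$, giving $\G^i_\la=\dr^i_\la\cH(x^\m,y^j,0)$. This is exactly the expression (\ref{ps17}). The fact that the result is genuinely a connection on $Y\to X$ --- rather than merely a bundle morphism $Y\to J^1Y$ over $X$ --- follows because $\wh H$ is a morphism over $Y$ (by Theorem \ref{hammap}) and $\wh 0$ is a section of $\Pi\to Y$, so the composite is a section of $\pi^1_0:J^1Y\to Y$.

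There is essentially no obstacle here: the statement is a corollary precisely because it is the specialization of the already-established general correspondence (\ref{ps16}) to the Hamiltonian map furnished by Theorem \ref{hammap}. The only point requiring a word of care is that one should cite Theorem \ref{hammap} for the existence and coordinate form of $\wh H$ (which itself rests on Theorem \ref{91t3} on the existence of the Hamiltonian connection $\g_H$), and then invoke the already-recorded construction $\G_\Phi=\Phi\circ\wh 0$. A one-line proof therefore suffices: \emph{This follows from (\ref{ps16}) applied to the Hamiltonian map $\wh H$ of Theorem \ref{hammap}, together with $p^\la_i\circ\wh 0=0$.}
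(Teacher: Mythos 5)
Your proposal is correct and is exactly the argument the paper intends: the corollary is the specialization of the general construction (\ref{ps16}) to the Hamiltonian map $\wh H$ furnished by Theorem \ref{hammap}, with the coordinate form obtained by setting $p^\m_j=0$ along the zero section. The self-correction about the direction of $\wh 0$ (it is a section $Y\to\Pi$, so $\wh H\circ\wh 0$ is a section of $J^1Y\to Y$, hence a connection) lands on the right reading, and nothing further is needed.
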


In particular, we have $\G_{H_\G}=\G$, where $H_\G$ is the
Hamiltonian form (\ref{3.6}) associated to a connection $\G$ on
$Y\to X$.

\begin{corollary}
Every Hamiltonian form $H$ (\ref{b418}) admits the canonical
splitting
\beq
H=H_{\G_H}-\wt H.\label{3.8}
\eeq
\end{corollary}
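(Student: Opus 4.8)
The plan is to combine Theorem \ref{hammap} with item (iii) of Theorem \ref{91t1}. First I would recall that, by Corollary \ref{hammap2}, any Hamiltonian form $H$ (\ref{b418}) determines the associated connection $\G_H$ (\ref{ps17}) on $Y\to X$, whose components are $\G^i_\la=\dr^i_\la\cH(x^\m,y^j,0)$. Then item (iii) of Theorem \ref{91t1}, applied precisely to the choice $\G=\G_H$, already furnishes a decomposition $H=H_{\G_H}-\wt H_{\G_H}$ with $H_{\G_H}=p^\la_i dy^i\w\om_\la-p^\la_i\dr^i_\la\cH(x^\m,y^j,0)\om$ and $\wt H_{\G_H}$ a density on $\Pi\to X$ that is a section of the pull-back bundle (\ref{ps12}). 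So at the level of existence the splitting is immediate; the word "canonical" is what has to be justified, namely that this $\G_H$ is the distinguished connection singled out by $H$ itself rather than an arbitrary auxiliary choice.

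The key step, therefore, is to identify $H_{\G_H}$ intrinsically. I would argue as follows: by Theorem \ref{91t2}, the connection $\G_H$ gives rise — via the Hamiltonian map $\wh\G_H=\G_H\circ\pi_{\Pi Y}$ of (\ref{b420}) — to the Hamiltonian form $H_{\G_H}=H_{\wh\G_H}=-\wh\G_H\rfloor\bth_Y$, which coincides with $H_{\G_H}$ (\ref{3.6}). On the other hand, $\G_H=\wh H\circ\wh 0$ by Corollary \ref{hammap2}, so $H_{\G_H}$ is built entirely from the fibre-linear part of the Hamiltonian map $\wh H$ (\ref{415}) associated to $H$ by Theorem \ref{hammap}, i.e. from the values $\dr^i_\la\cH$ at $p=0$. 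Thus both summands $H_{\G_H}$ and $\wt H=H_{\G_H}-H$ are determined by $H$ alone, which is the content of the word "canonical." The remaining verification — that $\wt H$ really is a density, i.e. transforms under (\ref{2.3}) as a section of $\Pi\op\times_X\op\w^nT^*X\to\Pi$ — follows because $H$ and $H_{\G_H}$ are both Hamiltonian forms and Hamiltonian forms form an affine space modelled over exactly that space of densities, by item (ii) of Theorem \ref{91t1}.

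The main obstacle, such as it is, is bookkeeping: one must check that $H_{\G_H}$ as produced by two different routes — "(iii) of Theorem \ref{91t1} with $\G=\G_H$" versus "Theorem \ref{91t2} applied to $\wh\G_H$" — genuinely agree, and that no circularity arises from invoking Theorem \ref{hammap} (whose proof cites Theorem \ref{91t3} below). Since Theorem \ref{hammap} and Corollary \ref{hammap2} are already established and Corollary \ref{hammap2} explicitly exhibits $\G_H$, there is no real circularity; the proof is a two-line deduction. Concretely I would write: by Corollary \ref{hammap2} the form $H$ determines the connection $\G_H$ (\ref{ps17}) on $Y\to X$; apply item (iii) of Theorem \ref{91t1} with this $\G$; the resulting $H_{\G}=H_{\G_H}$ and $\wt H_\G$ depend only on $H$, giving the asserted canonical splitting $H=H_{\G_H}-\wt H$.
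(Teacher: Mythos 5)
Your proposal is correct and follows essentially the same route the paper intends: the corollary is an immediate consequence of Corollary \ref{hammap2} (which extracts the connection $\G_H$ from $H$ itself via $\wh H\circ\wh 0$) combined with the decomposition (\ref{4.7}) of item (iii) of Theorem \ref{91t1} specialized to $\G=\G_H$. Your additional remarks on why the splitting is canonical and on the absence of circularity through Theorem \ref{hammap} are sound but not needed beyond this two-line deduction.
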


\begin{remark} \mar{ps20} \label{ps20}
A Hamiltonian form is the main ingredient in PS Hamiltonian
formalism because just it defines the covariant Hamilton equation
(Section 6). Since Hamiltonian forms are the pull-back of the
canonical multisymplectic form (\ref{ps13}) on the homogeneous
Legendre bundle $Z_Y$ one can treat the latter as a homogeneous
momentum phase space of PS formalism by analogy with the cotangent
bundle in time-dependent mechanics (Section 7).
\end{remark}

\section{Covariant Hamilton equations}

Let $\g$ (\ref{cmp33}) be a Hamiltonian connection on a Legendre
bundle $\Pi\to X$.

Given a connection $\G$ on a fibre bundle $Y\to X$, the local form
$H_U$ (\ref{mos06}) in the expression (\ref{cmp4}) can be written
as
\be
H_U=H_\G - \wt\cH_\G\om,
\ee
where $H_\G$ is the Hamiltonian form (\ref{ps14}) and
$\wt\cH_\G\om$ is a local density on $\Pi$. In accordance with
item (ii) of Theorem  \ref{91t1}, it follows that $H_U$ is a local
Hamiltonian form. Thus, we have proved the following.

\begin{theorem}\label{lochamform}
For every Hamiltonian connection $\g$ (\ref{cmp33}) on a Legendre
bundle $\Pi\to X$, there exists a local Hamiltonian form $H_U$ in
a neighborhood $U$ of each point $q\in\Pi$ such that
\be
\g\rfloor\bom_Y =dH_U.
\ee
\end{theorem}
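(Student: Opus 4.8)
The plan is to unwind the definitions already assembled in Section 5 and read off the local Hamiltonian form directly. Let $\g$ (\ref{cmp33}) be a Hamiltonian connection on $\Pi\to X$ and fix a point $q\in\Pi$. By definition of a Hamiltonian connection, the $TX$-valued form $\g\rfloor\bom_Y$ is closed, so on a contractible neighborhood $U$ of $q$ it is exact; this is precisely the content already recorded in (\ref{cmp4})--(\ref{mos06}), where one writes
\[
\g\rfloor\bom_Y = dH_U, \qquad H_U = p^\la_i\,dy^i\w\om_\la - \cH_U\om
\]
for a suitable local function $\cH_U$ on $U$, the reduction of the second term of (\ref{cmp4}) to the form $d\cH_U\w\om$ being justified there by the conditions (\ref{b422}) together with the relative Poincar\'e lemma. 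So the analytic work is done; what remains is to observe that $H_U$ is genuinely a (local) Hamiltonian form in the sense of Section 5.

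First I would choose an auxiliary connection $\G$ on the fibre bundle $Y\to X$ over (a possibly shrunk) $U$, which exists by the paracompactness assumptions, and subtract the associated Hamiltonian form $H_\G$ of (\ref{ps14}). The difference
\[
H_U - H_\G = (p^\la_i\G^i_\la - \cH_U)\,\om =: -\wt\cH_\G\,\om
\]
is a local density on $\Pi\to X$, i.e.\ a local section of the pull-back bundle (\ref{ps12}). By item (ii) of Theorem \ref{91t1}, the Hamiltonian forms constitute an affine space modelled over exactly that linear space of densities, so $H_U = H_\G - \wt\cH_\G\om$ is a local Hamiltonian form. Combining this with the exactness relation already established in (\ref{cmp4}) yields $\g\rfloor\bom_Y = dH_U$ on $U$, which is the assertion.

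The only point requiring care — and the one I would flag as the main obstacle — is the passage from the exact one-contact term in (\ref{cmp4}) to the normal form $d\cH_U\w\om$: a priori the closed density $(\g^i_\la dp^\la_i - \g^\la_{\la i}dy^i)\w\om$ is exact as an $(n+1)$-form, but one needs it to be $d$ of something of the special shape $\cH_U\om$ rather than of an arbitrary $n$-form. This is exactly what the algebraic constraints (\ref{b422}) on the components of a Hamiltonian connection buy us, via the relative Poincar\'e lemma cited as Remark 4.4.2 of \cite{book}; I would simply invoke that lemma as the excerpt already does, rather than reprove it. Everything else is bookkeeping with the affine structure of the space of Hamiltonian forms.
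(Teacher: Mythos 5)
Your proposal is correct and coincides with the paper's own argument: the paper likewise takes the local exactness statement (\ref{cmp4})--(\ref{mos06}) (justified by the conditions (\ref{b422}) and the relative Poincar\'e lemma) as already established, writes $H_U=H_\G-\wt\cH_\G\om$ for a connection $\G$ on $Y\to X$, and invokes item (ii) of Theorem \ref{91t1} to conclude that $H_U$ is a local Hamiltonian form. The only cosmetic difference is that you choose $\G$ locally over $U$, whereas the paper uses a connection on all of $Y\to X$; both are available and the argument is unaffected.
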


The converse is the following.

\begin{theorem} \label{91t3} \mar{91t3}
Every Hamiltonian form $H$ (\ref{b418}) admits a Hamiltonian
connection $\g_H$  which obeys the condition
\mar{cmp3}\beq
\g_H\rfloor\bom_Y= dH, \qquad \g^i_\la =\dr^i_\la\cH, \qquad
\g^\la_{\la i}=-\dr_i\cH. \label{cmp3}
\eeq
\end{theorem}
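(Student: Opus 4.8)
The claim is that every Hamiltonian form $H = p^\la_i dy^i\w\om_\la - \cH\om$ on $\Pi$ admits a Hamiltonian connection $\g_H$ satisfying $\g_H\rfloor\bom_Y = dH$ together with the explicit component formulas $\g^i_\la = \dr^i_\la\cH$ and $\g^\la_{\la i} = -\dr_i\cH$. The plan is to work locally and reduce everything to a direct computation with the $TX$-valued form $\bom_Y = dp^\la_i\w dy^i\w\om\ot\dr_\la$ (\ref{406}), then check global consistency.

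First I would write a general connection $\g$ on $\Pi\to X$ in the form (\ref{cmp33}), namely $\g = dx^\la\ot(\dr_\la + \g^i_\la\dr_i + \g^\m_{\la i}\dr^i_\m)$, and contract it into $\bom_Y$. Since $\bom_Y$ is $TX$-valued, the relevant object is $\g\rfloor\bom_Y = (\dr_\la + \g^i_\la\dr_i + \g^\m_{\la i}\dr^i_\m)\rfloor(dp^\la_i\w dy^i\w\om)$, an exterior $(n+1)$-form on $J^1\Pi$, as in the computation leading to (\ref{cmp4}). Carrying out the contraction termwise gives $\g\rfloor\bom_Y = dp^\la_i\w dy^i\w\om_\la - (\g^i_\la\, dp^\la_i - \g^\la_{\la i}\, dy^i)\w\om$, exactly the expression on the left of (\ref{cmp4}). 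On the other hand, taking the exterior differential of the Hamiltonian form (\ref{b418}) directly yields $dH = dp^\la_i\w dy^i\w\om_\la - d\cH\w\om = dp^\la_i\w dy^i\w\om_\la - (\dr_i\cH\, dy^i + \dr^i_\la\cH\, dp^\la_i)\w\om$ (the $x^\la$-derivative term $\dr_\la\cH\, dx^\la$ drops out because $dx^\la\w\om = 0$). Matching the two expressions coefficient by coefficient forces $\g^i_\la = \dr^i_\la\cH$ and $\g^\la_{\la i} = -\dr_i\cH$, which are precisely the asserted component relations; these pin down $\g^i_\la$ uniquely and constrain the trace $\g^\la_{\la i}$ of the momentum components, while the traceless part of $\g^\m_{\la i}$ remains free (so the Hamiltonian connection is not unique — one may fix it, e.g., via the construction (\ref{404}) associated to the connection $\G_H$ of Corollary \ref{hammap2}, or simply invoke that local Hamiltonian connections exist and patch).

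The next step is to verify that a connection with these components is genuinely a Hamiltonian connection, i.e. that $\g\rfloor\bom_Y$ is closed. But this is immediate from the equality $\g\rfloor\bom_Y = dH$ just established: an exact form is closed. Alternatively one checks that the components $\g^i_\la = \dr^i_\la\cH$ and a suitable choice of $\g^\m_{\la i}$ (for instance $\g^\m_{\la i}$ with trace $-\dr_i\cH$ supplied by (\ref{404})) satisfy the three structure conditions (\ref{b422}): the first, $\dr^i_\la\g^j_\m - \dr^j_\m\g^i_\la = 0$, holds because $\dr^i_\la\dr^j_\m\cH$ is symmetric in $(i,\la)\leftrightarrow(j,\m)$; the second and third are arranged by the choice of the remaining components of $\g^\m_{\la i}$, exactly as in Example \ref{e43.5}.

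Finally, for global existence I would argue as follows: Theorem \ref{lochamform} guarantees a local Hamiltonian form $H_U$ with $\g\rfloor\bom_Y = dH_U$ on a neighborhood of each point; the content here is the converse direction, producing from the globally defined $H$ a connection. The cleanest route is the explicit one — take $\g_H$ to be the connection $\wt{\G}_H$ built by formula (\ref{404}) from $\G_H$ of Corollary \ref{hammap2}, then add to its momentum components the $VY^*$-valued correction determined by $-\wt\cH$ in the canonical splitting (\ref{3.8}); since this correction is linear and the ambiguity lives in an affine space of connections over a fixed base, the sum is a globally well-defined connection on $\Pi\to X$, and by the local computation above it satisfies (\ref{cmp3}). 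The main obstacle is bookkeeping: one must check that the momentum-component correction coming from $d\wt\cH\w\om$ transforms correctly under (\ref{2.3}) so that the patched object is a bona fide connection rather than merely a local one — this is where the affine-space structure of Hamiltonian forms from Theorem \ref{91t1}(ii) and of Hamiltonian connections does the real work.
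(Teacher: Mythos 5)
Your local computation is correct and arrives at the component formulas (\ref{cmp3}) by a more elementary route than the paper: you contract a general connection into $\bom_Y$, compute $dH$ directly, and match coefficients, observing that the equation $\g\rfloor\bom_Y=dH$ fixes $\g^i_\la$ completely but constrains only the trace $\g^\la_{\la i}$ of the momentum components. The paper instead notes that $H$ is the Poincar\'e--Cartan form of the affine Lagrangian $L_H$ (\ref{Q3}) on $J^1\Pi$, so that the condition $\g_H\rfloor\bom_Y=dH$ is precisely membership in the kernel of the Euler--Lagrange (Hamilton) operator $\cE_H$ (\ref{3.9}); since that operator is an affine morphism over $\Pi$ of constant rank, its kernel is an affine closed imbedded subbundle of $J^1\Pi\to\Pi$ and therefore admits a global section, which is the desired $\g_H$. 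The two local analyses carry the same content --- the freedom you identify in the traceless part of $\g^\m_{\la i}$ is exactly what makes that kernel an affine subbundle with nonempty fibres.

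Where your argument is genuinely thinner is the global existence step, which is the only nontrivial point of the theorem. The explicit construction you sketch (the lift (\ref{404}) of $\G_H$ plus a momentum correction produced by $\wt\cH$ in the splitting (\ref{3.8})) requires verifying that the correction's components transform correctly under (\ref{2.3}); you name this as the main obstacle but do not carry it out, and the naive choice $\si^\m_{\la i}=-\frac1n\dl^\m_\la\dr_i\wt\cH$ is not manifestly tensorial. Likewise, ``patching'' local Hamiltonian connections must be made precise. Both issues are resolved at once by the structural observation the paper uses: because the fibres of $\Ker\cE_H\subset J^1\Pi$ over $\Pi$ are affine, hence convex, local sections glue by a partition of unity, so a global section --- a global Hamiltonian connection obeying (\ref{cmp3}) --- exists without any coordinate bookkeeping. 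With that one remark added, your proof closes; as written, the final paragraph asserts rather than establishes the global well-definedness.
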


\begin{proof}
It is readily observed that the Hamiltonian form $H$ (\ref{b418})
is the Poincar\'e -- Cartan form (\ref{303}) of an affine first
order Lagrangian
\mar{Q3}\beq
L_H=h_0(H) = (p^\la_iy^i_\la - \cH)\om \label{Q3}
\eeq
on the jet manifold $J^1\Pi$. The Euler -- Lagrange operator
(\ref{305}) associated to this Lagrangian reads
\mar{3.9}\ben
&& \dl L_H=\cE_H :J^1\Pi\to T^*\Pi\w(\op\w^n T^*X),\nonumber \\
&& \cE_H=[(y^i_\la-\dr^i_\la\cH) dp^\la_i
-(p^\la_{\la i}+\dr_i\cH) dy^i]\w\om. \label{3.9}
\een
It is called the Hamilton operator for $H$. A glance at the
expression (\ref{3.9}) shows that this operator is an affine
morphism over $\Pi$ of constant rank. It follows that its kernel
\mar{b4100a,b}\ben
&& y^i_\la=\dr^i_\la\cH, \label{b4100a}\\
&& p^\la_{\la i}=-\dr_i\cH \label{b4100b}
\een
is an affine closed imbedded subbundle of the jet bundle
$J^1\Pi\to\Pi$. Therefore, it admits a global section $\g_H$ which
thus is a desired Hamiltonian connection obeying the relation
(\ref{cmp3}).
\end{proof}

\begin{remark} \label{91r10} \mar{91r10}
In fact, the Lagrangian (\ref{Q3}) is the pull-back onto $J^1\Pi$
of a form $L_H$ on the product $\Pi\op\times_Y J^1Y$.
\end{remark}

It should be emphasized that, if $\di X>1$, there is a set of
Hamiltonian connections associated to the same Hamiltonian form
$H$. They differ from each other in soldering forms $\si$ on
$\Pi\to X$ which fulfill the equation $\si\rfloor\bom_Y=0$.

\begin{example}
Any connection $\wt\G$ (\ref{404}) for different connections $K$
(\ref{08}) is a Hamiltonian connection for the Hamiltonian form
$H_\G$ (\ref{3.6}).
\end{example}

Being a closed imbedded subbundle of the jet bundle $J^1\Pi\to X$,
the kernel (\ref{b4100a})  --  (\ref{b4100b}) of the Euler --
Lagrange operator $\cE_H$ (\ref{3.9}) defines the Euler --
Lagrange equation on $\Pi$. It is a first order dynamic equation
called the covariant Hamilton equation.

Every integral section $r$ (i.e.,  $J^1r=\g_H\circ r$) of a
Hamiltonian connection $\g_H$ associated to a Hamiltonian form $H$
is obviously a solution of the covariant Hamilton equation
(\ref{b4100a})  --  (\ref{b4100b}). Conversely, if $r:X\to\Pi$ is
a global solution of the covariant Hamilton equation
(\ref{b4100a})  --  (\ref{b4100b}), there exists an extension of
the local section
\be
J^1r: r(X) \to \Ker\cE_H
\ee
to a Hamiltonian connection $\g_H$ which has $r$ as an integral
section. Substituting $J^1r$ in (\ref{415'}), we obtain the
equality
\mar{N10}\beq
J^1(\pi_{\Pi Y}\circ r)= \wh H\circ r \label{N10}
\eeq
for every solution $r$ of the covariant Hamilton equation
(\ref{b4100a}) -- (\ref{b4100b}) which thus is equivalent to this
equation.

\begin{remark}
Similarly to the Cartan equation (\ref{C28}), the covariant
Hamilton equation (\ref{b4100a})  --  (\ref{b4100b}) is equivalent
to the condition
\mar{N7}\beq
r^*(u\rfloor dH)= 0 \label{N7}
\eeq
for any vertical vector field $u$ on $\Pi\to X$.
\end{remark}

\begin{remark}
The covariant Hamilton equation (\ref{b4100a})  --  (\ref{b4100b})
has a standard form
\be
S^\la_{ab}(x, \phi)\dr_\la\phi^b = f_a(x,\phi)
\ee
for the Cauchy problem or, to be more precise, for the general
Cauchy problem since the coefficients $S^\la_{ab}$ depend on the
variable functions $\phi$ in general \cite{book,book95}. Here
$\phi^b$ is a compact notation for variables $r^i$ and $r_i^\la$.
However, the characteristic form
\be
\det(S^\la_{ab}c_\la), \qquad c_\la\in\mathbb R,
\ee
of this system fails to be different from zero for any $c_\la$.
One can overcome this difficulty as follows. Let us single out a
local coordinate $x^1$ and replace the equation (\ref{b4100a})
with the equation
\ben
&&\dr_1 r^i =\dr^i_1\cH, \qquad
d_1\dr_\la^i\cH =d_\la\dr_1^i\cH, \qquad \la\neq 1, \label{4.3}\\
 && d_\m= \dr_\m +\dr_\m r^i\dr_i +\dr_\m
r^\la_i\dr^i_\la.\nonumber
\een
The systems (\ref{4.3}) and (\ref{b4100b}) have the standard form
for the Cauchy problem with the initial conditions
\beq
r^i(x')=\phi^i(x'),\qquad
 r_i^\m(x')=\phi_i^\m (x'), \qquad
\dr_\la r^i =\dr^i_\la\cH, \qquad \la\neq 1,\label{4.4}
\eeq
on a local hypersurface $S$ of $X$  transversal to coordinate
lines $x^1$. If $r^i$ and $r^i_\la$ are solutions of the Cauchy
problem (of class $C^2$) for the equations (\ref{4.3}) and
(\ref{b4100b}) with the initial conditions (\ref{4.4}), they
satisfy the equation (\ref{b4100a}). Thus, in order to formulate
the Cauchy problem for a covariant Hamilton equation in PS
Hamiltonian formalism, one should single a one of the coordinates
out and consider the system of equations (\ref{4.3}) and
(\ref{b4100b}).
\end{remark}

\section{Hamiltonian time-dependent mechanics}

As was mentioned above, if $X=R=\mathbb R$, we are in a case of
time-dependent Hamiltonian mechanics which admits time
reparametrization $t\to t'(t)$ \cite{book10}.

Let $Q\to R$ be its configuration bundle coordinated by $(t,q^a)$.
The corresponding Legendre bundle (\ref{00}) is
\mar{xx4}\beq
\Pi=V^*Q\op\ot_Q T^*R\op\ot_Q TR. \label{xx4}
\eeq
It is endowed with holonomic coordinates $(t,q^a,p_a)$ possessing
transition functions
\be
p'_a=\frac{\dr q^b}{\dr q'a}p_b.
\ee
This transformation law is the same as that of fibre coordinates
on the vertical cotangent bundle $V^*Q\to Q$ of $Q\to R$.
Therefore, we have the canonical isomorphism $\Pi=V^*Q$
(\ref{000}) of the Legendre bundle $\Pi$ (\ref{xx4}) of
time-reparametrized mechanics to a momentum phase space $V^*Q$ of
time-dependent mechanics over the time axis $R=\mathbb R$ provided
with the canonical Cartesian coordinate $t$ with transition
functions $t'=t+$const. \cite{book10,sard98,sard13}. Accordingly,
the homogeneous Legendre bundle $Z_Q$ (\ref{N41}) of
time-reparametrized mechanics is isomorphic to the homogeneous
momentum phase space $T^*Q$ of time-dependent mechanics over the
time axis $\mathbb R$. It is endowed  with the holonomic
coordinates $(t,q^a,p_0,p_a)$, possessing transition functions
\mar{2.3s}\beq
{p'}_a = \frac{\dr q^b}{\dr{q'}^a}p_b, \qquad p'_0=
\left(p_0+\frac{\dr q^b}{\dr t}p_b\right). \label{2.3s}
\eeq

\begin{remark} \mar{ps21} \label{ps21}
Note that, relative to the Cartesian coordinate $t$, the time axis
$\mathbb R$ is provided with the standard vector field $\dr_t$ and
the standard one-form $dt$ which also is the volume element on
$\mathbb R$. As a consequence, there is the one-to-one
correspondence between the vector fields $f\dr_t$, the densities
$fdt$ and the real functions $f$ on $\mathbb R$. Roughly speaking,
one can neglect the contribution of $T\mathbb R$ and $T^*\mathbb
R$ to some expressions. In particular, the canonical imbedding
(\ref{18}) of $J^1Q$ takes a form
\mar{z260}\ben
&& \la_{(1)}: J^1Q\ni (t,q^a,q^a_t)\to (t,q^a,\dot t=1, \dot q^a=q^a_t) \in TQ, \label{z260}\\
&& \la_{(1)}=d_t=\dr_t +q^a_t\dr_a. \nonumber
\een
In view of the morphism $\la_{(1)}$ (\ref{z260}), any connection
\mar{z270}\beq
\G=dt\ot (\dr_t +\G^a\dr_a) \label{z270}
\eeq
on a fibre bundle $Q\to\mathbb R$  can be identified with a
nowhere vanishing horizontal vector field
\mar{a1.10}\beq
\G = \dr_t + \G^a \dr_a \label{a1.10}
\eeq
on $Q$ which is the horizontal lift $\G\dr_t$ of the standard
vector field $\dr_t$ on $\mathbb R$ by means of the connection
(\ref{z270}). Conversely, any vector field $\G$ on $Q$ such that
$dt\rfloor\G =1$ defines a connection on $Q\to\mathbb R$.
Therefore, the connections (\ref{z270}) conventionally are
identified with the vector fields (\ref{a1.10}). The integral
curves of the vector field (\ref{a1.10}) coincide with the
integral sections for the connection (\ref{z270}).
\end{remark}

A homogeneous momentum phase space $T^*Q$ admits the canonical
Liouville form
\mar{N43s}\beq
\Xi_T =p_odt +p_adq^a \label{N43s}
\eeq
(cf. (\ref{N43})) and the canonical symplectic form
\mar{m91'}\beq
\Om_T=d\Xi=dp_0\w dt +dp_a\w dq^a \label{m91'}
\eeq
(cf. (\ref{ps13})) together with the corresponding Poisson bracket
\mar{m116}\beq
\{f,g\}_T =\dr^0f\dr_tg - \dr^0g\dr_tf +\dr^af\dr_ag-\dr^ag\dr_af,
\quad f,g\in C^\infty(T^*Q). \label{m116}
\eeq
This bracket yields the coinduced Poisson bracket (\ref{m72}) on a
momentum phase space $V^*Q$ as follows.

There is the canonical one-dimensional affine bundle
\mar{z11'}\beq
\zeta:T^*Q\to V^*Q \label{z11'}
\eeq
(cf. (\ref{b418'})). A glance at the transformation law
(\ref{2.3s}) shows that it is a trivial affine bundle. Indeed,
given a global section $h$ of $\zeta$, one can equip $T^*Q$ with a
global fibre coordinate
\be
I_0=p_0-h, \qquad I_0\circ h=0,
\ee
possessing the identity transition functions. With respect to the
coordinates $(t,q^a,I_0,p_a)$ the fibration (\ref{z11'}) reads
\mar{z11}\beq
\zeta: \mathbb R\times V^*Q \ni (t,q^a,I_0,p_a)\to (t,q^a,p_a)\in
V^*Q. \label{z11}
\eeq

Let us consider the subring of $C^\infty(T^*Q)$ which comprises
the pull-back $\zeta^*f$ onto $T^*Q$ of functions $f$ on the
vertical cotangent bundle $V^*Q$ by the fibration $\zeta$
(\ref{z11'}).  This subring is closed under the Poisson bracket
(\ref{m116}). Then by virtue of the well-known theorem
\cite{book10,vais}, there exists the degenerate coinduced Poisson
structure
\mar{m72}\beq
\{f,g\}_V = \dr^af\dr_ag-\dr^ag\dr_af, \qquad f,g\in
C^\infty(V^*Q), \label{m72}
\eeq
on a the vertical cotangent bundle $V^*Q$ such that
\mar{m72'}\beq
\zeta^*\{f,g\}_V=\{\zeta^*f,\zeta^*g\}_T.\label{m72'}
\eeq
The holonomic coordinates $(t,q^a,p_a)$ on $V^*Q$ are canonical
for the Poisson structure (\ref{m72}).

With respect to the vertical Poisson bracket (\ref{m72}), the
Hamiltonian vector fields of functions on $V^*Q$ read
\mar{m73}\beq
\vt_f = \dr^if\dr_i- \dr_if\dr^i, \qquad
[\vt_f,\vt_{f'}]=\vt_{\{f,f'\}_V}, \qquad f,f'\in C^\infty(V^*Q).
\label{m73}
\eeq
They are vertical vector fields on $V^*Q\to \mathbb R$.
Accordingly, the characteristic distribution of the Poisson
structure (\ref{m72}) is the vertical tangent bundle $VV^*Q\subset
TV^*Q$ of a fibre bundle $V^*Q\to \mathbb R$. The corresponding
symplectic foliation on the momentum phase space $V^*Q$ coincides
with a fibration $V^*Q\to \mathbb R$.

One can think of the vertical Poisson bracket (\ref{m72}) as being
the particular PS bracket (\ref{xx3}). Indeed, in view of Remark
\ref{ps21}, the algebra of multivector densities (\ref{ps11})
comes to the ring $C^\infty(V^*Q)$, whereas the tangent-valued
Liouville form (\ref{2.4}) and the PS form (\ref{406}) are
associated to the pull-back forms
\mar{z401}\ben
&& \bth=h^*(\Xi_T\w dt)=p_idq^i\w dt, \nonumber\\
&& \bom=h^*(\Om_T\w dt)=dp_i\w dq^i\w dt \label{z401}
\een
on  $V^*Q$, where $h$ is some section of the trivial bundle
(\ref{z11'}). They are independent of the choice of $h$.  With
$\bom$ (\ref{z401}), the Hamiltonian vector field $\vt_f$
(\ref{m73}) for a function $f$ on $V^*Q$ is given by the relation
\be
\vt_f\rfloor\bom = -df\w dt,
\ee
while the vertical Poisson bracket (\ref{m72}) is written as
\be
\{f,g\}_Vdt=\vt_g\rfloor\vt_f\rfloor\bom
\ee
similarly to the PS bracket (\ref{xx3}).

In contrast with autonomous Hamiltonian mechanics, the Poisson
structure (\ref{m72}) fails to provide any dynamic equation on a
fibre bundle $V^*Q\to\mathbb R$ because Hamiltonian vector fields
(\ref{m73}) of functions on $V^*Q$ are vertical vector fields, but
not connections on $V^*Q\to\mathbb R$. Hamiltonian dynamics on
$V^*Q$ is described as a particular PS Hamiltonian dynamics in
Section 5 \cite{book10,sard13}.

A Hamiltonian on a momentum phase space $V^*Q\to\mathbb R$ of
non-relativistic mechanics is defined as a global section
\mar{ws513}\beq
h:V^*Q\to T^*Q, \qquad p_0\circ h=\cH(t,q^j,p_j), \label{ws513}
\eeq
of the affine bundle $\zeta$ (\ref{z11'}). Given the Liouville
form $\Xi_T$ (\ref{N43s}) on $T^*Q$, this section yields the
pull-back Hamiltonian form
\mar{b4210s}\beq
H=(-h)^*\Xi= p_k dq^k -\cH dt  \label{b4210s}
\eeq
on $V^*Q$. This is the well-known invariant of Poincar\'e --
Cartan \cite{arn}.

Given a Hamiltonian form $H$ (\ref{b4210s}), there exists a unique
horizontal vector field (\ref{a1.10}):
\be
\g_H=\dr_t -\g^i\dr_i -\g_i\dr^i,
\ee
on $V^*Q$ (i.e., a connection on $V^*Q\to \mathbb R$) such that
\mar{w255}\beq
\g_H\rfloor dH=0. \label{w255}
\eeq
This vector field, called the Hamilton vector field, reads
\mar{z3}\beq
\g_H=\dr_t + \dr^k\cH\dr_k- \dr_k\cH\dr^k. \label{z3}
\eeq
In a different way, a Hamilton vector field $\g_H$ is defined by
the relation
\be
\g_H\rfloor\bom=dH
\ee
(cf. (\ref{cmp3})).  This vector field yields the first order
dynamic Hamilton equation
\mar{z20}\beq
q^k_t=\dr^k\cH, \qquad  p_{tk}=-\dr_k\cH \label{z20}
\eeq
on $V^*Q\to\mathbb R$, where $(t,q^k,p_k,q^k_t,\dot p_{tk})$ are
the adapted coordinates on the first order jet manifold $J^1V^*Q$
of $V^*Q\to\mathbb R$. A solution of the Hamilton equation
(\ref{z20}) is an integral section for the connection $\g_H$
(\ref{z3}).

\section{Iso-PS structure}

The canonical PS structure defined by the tangent-valued Liouville
form $\bth_Y $ (\ref{2.4}) and the PS form $\bom_Y$ (\ref{406}) is
by no means the unique PS structure on the Legendre bundle $\Pi$
(\ref{00}). One can consider its following deformation
\cite{book95}.

Let
\be
\psi =\psi^\la_\mu(x)dx^\mu\otimes\dr_\la
\ee
be a tangent-valued one-form on $X$ corresponding to some
isomorphism of the tangent bundle $TX$ of $X$.  Given the the
tangent-valued Liouville form $\bth_Y $ (\ref{2.4}) and the PS
form $\bom_Y$ (\ref{406}), let us consider their deformations
\ben
&& \bth_\psi=\bth_Y\rfloor\psi= \psi^\la_\mu (x)p^\mu_idy^i\w
\om\ot\dr_\la, \label{M9} \\
&&\bom_\psi=\bom_Y\rfloor\psi= \psi^\la_\mu(x)dp^\mu_i\w
dy^i\w\om\otimes \dr_\la, \label{M1}
\een
respectively. In comparison with the canonical forms $\bth_Y$ and
$\bom_Y$, the forms $\bth_\psi$ (\ref{M9}) and $\bom_\psi$
(\ref{M1}) provide another PS structure on a Legendre bundle
$\Pi$. We agree to call call it the iso-PS structure. In
particular, the relation
\beq
\bom_\psi\rfloor\f =d(\bth_\psi\rfloor\f) \label{ps29}
\eeq
(cf. (\ref{ps30})) holds for an arbitrary exterior one-form $\f$
on $X$.

Building on the forms $\bth_\psi$ and $\bom_\psi$, one can develop
iso-PS Hamiltonian formalism by analogy with the canonical PS one
in Sections 3 -- 5.

Let us say that that the connection $\g$ (\ref{cmp33}) on $\Pi\to
X$ is an iso-Hamiltonian connection if an exterior form
$\g\rfloor\bom_\psi$ is closed.

An exterior $n$-form $H_\psi$ on $\Pi$ is called the
iso-Hamiltonian form if there exists an iso-Hamiltonian connection
$\g$ such that
\be
\g\rfloor\bom_\psi =dH_\phi.
\ee

The following assertion shows that sets of iso-Hamiltonian
connections and iso-Hamiltonian forms on $\Pi$ are not empty.

\begin{theorem} \label{3.32} Let $\G$ be the connection (\ref{ps9}) on $Y\to X$ and
$\wt\G$ (\ref{404}) its lift onto $\Pi\to X$. We have an
iso-Hamiltonian form
\be
H_{\G\psi} = \psi^\la_\mu(p^\mu_idy^i\w\om_\la - \G^i_\la
p^\la_i\om)
\ee
and the associated iso-Hamiltonian connection
\be
\g=\wt\G +\frac 1n (\psi^{-1})^\mu_\la (-\dr_\al\psi^\al_\nu -
K^\al{}_{\beta\al}\psi^\beta_\nu +K^\al_{\nu\beta}\psi^\beta_\al)
p^\nu_idx^\la\otimes\dr^i_\mu.
\ee
\end{theorem}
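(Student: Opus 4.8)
The plan is to verify the theorem by direct computation using the deformation formulas and the known properties of the canonical PS structure established in Sections 3--5. First I would check that $H_{\G\psi}$ is indeed an iso-Hamiltonian form, i.e. that there exists a connection $\g$ with $\g\rfloor\bom_\psi = dH_{\G\psi}$, and that this $\g$ is iso-Hamiltonian (so $\g\rfloor\bom_\psi$ is closed). The natural starting point is the relation $\wt\G\rfloor\bom_Y = dH_\G$ from Example \ref{e43.5}, where $H_\G = \G^*\Xi_Y = p^\la_i dy^i\w\om_\la - p^\la_i\G^i_\la\om$ is the form (\ref{3.6}). Since $\bom_\psi = \bom_Y\rfloor\psi$, one has for any connection $\g'$ the identity $\g'\rfloor\bom_\psi = (\g'\rfloor\bom_Y)\rfloor\psi$ only up to terms involving derivatives of $\psi$; more carefully, contracting the $(n+2)$-form $\bom_Y$ with a connection and then with $\psi$ must be compared to contracting $\bom_Y\rfloor\psi$ with the connection. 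I would write everything in the holonomic coordinates $(x^\la,y^i,p^\la_i,y^i_\m,p^\la_{\m i})$ on $J^1\Pi$ and just expand both sides.

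Next I would compute $dH_{\G\psi}$ explicitly. Since $\psi^\la_\m$ depends only on $x$, we get
\[
dH_{\G\psi} = \psi^\la_\m\big(dp^\m_i\w dy^i\w\om_\la - \dr_j\G^i_\la p^\la_i\, dy^j\w\om - \G^i_\la\, dp^\la_i\w\om\big) + d\psi^\la_\m\w\big(p^\m_i dy^i\w\om_\la - \G^i_\la p^\la_i\om\big).
\]
The first group of terms is precisely $\wt\G\rfloor\bom_\psi$ composed in the appropriate way — this is where the lift $\wt\G$ (\ref{404}) enters, because the momentum components of $\wt\G$ are exactly tuned so that $p^\la_i\G^i_\la\om$-type terms reproduce correctly, as in Example \ref{e43.5}. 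The extra terms $d\psi^\la_\m\w(\cdots)$, which contain $\dr_\al\psi^\la_\m\, dx^\al$, are what force the correction term in $\g$. I would then contract the ansatz $\g = \wt\G + \xi$ with $\bom_\psi$, where $\xi = \frac1n(\psi^{-1})^\m_\la(-\dr_\al\psi^\al_\n - K^\al{}_{\bt\al}\psi^\bt_\n + K^\al{}_{\n\bt}\psi^\bt_\al)p^\n_i\, dx^\la\ot\dr^i_\m$ is a vertical-valued (in the $\dr^i_\m$ directions) basic one-form, and check that $\xi\rfloor\bom_\psi$ cancels exactly the unwanted $d\psi$-terms, including the pieces coming from the symmetric linear connection $K$ hidden in the momentum components of $\wt\G$.

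The main obstacle I expect is the bookkeeping in the momentum directions: when contracting $\bom_\psi = \psi^\la_\m dp^\m_i\w dy^i\w\om\ot\dr_\la$ with a connection of the form $dx^\al\ot(\cdots + c^j_{\bt k}\dr^k_\bt)$, the $dp^\m_i$ factor gets contracted, producing terms $\psi^\la_\m c^j_{\m i}\, dy^i\w\om\ot\dr_\la$ that must combine with the $\G^i_\la$-terms and with the $\dr_\al\psi^\al_\m$-terms. The factor $\frac1n$ in $\xi$ strongly suggests that the cancellation works only after tracing over a repeated index (the $\om_\la \to \om$ collapse contributing an $n$), so I would be careful to track which contractions produce $\om$ versus $\om_\la$ and to use $dx^\al\w\om_\la = \dl^\al_\la\om$ conventions consistently. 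Once the algebra checks out, closedness of $\g\rfloor\bom_\psi$ is automatic since $\g\rfloor\bom_\psi = dH_{\G\psi}$ is manifestly exact, so the theorem follows. I would also record, as in Example \ref{e43.5}, that $H_{\G\psi} = \psi$-contraction of $\G^*\Xi_Y$, making the global well-definedness transparent.
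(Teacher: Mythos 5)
The paper states Theorem \ref{3.32} without any proof (it is quoted from the iso-PS construction of \cite{book95}), so there is no argument in the text to compare yours against; your direct coordinate verification is the natural, essentially forced, route, and its skeleton is sound. You correctly identify the three mechanisms that make the computation work: (i) the equality $\g\rfloor\bom_\psi=dH_{\G\psi}$ only constrains the trace $\psi^\al_\m\g^\m_{\al i}$ of the momentum components, which is why the correction term carries the factor $\frac1n$ compensating $\psi^\al_\m(\psi^{-1})^\m_\al=n$; (ii) the genuinely new terms come from $d\psi^\la_\m\w(\cdots)$, which collapse via $dx^\n\w\om_\la=\dl^\n_\la\om$ to $-\dr_\la\psi^\la_\m p^\m_i\,dy^i\w\om$; and (iii) the $K$-dependent momentum components of $\wt\G$ (\ref{404}), which cancel among themselves in $\wt\G\rfloor\bom_Y=dH_\G$ by the symmetry of $K$, no longer cancel once weighted by $\psi^\al_\bt$ and must be absorbed into the correction. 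Two cautions. First, your intermediate identity ``$\g'\rfloor\bom_\psi=(\g'\rfloor\bom_Y)\rfloor\psi$ up to $d\psi$-terms'' is not meaningful as written ($\g'\rfloor\bom_Y$ is already a scalar-valued form), but this does no harm because you immediately fall back on the coordinate expansion $\g\rfloor\bom_\psi=\psi^\al_\bt dp^\bt_j\w dy^j\w\om_\al-\psi^\al_\bt\g^i_\al dp^\bt_i\w\om+\psi^\al_\bt\g^\bt_{\al i}dy^i\w\om$, the exact analogue of (\ref{cmp4}). Second, be prepared for the fact that the statement itself contains index misprints (the bracket $\psi^\la_\m(\cdots-\G^i_\la p^\la_i\om)$ repeats $\la$ three times, and the index placement on $K$ is inconsistent with (\ref{08})); your computation will itself dictate the correct combination of $\dr_\al\psi^\al_\n$ and $K$--$\psi$ contractions by matching the $dy^i\w\om$ coefficient of $dH_{\G\psi}$, and you should add the observation that this combination is a covariant divergence of the tensor $\psi$, hence globally defined, so that $\g=\wt\G+\xi$ is a genuine connection on $\Pi\to X$ and the verification is not merely local.
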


Item (ii) of Theorem \ref{91t1} also is extended to
iso-Hamiltonian forms. Then as an immediate consequence of Theorem
\ref{3.32}, we obtain the following corollary.

\begin{corollary} Any iso-Hamiltonian form is given by the
expression
\beq
H_\psi=H_{\G\psi} + \wt{\cH}_\G\om =\psi^\la_\mu
p^\mu_idy^i\w\om_\la -\cH_\psi\om \label{M12}
\eeq
where $\G$ is a connection on $Y\to X$.
\end{corollary}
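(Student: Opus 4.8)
The plan is to mimic the structure of the canonical PS case treated in Theorem~\ref{91t1}, item (ii), adapting each step to the deformed forms $\bth_\psi$ and $\bom_\psi$. First I would observe that Theorem~\ref{3.32} has already exhibited a particular iso-Hamiltonian form $H_{\G\psi}=\psi^\la_\mu(p^\mu_i dy^i\w\om_\la-\G^i_\la p^\la_i\om)$ together with its associated iso-Hamiltonian connection, so the set of iso-Hamiltonian forms is non-empty and we only need to identify the affine space it spans. The key point is that the iso-Hamiltonian condition is an \emph{affine-linear} condition: if $\g\rfloor\bom_\psi$ and $\g'\rfloor\bom_\psi$ are both closed, then so is $(\g-\g')\rfloor\bom_\psi$, and conversely adding to an iso-Hamiltonian connection any soldering form $\si$ on $\Pi\to X$ satisfying $\si\rfloor\bom_\psi=0$ preserves the property. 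Hence the difference of any two iso-Hamiltonian forms is a closed $n$-form which, by the argument already used around \eqref{cmp4}--\eqref{mos06} (the relative Poincar\'e lemma), must be a density $\wt{\cH}\om$ on $\Pi\to X$ that is a section of the pull-back bundle \eqref{ps12}.

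Next I would write an arbitrary iso-Hamiltonian form $H_\psi$ and show it is cohomologous, in the relevant sense, to $H_{\G\psi}$ for a suitable choice of connection $\G$ on $Y\to X$. Concretely: given $H_\psi$ with iso-Hamiltonian connection $\g$, the $TX$-valued form $\g\rfloor\bom_\psi$ is closed; comparing with $H_{\G\psi}$ for any fixed $\G$, the form $H_\psi-H_{\G\psi}$ is closed, and by the splitting argument it equals $\wt{\cH}_\G\om$. Reading off holonomic coordinates, $H_\psi$ must have the shape $\psi^\la_\mu p^\mu_i\,dy^i\w\om_\la$ in its ``momentum'' part (this is forced by the structure of $\bom_\psi=\psi^\la_\mu dp^\mu_i\w dy^i\w\om\ot\dr_\la$, exactly as $p^\la_i dy^i\w\om_\la$ was forced in the canonical case), with the remaining freedom being a density $\cH_\psi\om$. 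This gives the claimed expression \eqref{M12}, where $\G$ enters only through the decomposition of $\cH_\psi$ into $\G^i_\la p^\la_i$ plus the genuine density $\wt{\cH}_\G$.

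The one subtlety — and the place I expect to spend the most care — is checking that the ``momentum'' term really is globally $\psi^\la_\mu p^\mu_i\,dy^i\w\om_\la$ and not something that only looks so in a chart. Here I would invoke that $\bth_\psi=\bth_Y\rfloor\psi$ is a globally well-defined $TX$-valued form (its construction from the canonical tangent-valued Liouville form $\bth_Y$ and the tensor $\psi$ is coordinate-free), so that $-\Phi\rfloor\bth_\psi$ is a globally defined iso-Hamiltonian form for any Hamiltonian map $\Phi$ \eqref{2.7}, by the analogue of Theorem~\ref{91t2}; taking $\Phi=\wh\G$ recovers $H_{\G\psi}$. Since every iso-Hamiltonian form differs from such an $H_{\G\psi}$ by a density, and conversely every such sum is iso-Hamiltonian by the extended item (ii) of Theorem~\ref{91t1}, the affine space of iso-Hamiltonian forms is precisely $\{H_{\G\psi}+\wt{\cH}_\G\om\}$, which is \eqref{M12}. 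The rest is the routine verification that $\cH_\psi=\G^i_\la p^\la_i\,\psi^\la_\mu\!/\!\cdots+\wt{\cH}_\G$ matches the transition functions \eqref{2.3}, which follows from the corresponding computation already done for the canonical Hamiltonian form $H_\G$ in \eqref{3.6}.
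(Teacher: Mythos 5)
Your proposal is correct and follows essentially the same route as the paper: the paper likewise obtains the corollary as an immediate consequence of Theorem \ref{3.32} (which supplies the particular iso-Hamiltonian form $H_{\G\psi}$) combined with the extension of item (ii) of Theorem \ref{91t1} to iso-Hamiltonian forms, i.e.\ the affine-space-over-densities structure. Your additional checks (relative Poincar\'e lemma, global well-definedness of $\bth_\psi$) merely flesh out steps the paper leaves implicit.
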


By analogy with PS Hamiltonian formalism, one can introduce the
Hamilton operator and obtain the covariant Hamilton equation
associated to the iso-Hamiltonian form (\ref{M12}). For sections
$r$ of a Legendre bundle $\Pi\to X$, this equation reads
\be
\psi^\la_\mu\dr_\la r^i =\dr_\mu^i\cH_\psi, \qquad \dr_\la
(\psi^\la_\mu r^\mu_i)= -\dr_i\cH_\psi.
\ee

\section{Associated Hamiltonian and Lagrangian systems}

Let us study the relations between first order Lagrangian and
covariant Hamiltonian formalisms \cite{book,jpa99,book10}.

We are based on the fact that any Lagrangian $L$ on a
configurations space $J^1Y$ defines the Legendre map $\wh L$
(\ref{b330}) and its jet prolongation
\be
&& J^1\wh L: J^!J^1Y\ar_Y J^1\Pi, \qquad
(p^\la_i, y^i_\m,p^\la_{\m i})\circ J^1\wh L= (\pi^\la_i,\wh
y^i_\m, \wh d_\m\pi^\la_i),\\
&& \wh d_\la=\dr_\la+ \wh y^j_\la\dr_j+y^j_{\la\m}\dr_j^\m,
\ee
and that any Hamiltonian form $H$ on a momentum phase space $\Pi$
yields the Hamiltonian map $\wh H$ (\ref{415}) and its jet
prolongation
\be
&& J^1\wh H: J^1\Pi\ar_Y J^1J^1Y, \qquad
(y^i_\m,\wh y^i_\la,y^i_{\la\m})\circ J^1\wh
H=(\dr^i_\m\cH,y^i_\la,d_\la\dr^i_\m\cH),\\
&& d_\la=\dr_\la+y^j_\la\dr_j+p^\nu_{\la j}\dr^j_\nu.
\ee

Let us start with the case of a hyperregular Lagrangian $L$, i.e.,
when the corresponding Legendre map $\wh L$ is a diffeomorphism.
Then $\wh L^{-1}$ is a Hamiltonian map. Let us consider the
Hamiltonian form
\mar{cc311}\beq
H=H_{\wh L^{-1}}+\wh L^{-1*}L,  \qquad \cH=p^\m_i\wh L^{-1}{}_\m^i
- \cL(x^\m, y^j,\wh L^{-1}{}_\m^j), \label{cc311}
\eeq
where $H_{\wh L^{-1}}$ is the Hamiltonian form (\ref{414})
associated to the Hamiltonian map $\wh L^{-1}$. Let $s$ be a
solution of the Euler -- Lagrange equation (\ref{b327}) for a
Lagrangian $L$. A direct computation shows that $\wh L\circ J^1s$
is a solution of the covariant Hamilton equation (\ref{b4100a}) --
(\ref{b4100b}) for the Hamiltonian form $H$ (\ref{cc311}).
Conversely, if $r$ is a solution of the covariant Hamilton
equation (\ref{b4100a}) -- (\ref{b4100b}) for the Hamiltonian form
$H$ (\ref{cc311}), then $s=\pi_{\Pi Y}\circ r$ is a solution of
the Euler -- Lagrange equation (\ref{b327}) for $L$ (see the
equality (\ref{N10})). Thus, one can state the following.

\begin{theorem} \mar{ps25} \label{ps25}
In a case of hyperregular Lagrangians, Lagrangian and PS
Hamiltonian formalisms are equivalent.
\end{theorem}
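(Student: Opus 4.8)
The plan is to establish the equivalence of Lagrangian and PS Hamiltonian formalisms in the hyperregular case by exhibiting a canonical Hamiltonian form, namely $H$ (\ref{cc311}), and showing that the Legendre map $\wh L$ together with its inverse $\wh L^{-1}$ sets up a bijection between solutions of the Euler -- Lagrange equation (\ref{b327}) and solutions of the covariant Hamilton equation (\ref{b4100a}) -- (\ref{b4100b}) for this $H$. Since $\wh L$ is by assumption a diffeomorphism $J^1Y\to\Pi$, its inverse $\wh L^{-1}:\Pi\to J^1Y$ is a Hamiltonian map in the sense of (\ref{2.7}), so by Theorem \ref{91t2} it determines a Hamiltonian form $H_{\wh L^{-1}}$ (\ref{414}), and adding the density $\wh L^{-1*}L$ (which is a section of the pull-back bundle (\ref{ps12}) because $L$ is a horizontal density) yields a genuine Hamiltonian form $H$ by item (ii) of Theorem \ref{91t1}. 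The coordinate expression $\cH=p^\m_i\wh L^{-1}{}_\m^i - \cL(x^\m, y^j,\wh L^{-1}{}_\m^j)$ makes this the expected Legendre-transformed Hamiltonian.

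The key computational step is to verify that $\wh H$, the Hamiltonian map associated with $H$ (\ref{cc311}) via Theorem \ref{hammap}, coincides with $\wh L^{-1}$. This is done by differentiating $\cH$: using (\ref{m11}), i.e. $p^\la_i\circ\wh L=\pi^\la_i=\dr^\la_i\cL$, one computes $\dr^i_\la\cH=\wh L^{-1}{}_\la^i$ (the terms involving $\dr^i_\la$ of $\wh L^{-1}$ cancel by the chain rule and the defining relation of $\pi^\la_i$), which is exactly the statement $y^i_\la\circ\wh H=\dr^i_\la\cH$ from (\ref{415}) saying $\wh H=\wh L^{-1}$. Once this identification is in hand, the first half of the covariant Hamilton equation, $y^i_\la=\dr^i_\la\cH$ (\ref{b4100a}), simply records that a solution section $r$ of $\Pi\to X$ satisfies $J^1(\pi_{\Pi Y}\circ r)=\wh H\circ r=\wh L^{-1}\circ r$, i.e. that $r=\wh L\circ J^1s$ for $s=\pi_{\Pi Y}\circ r$; this is the content of the equality (\ref{N10}) derived in Section 6.

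It then remains to match the second half (\ref{b4100b}), $p^\la_{\la i}=-\dr_i\cH$, with the Euler -- Lagrange equation (\ref{b327}), $(\dr_i-d_\la\dr^\la_i)\cL=0$. Here one uses $-\dr_i\cH=\dr_i\cL\circ\wh L^{-1}$ (again the $\dr_i$-derivatives of $\wh L^{-1}$ cancel by the same Legendre relation), and one uses that along the section $r=\wh L\circ J^1s$ one has $p^\la_i\circ r=\pi^\la_i\circ J^1s=\dr^\la_i\cL\circ J^1s$, so that $p^\la_{\la i}\circ r = d_\la(\dr^\la_i\cL)\circ J^2s$. Substituting gives precisely $d_\la\dr^\la_i\cL=\dr_i\cL$, which is the Euler -- Lagrange equation for $s$; running the computation backwards establishes the converse. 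I expect the main obstacle to be bookkeeping rather than conceptual: one must be careful that the chain-rule cancellations in $\dr^i_\la\cH$ and $\dr_i\cH$ genuinely occur, which relies essentially on the hyperregularity hypothesis (so that $\wh L^{-1}$ is globally defined and smooth) and on the identity $p^\la_i\circ\wh L=\dr^\la_i\cL$; and one should note that because $\wh L$ is a diffeomorphism the Hamiltonian connection $\g_H$ of Theorem \ref{91t3} is here unique, so the correspondence between solutions is a genuine bijection, not merely a quasi-equivalence. Assembling these observations yields the theorem.
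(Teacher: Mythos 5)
Your proposal follows exactly the paper's own route: it introduces the Hamiltonian form $H$ (\ref{cc311}) built from $H_{\wh L^{-1}}$ and $\wh L^{-1*}L$, identifies $\wh H=\wh L^{-1}$, and matches the two halves of the covariant Hamilton equation (\ref{b4100a})--(\ref{b4100b}) with the jet-prolongation condition and the Euler--Lagrange equation (\ref{b327}), which is precisely the ``direct computation'' the paper invokes. The argument is correct; you have merely written out the chain-rule cancellations that the paper leaves implicit.
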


Let now $L$ be an arbitrary Lagrangian on a configuration space
$J^1Y$.

\begin{definition}
A Hamiltonian form $H$ is said to be associated to a Lagrangian
$L$ if $H$ satisfies the relations
\mar{2.30}\ben
&& \wh L\circ\wh H\circ \wh L=\wh L,\label{2.30a} \\
&& H=H_{\wh H}+\wh H^*L. \label{2.30b}
\een
\end{definition}

A glance at the relation (\ref{2.30a}) shows that $\wh L\circ\wh
H$ is a projector
\mar{b481'}\beq
p^\m_i(p)=\dr^\m_i\cL (x^\m,y^i,\dr^j_\la\cH(p)), \qquad p\in N_L,
\label{b481'}
\eeq
from $\Pi$ onto the Lagrangian constraint space $N_L=\wh L( J^1Y)$
(\ref{ps3}). Accordingly,  $\wh H\circ\wh L$ is a projector from
$J^1Y$ onto $\wh H(N_L)$.

\begin{definition}
A Hamiltonian form is called weakly associated to a Lagrangian $L$
if the condition (\ref{2.30b}) holds on a Lagrangian constraint
space $N_L$. \end{definition}

\begin{theorem} \label{jp} \mar{jp}
If a Hamiltonian map $\Phi$ (\ref{2.7}) obeys a relation
\mar{ps130}\beq
\wh L\circ\Phi \circ \wh L=\wh L, \label{ps130}
\eeq
then a Hamiltonian form
\mar{ps131}\beq
H=H_\Phi+\Phi^*L \label{ps131}
\eeq
is weakly associated to a Lagrangian $L$. If $\Phi=\wh H$, then
$H$ is associated to $L$.
\end{theorem}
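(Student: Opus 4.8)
The plan is to verify the two conditions \eqref{2.30a} and \eqref{2.30b} (the latter only on $N_L$) directly from the definitions, using the coordinate formulas for the Legendre map \eqref{m11}, the Hamiltonian map \eqref{415}, and the associated Hamiltonian form \eqref{414}. First I would read off from \eqref{ps131} that $H=H_\Phi+\Phi^*L$ has Hamiltonian density $\cH=p^\la_i\Phi^i_\la - \cL(x^\m,y^j,\Phi^j_\m)$, so that, by \eqref{415}, the associated Hamiltonian map $\wh H$ has components $y^i_\la\circ\wh H=\dr^i_\la\cH = \Phi^i_\la + (p^\m_j - \dr^\m_j\cL)\dr^i_\la\Phi^j_\m$. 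The key observation is that on the Lagrangian constraint space $N_L$, where $p^\m_j=\dr^\m_j\cL(x,y,\Phi)$ by \eqref{m11} combined with \eqref{ps130}, the correction term vanishes and $\wh H|_{N_L}=\Phi|_{N_L}$ as maps into $J^1Y$.

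From this the weak association is immediate: condition \eqref{2.30b} asks that $H=H_{\wh H}+\wh H^*L$ holds on $N_L$, and since $\wh H$ and $\Phi$ agree on $N_L$ we get $H_{\wh H}|_{N_L}=H_\Phi|_{N_L}$ and $\wh H^*L|_{N_L}=\Phi^*L|_{N_L}$, so $H=H_\Phi+\Phi^*L$ restricted to $N_L$ equals $H_{\wh H}+\wh H^*L$ restricted to $N_L$, which is exactly \eqref{2.30b} on $N_L$. For condition \eqref{2.30a} I would compute $\wh L\circ\wh H$ in coordinates: by \eqref{m11} and the formula for $\wh H$ just obtained, $p^\m_i\circ(\wh L\circ\wh H)=\dr^\m_i\cL(x,y,\dr^j_\la\cH)$, which is precisely the projector \eqref{b481'}. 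Composing once more with $\wh L$ and invoking the hypothesis \eqref{ps130} that $\wh L\circ\Phi\circ\wh L=\wh L$ — together with the fact that $\wh H\circ\wh L$ and $\Phi\circ\wh L$ coincide on the image of $\wh L$, i.e. on $N_L$, which is exactly where the outer $\wh L$ feeds values — yields $\wh L\circ\wh H\circ\wh L=\wh L\circ\Phi\circ\wh L=\wh L$, establishing \eqref{2.30a} and hence that $H$ is (fully) associated when $\Phi=\wh H$.

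The main obstacle I anticipate is bookkeeping the domains carefully: $\wh H$ and $\Phi$ do \emph{not} agree as maps on all of $\Pi$, only on $N_L$, so in proving \eqref{2.30a} one must be sure that every point at which the discrepancy term $(p^\m_j-\dr^\m_j\cL)\dr^i_\la\Phi^j_\m$ could contribute is first pushed into $N_L$ by the innermost $\wh L$; this is where the hypothesis \eqref{ps130} is doing genuine work rather than \eqref{415} alone. A secondary point requiring care is checking that when $\Phi=\wh H$ the relation \eqref{ps130} is automatically available — it follows because $\wh H$ is the Hamiltonian map canonically built from $H=H_{\wh H}+\wh H^*L$, so the construction is self-consistent and \eqref{ps130} is equivalent to \eqref{2.30a}, closing the loop. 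The remaining computations — expanding $\dr^i_\la\cH$, substituting into \eqref{b481'}, and matching $H_\Phi$, $\Phi^*L$ termwise with $H_{\wh H}$, $\wh H^*L$ — are routine and I would not grind through them in detail.
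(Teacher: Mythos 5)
Your argument is correct. The paper states Theorem \ref{jp} without giving a proof, so there is nothing to compare against; your direct verification is the natural one and it works: from $\cH=p^\la_i\Phi^i_\la-\cL(x,y,\Phi)$ one gets $\dr^i_\la\cH=\Phi^i_\la+(p^\m_j-\dr^\m_j\cL(x,y,\Phi))\dr^i_\la\Phi^j_\m$, the hypothesis (\ref{ps130}) forces $p^\m_j=\dr^\m_j\cL(x,y,\Phi)$ at every point of $N_L=\wh L(J^1Y)$, hence $\wh H=\Phi$ on $N_L$, which yields both (\ref{2.30b}) on $N_L$ and $\wh L\circ\wh H\circ\wh L=\wh L\circ\Phi\circ\wh L=\wh L$. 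The only step worth making explicit, which you leave implicit, is that full association when $\Phi=\wh H$ also requires (\ref{2.30b}) everywhere on $\Pi$, and this is tautological because (\ref{ps131}) then reads $H=H_{\wh H}+\wh H^*L$; your closing concern about the availability of (\ref{ps130}) in that case is unnecessary, since it is a standing hypothesis of the theorem and literally becomes (\ref{2.30a}) under the substitution $\Phi=\wh H$.
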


\begin{theorem} \label{cmp110} \mar{cmp110}
Any Hamiltonian form $H$ weakly associated to a Lagrangian $L$
fulfills a relation
\mar{4.9}\beq
H|_{N_L}=\wh H^*H_L|_{N_L}, \label{4.9}
\eeq
where $H_L$ is the Poincar\'e -- Cartan form (\ref{303}).
\end{theorem}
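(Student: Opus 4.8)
The plan is to reduce the identity (\ref{4.9}) to the two ingredients built into weak association, passing through the homogeneous Legendre bundle $Z_Y$. By (\ref{b418}) we have $H=h^*\Xi_Y$ for the covariant Hamiltonian $h=-\cH\om$, and by (\ref{cmp14}) we have $H_L=(i_L\circ\wh H_L)^*\Xi_Y$; composing pull-backs, $\wh H^*H_L=(i_L\circ\wh H_L\circ\wh H)^*\Xi_Y$. Since restriction of a form to $N_L$ means pull-back along $i_N:N_L\to\Pi$, the claim (\ref{4.9}) is equivalent to
\[
(h\circ i_N)^*\Xi_Y=(i_L\circ\wh H_L\circ\wh H\circ i_N)^*\Xi_Y
\]
on $N_L$, and for this it is enough to verify that the two maps $h\circ i_N$ and $i_L\circ\wh H_L\circ\wh H\circ i_N$ of $N_L$ into $Z_Y$ actually coincide.

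In the holonomic coordinates $(x^\la,y^i,p^\la_i,p)$ on $Z_Y$ the first map sends $q\in N_L$ to $(x^\la,y^i,p^\la_i,-\cH(q))$. For the second, use $y^i_\la\circ\wh H=\dr^i_\la\cH$ together with $(p^\la_i,p)\circ\wh H_L=(\p^\la_i,\cL-y^i_\la\p^\la_i)$: it sends $q$ to
\[
\bigl(x^\la,y^i,\dr^\la_i\cL(x^\m,y^j,\dr^j_\n\cH),\cL(x^\m,y^j,\dr^j_\n\cH)-\dr^i_\la\cH\,\dr^\la_i\cL(x^\m,y^j,\dr^j_\n\cH)\bigr).
\]
The momentum components agree exactly when $p^\la_i=\dr^\la_i\cL(x^\m,y^j,\dr^j_\n\cH)$ on $N_L$, which is the projection relation (\ref{b481'}); granting it, the last components agree exactly when $-\cH=\cL(x^\m,y^j,\dr^j_\n\cH)-p^\la_i\dr^i_\la\cH$ on $N_L$. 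But the latter equality is precisely the relation (\ref{2.30b}) of weak association written out: by (\ref{414}), $H_{\wh H}=p^\la_idy^i\w\om_\la-p^\la_i\dr^i_\la\cH\,\om$ and $\wh H^*L=\cL(x^\m,y^j,\dr^j_\n\cH)\,\om$, so the validity of $H=H_{\wh H}+\wh H^*L$ on $N_L$ is exactly the equality of the $\om$-coefficients above. Hence the two maps coincide on $N_L$ and (\ref{4.9}) follows. (One may also bypass $Z_Y$ and compute $\wh H^*H_L$ directly from (\ref{303}), using $\wh H^*\thh^i=dy^i-\dr^i_\la\cH\,dx^\la$ and $dx^\m\w\om_\la=\dl^\m_\la\om$; the same two conditions reappear.)

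The step requiring care is the use of (\ref{b481'}) when $H$ is only weakly associated: in the text this projection property is read off from the stronger relation (\ref{2.30a}) of a fully associated form, and it is not forced by (\ref{2.30b}) on $N_L$ by itself. I would secure it through the mechanism behind Theorem \ref{jp}: a weakly associated $H$ may be written as $H=H_\Phi+\Phi^*L$ for some Hamiltonian map $\Phi$ obeying (\ref{ps130}), and on $N_L$ its momentum map $\wh H$ coincides with $\Phi$. Indeed, from $\cH=p^\la_i\Phi^i_\la-\cL(x^\m,y^j,\Phi)$ one computes $\dr^i_\la\cH=\Phi^i_\la+[p^\m_j-\dr^\m_j\cL(x^\n,y^k,\Phi)]\dr^i_\la\Phi^j_\m$, and the bracketed factor vanishes on $N_L$ because $\wh L\circ\Phi|_{N_L}=\id_{N_L}$ by (\ref{ps130}); hence $\wh H|_{N_L}=\Phi|_{N_L}$, so $\wh L\circ\wh H|_{N_L}=\wh L\circ\Phi|_{N_L}=\id_{N_L}$, which is (\ref{b481'}) on $N_L$. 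With this settled, everything else is the bookkeeping above, so the main obstacle is conceptual --- making sure the Lagrangian constraint (\ref{b481'}) is genuinely at our disposal before invoking it --- rather than any hard computation.
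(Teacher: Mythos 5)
Your computation is correct and, at its core, is the same as the paper's: the paper's proof simply writes (\ref{2.30b}) in the coordinate form (\ref{b481}) and substitutes (\ref{b481'}) and (\ref{b481}) into (\ref{b418}); your detour through $Z_Y$ and the equality of the two maps $h\circ i_N$ and $i_L\circ\wh H_L\circ\wh H\circ i_N$ is only a geometric repackaging of that substitution, with (\ref{b481'}) matching the $dy^i\w\om_\la$ coefficients and (\ref{b481}) matching the $\om$ coefficients. The point you flag as requiring care is a real one, and it is a gap in the paper's own proof, not just in yours: the literal definition of a weakly associated form demands only the scalar identity (\ref{b481}) on $N_L$, and that identity does not by itself force the projection relation (\ref{b481'}) --- for instance, with $n=1$ and $\cL=\frac12 (y^1_1)^2$ (so $N_L=\Pi$), the Clairaut-type solution $\cH=c\,p^1_1-\frac12 c^2$ satisfies (\ref{b481}) everywhere but violates (\ref{b481'}) off the line $p^1_1=c$, and (\ref{4.9}) indeed fails for it. Your repair --- writing $H=H_\Phi+\Phi^*L$ with $\Phi$ obeying (\ref{ps130}) and checking that $\wh H$ coincides with $\Phi$ on $N_L$, whence $\wh L\circ\wh H|_{N_L}=\id$ --- is sound, but be aware that it quietly strengthens the hypothesis: not every form satisfying the bare definition of weak association need arise as in Theorem \ref{jp}. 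Since every weakly associated form actually constructed in the paper (Theorems \ref{jp} and \ref{cmp30}) does arise this way, the honest conclusion is that ``weakly associated'' is intended to include (\ref{2.30a}) restricted to $N_L$ (equivalently (\ref{b481'})), and with that reading both your argument and the paper's are complete.
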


\begin{proof}
The relation (\ref{2.30b}) takes a coordinate form
\mar{b481}\beq
\cH(p)=p^\m_i\dr^i_\m\cH-\cL(x^\m,y^i,\dr^j_\la\cH(p)), \qquad
p\in N_L. \label{b481}
\eeq
Substituting (\ref{b481'}) and (\ref{b481}) in (\ref{b418}), we
obtain the relation (\ref{4.9}).
\end{proof}

The difference between associated and weakly associated
Hamiltonian forms lies in the following. Let $H$ be an associated
Hamiltonian form, i.e., the equality (\ref{b481}) holds everywhere
on $\Pi$. Acting on this equality by the exterior differential, we
obtain the relations
\mar{2.31,'}\ben
&& \dr_\m\cH(p) =-(\dr_\m\cL)\circ \wh H(p), \qquad p\in N_L,
\nonumber\\
&&  \dr_i\cH(p) =-(\dr_i\cL)\circ \wh H(p), \qquad
p\in N_L, \label{2.31}\\
&&
(p^\m_i-(\dr^\m_i\cL)
(x^\m,y^i,\dr^j_\la\cH))\dr^i_\m\dr^a_\al\cH=0.\label{2.31'}
\een
The relation (\ref{2.31'}) shows that the associated Hamiltonian
form (i.e., the Hamiltonian map $\wh H$) is not regular outside a
Lagrangian constraint space $N_L$.

\begin{example} \mar{ps22} \label{ps22}
Any Hamiltonian form is weakly associated to a zero Lagrangian
$L=0$, while the associated one is only $H_\G$ (\ref{3.6}).
\end{example}

\begin{example}
A hyperregular Lagrangian has a unique associated and weakly
associated Hamiltonian form (\ref{cc311}). In a case of a regular
Lagrangian $L$, the Lagrangian constraint space $N_L$ is an open
subbundle of a vector Legendre bundle $\Pi\to Y$. If $N_L\neq\Pi$,
a weakly associated Hamiltonian form fails to be defined
everywhere on $\Pi$ in general. At the same time, $N_L$ itself can
be provided with the pull-back PS structure with respect to the
imbedding $N_L\to \Pi$, so that one may consider Hamiltonian forms
on $N_L$.
\end{example}

One can say something more in a case of semiregular Lagrangians
(Definition \ref{d11}).

\begin{lemma} \label{3.22} \mar{3.22}
The Poincar\'e -- Cartan form $H_L$ for a semiregular Lagrangian
$L$ is constant on the connected inverse image $\wh L^{-1}(p)$ of
any point  $p\in N_L$.
\end{lemma}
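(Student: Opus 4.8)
The plan is to work in the adapted coordinates $(x^\la, y^i, y^i_\la)$ on $J^1Y$ and recall that the Poincar\'e -- Cartan form is $H_L = \cL\om + \p^\la_i\thh^i\w\om_\la$ with $\p^\la_i = \dr^\la_i\cL$ and $\thh^i = dy^i - y^i_\la dx^\la$. Fix a point $p\in N_L$ and let $j^1_1, j^1_2$ be two jets in the fibre $\wh L^{-1}(p)$; since $L$ is semiregular, $\wh L^{-1}(p)$ is a connected submanifold of $J^1Y$, so it suffices to show that $H_L$ is \emph{locally} constant along $\wh L^{-1}(p)$, i.e.\ that the restriction of $dH_L$ to the tangent space of the fibre vanishes. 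Equivalently, for every vertical (with respect to $\pi^1_0: J^1Y\to Y$) vector field $v = v^i_\la \dr^\la_i$ that is tangent to $\wh L^{-1}(p)$, one needs $v\rfloor dH_L$ to pull back to zero along the fibre, which amounts to the scalar identity $v\rfloor dH_L$ being a form that annihilates all such $v$.

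First I would compute $v\rfloor H_L$ and $v\rfloor dH_L$ directly. Because $v$ has only $\dr^\la_i$ components, $v\rfloor \om = 0$ and $v\rfloor \om_\la = 0$, and $v\rfloor \thh^i = 0$; a short computation gives $v\rfloor dH_L = (v^i_\la \dr^\la_i \cL)\,\om - v^i_\la\,(d\p^\la_i)\w\thh^j\dots$ — more cleanly, using the variational identity $dH_L = \cE_L - d_H\Xi_L$ restricted appropriately, or just Leibniz on $H_L$ — the terms organize so that along the fibre the relevant contraction is controlled by $\dr^\la_i\cL$ and by $d(\dr^\la_i\cL)$. The key observation is that $\wh L^{-1}(p)$ is exactly the set where $(x^\la,y^i)$ are fixed and $\p^\la_i(x,y,y_\mu) = p^\la_i$ is constant; hence a vector $v = v^i_\la\dr^\la_i$ is tangent to the fibre precisely when $v^j_\mu \dr^\mu_j\p^\la_i = v^j_\mu\,\dr^\mu_j\dr^\la_i\cL = 0$ for all $\la, i$. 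Contracting $v$ into the expression for $dH_L$ and using that $(x^i, y^i)$ and the $\p^\la_i$ are held fixed, every surviving term is a linear combination of the quantities $v^j_\mu\,\dr^\mu_j\dr^\la_i\cL$, which vanish by tangency. Therefore $v\rfloor dH_L$ vanishes on $\wh L^{-1}(p)$, so $H_L$ is closed along the fibre and, the fibre being connected, constant on it.

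The main obstacle is the bookkeeping in contracting $v$ into $dH_L$: one must be careful that the $dy^i$ and $dx^\la$ pieces (which are \emph{not} annihilated by $v$ only through $\thh^i$, but $v\rfloor dy^i = 0$ and $v\rfloor dx^\la = 0$ since $v$ is $\dr^\la_i$-valued) genuinely drop out, so that the only contribution is through the coefficient functions $\cL$ and $\dr^\la_i\cL$ differentiated in the $y^i_\mu$ directions. Concretely, writing $dH_L = d\cL\w\om + d\p^\la_i\w\thh^i\w\om_\la - \p^\la_i\, dy^i_\la\w dx^\la\w\om_\la$ (the last from $d\thh^i = -dy^i_\la\w dx^\la$) and contracting with $v = v^i_\la\dr^\la_i$, one gets $v\rfloor dH_L = (v^i_\la\dr^\la_i\cL)\,\om - (v^i_\la\dr^\la_i\p^\mu_j)\,\thh^j\w\om_\mu$ plus a term in $dx^\la\w\om_\la = 0$. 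Now on the fibre $\wh L^{-1}(p)$ the coefficient $v^i_\la\dr^\la_i\p^\mu_j = v^i_\la\,\dr^\la_i\dr^\mu_j\cL$ vanishes by tangency, and the remaining scalar $v^i_\la\dr^\la_i\cL$ is seen to vanish as well: differentiating $\cL$ along the fibre, $\dr^\la_i\cL = \p^\la_i$ is constant there, but one needs $\cL$ itself constant, which follows because the relation $\cH(p) = p^\m_i y^i_\m - \cL$ is not yet available — instead, use that on $\wh L^{-1}(p)$ the function $\cL - \p^\la_i y^i_\la$ has fibre-differential $-(\dr^\mu_j\p^\la_i) y^i_\la\,(\cdot) = 0$ by the same tangency condition, and $\p^\la_i$ is constant, forcing $\cL$ to vary only through $\p^\la_i y^i_\la$ with constant $\p$, whose variation is $\p^\la_i\,dy^i_\la$, matching the $v^i_\la\dr^\la_i\cL$ term and cancelling it in $v\rfloor dH_L$. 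Hence $v\rfloor dH_L = 0$ on $\wh L^{-1}(p)$, and connectedness finishes the proof.
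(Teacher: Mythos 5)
Your proof is correct and follows essentially the same route as the paper's: the paper simply observes that for any vertical vector field $u$ on $J^1Y\to Y$ taking values in $\Ker T\wh L$ the Lie derivative $\bL_u H_L$ vanishes, which is exactly the computation you carry out explicitly (since $v\rfloor H_L=0$, one has $\bL_v H_L=v\rfloor dH_L=(v^j_\m\dr^\m_j\pi^\la_i)\,\thh^i\w\om_\la=0$ by the tangency condition $v^j_\m\dr^\m_j\pi^\la_i=0$), and connectedness of the fibre finishes the argument. The only slip is the side remark ``$dx^\la\w\om_\la=0$'': that contraction actually yields $-\pi^\la_i v^i_\la\,\om$, which is precisely what cancels $(v\rfloor d\cL)\,\om=v^i_\la\pi^\la_i\,\om$ identically (no tangency needed), as your final paragraph in effect recognizes after some unnecessary detour.
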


\begin{proof}
Let $u$ be a vertical vector field on an affine jet bundle
$J^1Y\to Y$ which takes its values into the kernel of the tangent
map $T\wh L$ to $\wh L$. Then the Lie derivative $\bL_u H_L$ of
$H_L$ along $u$ vanishes.
\end{proof}

A corollary of Lemma \ref{3.22} is the following.

\begin{theorem} \label{3.22'} \mar{3.22'} All Hamiltonian forms weakly associated
to a semiregular Lagrangian $L$ coincide with each other on a
Lagrangian constraint space $N_L$, and the Poincar\'e -- Cartan
form $H_L$ (\ref{303}) for $L$ is the pull-back
\mar{2.32}\beq
H_L=\wh L^*H, \qquad
(\pi^\la_iy^i_\la-\cL)\om=\cH(x^\m,y^j,\pi^\m_j)\om,\label{2.32}
\eeq
of any such a Hamiltonian form $H$.
\end{theorem}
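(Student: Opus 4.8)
The plan is to deduce Theorem \ref{3.22'} from Lemma \ref{3.22} together with the already-established Theorem \ref{cmp110}, working locally in the adapted coordinates. First I would show that any two Hamiltonian forms $H$ and $H'$ weakly associated to $L$ agree on $N_L$. By Theorem \ref{cmp110}, each satisfies $H|_{N_L}=\wh H^*H_L|_{N_L}$ and $H'|_{N_L}=\wh H'^*H_L|_{N_L}$, so it suffices to see that the pull-backs of the Poincar\'e--Cartan form $H_L$ along the two momentum maps $\wh H,\wh H'$ coincide on $N_L$. Here is where Lemma \ref{3.22} enters: for $p\in N_L$, both $\wh H(p)$ and $\wh H'(p)$ lie in the connected fibre $\wh L^{-1}(p)$ (this is exactly the content of the projector relation \eqref{2.30a}, since $\wh L\circ\wh H(p)=p$ and likewise for $\wh H'$), and $H_L$ is constant on that fibre by Lemma \ref{3.22}; hence $\wh H^*H_L$ and $\wh H'^*H_L$ take the same value at every $p\in N_L$. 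Combining the two displays gives $H|_{N_L}=H'|_{N_L}$.

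Next I would establish the pull-back formula \eqref{2.32}. The natural route is to compute $\wh L^*H$ directly in coordinates. Using the expression \eqref{b418} for $H$ and the Legendre map \eqref{m11}, namely $p^\la_i\circ\wh L=\pi^\la_i$, together with the momentum-map relation \eqref{415}, $y^i_\la\circ\wh H=\dr^i_\la\cH$, one pulls the form $p^\la_i dy^i\w\om_\la-\cH\om$ back along $\wh L$. The contact part becomes $\pi^\la_i dy^i\w\om_\la$ and, evaluated via $h_0$, contributes $\pi^\la_i y^i_\la\om$; the Hamiltonian term contributes $-\cH(x^\m,y^j,\pi^\m_j)\om$. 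To identify the sum with $L_L=(\pi^\la_i y^i_\la-\cL)\om=H_L$ restricted appropriately, I invoke the coordinate form \eqref{b481} of the weak-association relation \eqref{2.30b} on $N_L$: $\cH(p)=p^\m_i\dr^i_\m\cH-\cL(x^\m,y^i,\dr^j_\la\cH(p))$. Pulling this back by $\wh L$ and using $\wh H\circ\wh L$ reproducing the relevant jet coordinates of $J^1s$, one gets $\cH(x^\m,y^j,\pi^\m_j)=\pi^\m_i y^i_\m-\cL$ on the image of integrable jets, which is precisely what is needed for $\wh L^*H=H_L$. A cleaner way to package the same computation: by Theorem \ref{cmp110}, $H|_{N_L}=\wh H^*H_L|_{N_L}$, so $\wh L^*H=\wh L^*\wh H^*H_L=(\wh H\circ\wh L)^*H_L$; since $H_L$ is constant on the fibres of $\wh L$ by Lemma \ref{3.22} and $\wh H\circ\wh L$ maps each such fibre into itself (it is the projector $J^1Y\to\wh H(N_L)$), one has $(\wh H\circ\wh L)^*H_L=H_L$, giving \eqref{2.32}.

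I expect the main obstacle to be the bookkeeping around the two distinct ``velocity'' slots that appear once one prolongs to the repeated jet manifold: the Poincar\'e--Cartan form $H_L$ lives on $J^1Y$, but its Lepage character and the identity $\cE_{\ol L}|_{J^2Y}=\cE_L$ involve $J^1J^1Y$, so I must be careful that the coordinate relation \eqref{b481} — which holds only on $N_L$ — is being pulled back along the correct map and that the ``extra'' derivative coordinates do not contribute. The subtle point is that $\wh L^*H$ is a priori a form on $J^1Y$ whereas the defining relations for association mix $\wh L$ and $\wh H$; the resolution is the projector structure following \eqref{2.30a}, which guarantees $\wh H\circ\wh L$ restricts to the identity on $\wh H(N_L)$ and hence is harmless when composed with the fibrewise-constant form $H_L$. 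The first, coordinate-free argument via Theorem \ref{cmp110} and Lemma \ref{3.22} avoids most of this and is the one I would present; the explicit coordinate verification can be relegated to a parenthetical remark.
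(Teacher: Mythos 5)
Your proposal is correct and follows essentially the same route as the paper: the paper's own (very terse) proof likewise combines the fibre-constancy of $H_L$ from Lemma \ref{3.22} with the relation (\ref{4.9}) of Theorem \ref{cmp110}, observing that $T\wh H(v)\rfloor H_L(\wh H(p))$ is independent of the choice of $\wh H$ satisfying (\ref{2.30a}), and your ``cleaner packaging'' $\wh L^*H=(\wh H\circ\wh L)^*H_L=H_L$ is exactly the implicit second step. The only wobble is in the optional coordinate detour (the appeal to $h_0$ and to ``integrable jets'' is not quite the right justification --- what is really needed is again fibre-constancy of $\pi^\la_iy^i_\la-\cL$ from Lemma \ref{3.22}), but since you present the coordinate-free argument as the main one, this does not affect the proof.
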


\begin{proof}
Given a vector $v\in T_p\Pi$, the value $T\wh H(v)\rfloor H_L(\wh
H(p))$ is the same for all Hamiltonian maps $\wh H$ satisfying the
relation (\ref{2.30a}). Then the result follows from the relation
(\ref{4.9}).
\end{proof}

Theorem \ref{3.22'} enables us to relate the Euler -- Lagrange
equation for a semiregular Lagrangian $L$ with the covariant
Hamilton equations for Hamiltonian forms weakly associated to $L$
\cite{book,sardz93,book95}.

\begin{theorem}\label{3.23} \mar{3.23}
Let a section $r$ of $\Pi\to X$ be a  solution of the covariant
Hamilton equation (\ref{b4100a}) -- (\ref{b4100b}) for a
Hamiltonian form $H$ weakly associated to a semiregular Lagrangian
$L$. If $r$ lives in a Lagrangian constraint space $N_L$, a
section $s=\pi_{\Pi Y}\circ r$ of $Y\to X$ satisfies the Euler --
Lagrange equation (\ref{b327}), while its first order jet
prolongation
\be
\ol s=\wh H\circ r =J^1s
\ee
obeys the Cartan equation (\ref{b336a}) -- (\ref{b336b}).
\end{theorem}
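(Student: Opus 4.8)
The plan is to derive the two equations (Euler--Lagrange and Cartan) from the covariant Hamilton equation by combining the defining relations of a weakly associated Hamiltonian form with the fact that, on the constraint space, the Hamiltonian map $\wh H$ and the Legendre map $\wh L$ are inverse to one another in the sense of (\ref{2.30a}). First I would take the section $r$ of $\Pi\to X$ which, by hypothesis, solves (\ref{b4100a}) -- (\ref{b4100b}) and lives in $N_L$, and set $s=\pi_{\Pi Y}\circ r$ and $\ol s=\wh H\circ r$. By the equality (\ref{N10}), which holds for every solution of the covariant Hamilton equation, we have $J^1s=J^1(\pi_{\Pi Y}\circ r)=\wh H\circ r=\ol s$, so $\ol s$ is automatically the first order jet prolongation $J^1s$; this is the first thing to record.

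Next I would use (\ref{b4100a}), i.e. $r^i_\la=\dr^i_\la\cH\circ r$, together with the coordinate form of the Hamiltonian map (\ref{415}) to identify $p^\la_i\circ r$ with the Lagrangian momenta $\pi^\la_i$ evaluated along $J^1s$: since $r$ lives in $N_L$ and $\wh L\circ\wh H=\id$ on $N_L$ by (\ref{2.30a}), the relation (\ref{b481'}) gives $p^\la_i\circ r=\dr^\la_i\cL(x^\m,s^j,\dr^k_\m\cH\circ r)=\pi^\la_i\circ J^1s$. Then I would feed this into the second Hamilton equation (\ref{b4100b}), $p^\la_{\la i}\circ r=-\dr_i\cH\circ r$, and rewrite the left-hand side as $d_\la(p^\la_i\circ r)=d_\la\pi^\la_i$ along $J^1s$, while the right-hand side is $-\dr_i\cH\circ r=\dr_i\cL\circ J^1s$ by the first relation in (\ref{2.31}) (which holds on $N_L$ for weakly associated forms). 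This yields exactly $\dr_i\cL-d_\la\pi^\la_i=0$ along $J^1s$, which is the Euler--Lagrange equation (\ref{b327}).

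For the Cartan equation I would invoke Theorem \ref{3.22'}: since $L$ is semiregular, the Poincar\'e--Cartan form $H_L$ is the pull-back $\wh L^*H$, and by (\ref{4.9}) we have $H|_{N_L}=\wh H^*H_L|_{N_L}$. The Cartan equation in the form (\ref{C28}) asks that $\ol s^*(u\rfloor dH_L)=0$ for all vertical vector fields $u$ on $J^1Y\to X$. Using $\ol s=\wh H\circ r$ and pulling back through $\wh H$, the condition $r^*(v\rfloor dH)=0$ for all vertical $v$ on $\Pi\to X$ — which is just the covariant Hamilton equation in the form (\ref{N7}) — transports to $\ol s^*(u\rfloor dH_L)=0$ on the image $\wh H(N_L)$, because $H$ and $H_L$ agree up to pull-back there and $\wh H$ restricted to $N_L$ is an embedding onto $\wh H(N_L)$. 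Since $\ol s=J^1s$ takes values in $\wh H(N_L)$ (as $r$ lives in $N_L$), this is exactly what is needed.

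The main obstacle I anticipate is the bookkeeping in the second step: one must be careful that $d_\la$ applied to $p^\la_i\circ r$ really reproduces the total derivative $d_\la\pi^\la_i$ appearing in (\ref{b327}), which requires knowing not just that $p^\la_i\circ r=\pi^\la_i\circ J^1s$ pointwise but that this identity is compatible with differentiation along $X$ — this is where semiregularity and the constancy of $H_L$ on connected fibres $\wh L^{-1}(p)$ (Lemma \ref{3.22}) are genuinely used, since otherwise the identification of $\ol s$ with $J^1s$ and the momenta need not be consistent. The rest is a matter of carefully matching the coordinate expressions (\ref{b4100a}) -- (\ref{b4100b}), (\ref{b481'}), (\ref{2.31}) and (\ref{b327}), and transporting the intrinsic relations (\ref{4.9}) and (\ref{N7}) along $\wh H$.
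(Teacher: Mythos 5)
Your proposal reaches the right conclusion but by a genuinely different route from the paper's, and one step of it does not work as stated. The paper proves the Cartan equation \emph{first}: from $r(X)\subset N_L$ and $\wh L\circ\wh H=\id$ on $N_L$ it writes $r=\wh L\circ\ol s$, hence $J^1r=J^1\wh L\circ J^1\ol s$, and then invokes the operator identity $\cE_{\ol L}=(J^1\wh L)^*\cE_H$ relating the Euler--Lagrange--Cartan operator (\ref{2237}) to the Hamilton operator (\ref{3.9}); the vanishing of $\cE_H$ along $J^1r$ therefore pulls back to the vanishing of $\cE_{\ol L}$ along $J^1\ol s$, which is the Cartan equation, and the Euler--Lagrange equation then follows from $\ol s=J^1s$ (the relation (\ref{N10})). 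You go the other way: you establish $\ol s=J^1s$ first and derive the Euler--Lagrange equation by a direct coordinate computation from (\ref{b4100a})--(\ref{b4100b}), (\ref{b481'}) and (\ref{2.31}). That computation is sound and arguably more transparent than the paper's unproved pull-back identity; the only point you should make explicit is why (\ref{2.31}) holds on $N_L$ for a merely \emph{weakly} associated form: (\ref{b481}) holds only on $N_L$, so it cannot be differentiated naively in the $y^i$-direction (which need not be tangent to $N_L$); the clean argument is to differentiate the pull-back of (\ref{b481}) by $\wh L$, which holds identically on $J^1Y$, and use (\ref{2.30a}) to cancel the transverse terms.

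The genuine weak spot is your third paragraph. Transporting $r^*(v\rfloor dH)=0$ for vertical $v$ on $\Pi\to X$ into $\ol s^*(u\rfloor dH_L)=0$ for \emph{all} vertical $u$ on $J^1Y\to X$ requires every such $u$ along $\ol s$ to be of the form $T\wh H(v)$; since $\wh H$ is a morphism over $Y$, this is automatic for the $\dr_i$-directions but fails for the $\dr^\la_i$-directions whenever $\dr^i_\la\dr^j_\m\cH$ is degenerate --- which is exactly the non-regular situation the theorem is meant for. So that argument, as stated, does not deliver the component (\ref{b336a}) of the Cartan equation. Fortunately the step is superfluous: once you know $\ol s=J^1s$ is an integrable section and $s$ solves (\ref{b327}), the Cartan equation for $\ol s$ is automatic, because on integrable sections $\wh y^j_\m-y^j_\m=0$, so (\ref{b336a}) holds trivially and (\ref{b336b}) reduces to the Euler--Lagrange equation, as the paper observes right after (\ref{b336b}). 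With the Cartan paragraph replaced by this one-line remark, your proof is complete and independent of the paper's.
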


\begin{proof}
Put $\ol s=\wh H\circ r$. Since $r(X)\subset N_L$, then
\be
r=\wh L\circ \ol s, \qquad J^1r=J^1\wh L\circ J^1\ol s.
\ee
If $r$ is a solution of the covariant Hamilton equation, the
exterior form $\cE_H$ vanishes at points of $J^1r(X)$. Hence, the
pull-back form $\cE_{\ol L}=(J^1\wh L)^*\cE_H$ vanishes at points
$J^1\ol s(X)$. It follows that the section $\ol s$ of a jet bundle
$J^1Y\to X$ obeys the Cartan equation  (\ref{b336a}) --
(\ref{b336b}). By virtue of the relation (\ref{N10}), we have $\ol
s= J^1s$. Hence, $s$ is a solution of the Euler -- Lagrange
equation.
\end{proof}

The converse assertion is more intricate (\cite{book}, Proposition
4.5.11).

\begin{theorem}\label{3.24s} \mar{3.24s} Given a semiregular Lagrangian $L$, let
a section $\ol s$ of the jet bundle $J^1Y\to X$ be a solution of
the Cartan equation (\ref{b336a}) -- (\ref{b336b}). Let $H$ be a
Hamiltonian form weakly associated to $L$ so that the associated
Hamiltonian map satisfies a condition
\mar{2.36'}\beq
\wh H\circ \wh L\circ \ol s=J^1(\pi^1_0\circ\ol s). \label{2.36'}
\eeq
Then, a section
\be
r=\wh L\circ \ol s,\qquad r^\la_i =\pi^\la_i(x^\la,\ol s^j,\ol
s^j_\la), \qquad r^i=\ol s^i,
\ee
of a Legendre bundle $\Pi\to X$ is a solution of the Hamilton
equation (\ref{b4100a}) -- (\ref{b4100b}) for $H$.
\end{theorem}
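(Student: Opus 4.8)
The plan is to check that the section $r=\wh L\circ\ol s$, together with its first jet prolongation $J^1r$, lies in the kernel (\ref{b4100a}) -- (\ref{b4100b}) of the Hamilton operator $\cE_H$ (\ref{3.9}); note that $r^i=\ol s^i$ and $r^\la_i=\pi^\la_i(x^\m,\ol s^j,\ol s^j_\la)$, so $r(X)\subset N_L$ and the weak-association hypothesis is available along $r$. The first of the two equations, $\dr_\la r^i=\dr^i_\la\cH(r)$, is just the hypothesis (\ref{2.36'}) written in adapted coordinates: the jet prolongation of $\pi^1_0\circ\ol s:x\mapsto(x,\ol s^i(x))$ has $y^i_\la$-components $\dr_\la\ol s^i=\dr_\la r^i$, whereas $\wh H\circ\wh L\circ\ol s=\wh H\circ r$ has $y^i_\la$-components $\dr^i_\la\cH(r)$ by (\ref{415}); equating them gives (\ref{b4100a}) and, as a by-product, the identity $\dr^j_\m\cH(r)=\dr_\m\ol s^j$ that I will use below.

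It remains to prove (\ref{b4100b}), that is $\dr_\la r^\la_i=-\dr_i\cH(r)$, and for this I would compute the two sides separately along $\ol s$. Since $r^\la_i=\pi^\la_i\circ\ol s$, a chain-rule expansion identifies the total derivative $\dr_\la r^\la_i$ with $(\wh d_\la\pi^\la_i)\circ J^1\ol s$, so the second Cartan equation (\ref{b336b}), pulled back along $J^1\ol s$, reads
\[ \dr_\la r^\la_i=\dr_i\cL+(\dr_\la\ol s^j-\ol s^j_\la)\,\dr_i\pi^\la_j, \]
all terms evaluated at $(x^\m,\ol s^j,\ol s^j_\la)$. On the other side, by Theorem \ref{3.22'} the pull-back relation (\ref{2.32}) holds as an identity of functions on the whole of $J^1Y$,
\[ \cH(x^\m,y^j,\pi^\m_j)=\pi^\la_i y^i_\la-\cL; \]
differentiating it with respect to $y^i$, which is legitimate since $\cH$ is an honest smooth function on $\Pi$, and then evaluating at $(x^\m,\ol s^j,\ol s^j_\la)$, where $\pi^\m_j$ becomes $r^\m_j$ and $\dr^\m_j\cH(r)=\dr_\m\ol s^j$ by the first paragraph, the right-hand side collapses and leaves
\[ \dr_i\cH(r)=-\dr_i\cL-(\dr_\m\ol s^j-\ol s^j_\m)\,\dr_i\pi^\m_j. \]
Adding the last two displays gives $\dr_\la r^\la_i+\dr_i\cH(r)=0$, which is (\ref{b4100b}); together with (\ref{b4100a}) this shows $J^1r(X)\subset\Ker\cE_H$, so $r$ is a solution of the covariant Hamilton equation for $H$.

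The step that carries the real content, and the one in which semiregularity of $L$ is genuinely used, is the appeal to (\ref{2.32}) together with the right to differentiate it throughout $J^1Y$: this is precisely what Theorem \ref{3.22'} (and, behind it, Lemma \ref{3.22}, i.e. the connectedness of the fibres of $\wh L$) supplies, in contrast to the relation (\ref{b481}), which in the weakly associated case holds only over $N_L$. Everything else is index bookkeeping, namely verifying $(\wh d_\la\pi^\la_i)\circ J^1\ol s=\dr_\la(\pi^\la_i\circ\ol s)$ from $\wh d_\la=\dr_\la+\wh y^j_\la\dr_j+y^j_{\la\m}\dr^\m_j$ and keeping track of which slot of $\cH$ is being differentiated, and I expect no essential difficulty there. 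As a consistency check, one finds that the first Cartan equation (\ref{b336a}), restricted to $J^1\ol s$, follows automatically from (\ref{2.36'}) and the $y^i_\la$-derivative of (\ref{2.32}), so it adds nothing further to the argument.
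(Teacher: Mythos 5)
Your proof is correct. Note that the paper itself does not prove this theorem: it only remarks that ``the converse assertion is more intricate'' and cites \cite{book}, Proposition 4.5.11, so your argument has to stand on its own, and it does. Both of your displayed identities check out: the hypothesis (\ref{2.36'}) in coordinates is precisely $\dr^i_\la\cH\circ r=\dr_\la\ol s^i$, which yields (\ref{b4100a}) and the substitution rule used afterwards; the Cartan equation (\ref{b336b}) evaluated on $J^1\ol s$ gives your expression for $\dr_\la r^\la_i$; and the $y^i$-derivative of the identity (\ref{2.32}) --- which by Theorem \ref{3.22'} holds on all of $J^1Y$, this being exactly where semiregularity (via Lemma \ref{3.22}) enters --- gives the matching expression for $-\dr_i\cH\circ r$. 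You are also right to keep $\dr_\m\ol s^j$ and $\ol s^j_\m$ distinct throughout, since $\ol s$ is not assumed integrable (conflating them is the standard pitfall here), and your closing observation that (\ref{b336a}) is recovered from (\ref{2.36'}) together with the $y^i_\la$-derivative of (\ref{2.32}) is a correct consistency check on the bookkeeping.
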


Being restricted to solutions of Euler -- Lagrange equations,
Theorem \ref{3.24s} comes to the following.

\begin{theorem}\label{3.24} \mar{3.24} Given a semiregular Lagrangian $L$,
let a section $s$ of a fibre bundle $Y\to X$ be a solution of the
Euler -- Lagrange equation (\ref{b327}) (i.e., $J^1s$ is a
solution of the Cartan equation (\ref{b336a}) -- (\ref{b336b}),
and $s=\pi^1_0\circ J^1s$). Let $H$ be a Hamiltonian form weakly
associated to $L$, and let $H$ satisfy a relation
\mar{2.36}\beq
\wh H\circ \wh L\circ J^1s=J^1s.\label{2.36}
\eeq
Then a section $r=\wh L\circ J^1s$ of a fibre bundle $\Pi\to X$ is
a solution of the covariant Hamilton equation (\ref{b4100a}) --
(\ref{b4100b}) for $H$.
\end{theorem}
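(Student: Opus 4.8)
The plan is to deduce Theorem 3.24 as a direct specialization of Theorem 3.24s. First I would observe that the hypotheses of Theorem 3.24 are precisely those of Theorem 3.24s with the section $\ol s$ of $J^1Y\to X$ taken to be $\ol s = J^1s$, where $s$ is a solution of the Euler--Lagrange equation \eqref{b327}. The bridge is the standard fact, already recalled in the excerpt after Lemma \ref{3.22} and in Section 2, that a section $s$ of $Y\to X$ satisfies the Euler--Lagrange equation iff its jet prolongation $J^1s$ satisfies the Cartan equation \eqref{b336a}--\eqref{b336b} (this direction uses $\cE_{\ol L}|_{J^2Y}=\cE_L$ and the fact that $J^1s$ is an integrable section); and trivially $s=\pi^1_0\circ J^1s$. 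Hence $\ol s=J^1s$ is an admissible input for Theorem 3.24s.

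Next I would match the two regularity-type side conditions. In Theorem 3.24s the extra hypothesis on $H$ is $\wh H\circ\wh L\circ\ol s = J^1(\pi^1_0\circ\ol s)$. Substituting $\ol s=J^1s$ and using $\pi^1_0\circ J^1s = s$, the right-hand side becomes $J^1s$, so the condition reads $\wh H\circ\wh L\circ J^1s = J^1s$, which is exactly the relation \eqref{2.36} assumed in Theorem 3.24. So the hypotheses of Theorem 3.24s are met verbatim.

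Applying Theorem 3.24s then yields that $r=\wh L\circ\ol s=\wh L\circ J^1s$, with components $r^\la_i=\pi^\la_i(x^\la,s^j,s^j_\la)$ and $r^i=s^i$, is a solution of the covariant Hamilton equation \eqref{b4100a}--\eqref{b4100b} for $H$. Since $\ol s=J^1s$ this is precisely the section $r=\wh L\circ J^1s$ named in the statement of Theorem 3.24, and the proof is complete. The only point requiring any care — and the one I would flag as the main (if minor) obstacle — is to be explicit that the passage "$s$ solves Euler--Lagrange $\Rightarrow$ $J^1s$ solves Cartan" is legitimate here; this is immediate because $J^1s$ is by construction an integrable section of $J^1Y\to X$, on which the Cartan and Euler--Lagrange equations are equivalent, as noted in Section 2. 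Everything else is substitution into Theorem 3.24s.
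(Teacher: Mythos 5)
Your proposal is correct and matches the paper's own route: the paper introduces Theorem \ref{3.24} with the phrase ``Being restricted to solutions of Euler -- Lagrange equations, Theorem \ref{3.24s} comes to the following,'' i.e.\ it is obtained precisely by specializing Theorem \ref{3.24s} to $\ol s=J^1s$, with the condition (\ref{2.36'}) reducing to (\ref{2.36}) exactly as you describe. Your explicit check that $J^1s$ is an admissible (integrable) input for the Cartan equation is the only content the paper leaves implicit, and it is handled correctly.
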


\begin{example} \label{lagham10}
 Let $L=0$.
This Lagrangian is semiregular. Its Euler -- Lagrange equation
comes to the identity $0=0$. Every section $s$ of a fibre bundle
$Y\to X$ is a solution of this equation. Given a section $s$, let
$\G$ be a connection on $Y$ such that $s$ is its integral section.
The Hamiltonian form $H_\G$ (\ref{3.6}) is associated to $L=0$,
and the Hamiltonian map $\wh H_\G$ satisfies the relation
(\ref{2.36}). The corresponding Hamilton equation has a solution
\be
r=\wh L\circ J^1s, \qquad r^i=s^i, \qquad r^\la_i=0.
\ee
\end{example}

In view of Theorem \ref{3.24}, one may try to consider a set of
Hamiltonian forms associated to a semiregular Lagrangian $L$  in
order to exhaust all solutions of the Euler -- Lagrange equation
for $L$.

\begin{definition}
Let us say that a set of Hamiltonian forms $H$ weakly associated
to a semiregular Lagrangian $L$ is complete if, for each solution
$s$ of the Euler -- Lagrange equation for $L$, there exists a
solution $r$ of the covariant Hamilton equation for a Hamiltonian
form $H$ from this set such that $s=\pi_{\Pi Y}\circ r$.
\end{definition}

A complete family of Hamiltonian forms associated to a given
Lagrangian need not exist, or it fails to be defined uniquely. For
instance, Example \ref{lagham10} shows that the Hamiltonian forms
(\ref{3.6}) constitute a complete family associated to the zero
Lagrangian, but this family is not minimal.

By virtue of Theorem \ref{3.24}, a set of weakly associated
Hamiltonian forms is complete if, for every solution $s$ of the
Euler -- Lagrange equation for $L$, there is a Hamiltonian form
$H$ from this set which fulfills the relation (\ref{2.36}).

In a case of almost regular Lagrangians (Definition \ref{d11}),
one can formulate the following necessary and sufficient
conditions of the existence of weakly associated Hamiltonian
forms. An immediate consequence of Theorem \ref{jp} is the
following.

\begin{theorem} \label{mos08} \mar{mos08}
A Hamiltonian form $H$ weakly associated to an almost regular
Lagrangian $L$ exists iff the fibred manifold $J^1Y\to N_L$
(\ref{cmp12}) admits a global section.
\end{theorem}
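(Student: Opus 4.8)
The plan is to reduce the statement to a gluing problem for local Hamiltonian forms and then invoke Theorem \ref{jp} together with the defining property of an almost regular Lagrangian. For the ``only if'' direction, suppose $H$ is a Hamiltonian form weakly associated to $L$. By Theorem \ref{hammap} it yields the Hamiltonian map $\wh H:\Pi\to J^1Y$, and by the relation (\ref{2.30a}) (which holds on $N_L$ in the weakly associated case) the composition $\wh H\circ i_N:N_L\to J^1Y$ satisfies $\wh L\circ(\wh H\circ i_N)=i_N$. Hence $\wh H\circ i_N$ is a section of the fibred manifold $\wh L:J^1Y\to N_L$ (\ref{cmp12}), which is precisely the asserted global section. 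This direction is essentially a restatement of the projector property (\ref{b481'}).

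For the ``if'' direction, let $\Psi:N_L\to J^1Y$ be a global section of (\ref{cmp12}), so $\wh L\circ\Psi=\id_{N_L}$. The idea is to extend $\Psi$ to a Hamiltonian map $\Phi:\Pi\to J^1Y$ satisfying the hypothesis (\ref{ps130}) of Theorem \ref{jp}, namely $\wh L\circ\Phi\circ\wh L=\wh L$; then the Hamiltonian form $H=H_\Phi+\Phi^*L$ of (\ref{ps131}) is weakly associated to $L$ by that theorem. To build $\Phi$, I would first observe that, since $N_L$ (\ref{ps4}) is a closed imbedded subbundle of the affine bundle $\Pi\to Y$, one can choose (using paracompactness of $Y$ and a partition of unity) a retraction-type bundle morphism, or more directly a splitting $\Pi\cong N_L\op\times_Y W$ over $Y$ for a suitable complementary vector bundle $W\to Y$, at least locally, and then extend $\Psi\circ(\text{projection to }N_L)$ across the fibres. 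A cleaner route: set $\Phi=\Psi\circ(\wh L\circ\Phi_0)$ where $\Phi_0$ is any Hamiltonian map whose values land appropriately — but the genuinely safe construction is to note that $\wh L\circ\Phi\circ\wh L=\wh L$ only constrains $\Phi$ on $N_L$, where it must equal $\Psi$ modulo the kernel of $T\wh L$; off $N_L$ the map $\Phi$ is free. So it suffices to extend the section $\Psi$ of $J^1Y\to N_L$ to a global bundle morphism $\Pi\to J^1Y$ over $Y$, and such an extension exists because $J^1Y\to Y$ is an affine bundle (hence its sections over a closed imbedded submanifold extend, by the affine-bundle analogue of the Tietze/Whitney extension argument using a partition of unity on $Y$). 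Any such extension $\Phi$ automatically satisfies (\ref{ps130}): evaluating on $\wh L(J^1Y)=N_L$ gives $\wh L\circ\Phi|_{N_L}=\wh L\circ\Psi=\id_{N_L}$, whence $\wh L\circ\Phi\circ\wh L=\wh L$.

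The main obstacle is the extension step: producing a globally defined Hamiltonian map $\Phi:\Pi\to J^1Y$ restricting to the given $\Psi$ on $N_L$. The subtlety is that $\Phi$ must be fibred over $Y$ and of the affine form (\ref{2.7}), i.e. $\Phi=dx^\la\ot(\dr_\la+\Phi^i_\la\dr_i)$ with $\Phi^i_\la$ smooth functions of $(x^\m,y^j,p^\m_j)$; since $J^1Y\to Y$ is an affine bundle modelled over the vector bundle (\ref{cc9}), the obstruction to extending a section from the closed imbedded subbundle $N_L\to\Pi$ over all of $\Pi$ vanishes — one picks a global connection $\G$ on $Y\to X$, uses the associated zero-type Hamiltonian map $\wh\G$ (\ref{b420}) as a reference, writes $\Psi-\wh\G|_{N_L}$ as a section of the pull-back of (\ref{cc9}) over $N_L$, extends that vector-bundle section over $\Pi$ by a partition of unity argument on $Y$, and adds $\wh\G$ back. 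Once $\Phi$ is in hand, Theorem \ref{jp} finishes the proof with no further computation.
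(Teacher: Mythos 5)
Your proof is correct and takes essentially the same route as the paper: the paper's entire argument is the observation that a global section of $J^1Y\to N_L$ extends to a Hamiltonian map $\Phi$ obeying (\ref{ps130}), after which Theorem \ref{jp} applies, and your affine-bundle extension (reference map $\wh\G$ plus a partition-of-unity extension of $\Psi-\wh\G|_{N_L}$) is a valid way to supply the extension the paper leaves implicit. Your converse direction, taking $\Psi=\wh H\circ i_N$ as the global section via the projector property, is likewise exactly what the paper does (cf. the proof of Theorem \ref{cmp22}).
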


\begin{proof}
A global section of $J^1Y\to N_L$ can be extended to a Hamiltonian
map $\Phi:\Pi\to J^1Y$ which obeys the relation (\ref{ps130}).
\end{proof}

In particular, on an open neighborhood $U\subset\Pi$ of each point
$p\in N_L\subset\Pi$, there exists a complete set of local
Hamiltonian forms weakly associated to an almost regular
Lagrangian $L$. Moreover, one can always construct a complete set
of local Hamiltonian forms associated to $L$ (\cite{book},
Proposition 4.5.14). At the same time, a complete set of
associated Hamiltonian forms may exists when a Lagrangian is not
necessarily semiregular (\cite{book}, Example 4.5.12).

Given a global section $\Psi$ of a fibred manifold
\mar{cmp12'}\beq
\wh L: J^1Y\to N_L, \label{cmp12'}
\eeq
let us consider the pull-back form
\mar{b4300}\beq
H_N=\Psi^*H_L=i_N^*H \label{b4300}
\eeq
on $N_L$ called the constrained Hamiltonian form.  By virtue of
Lemma \ref{3.22}, it does not depend on the choice of a section of
the fibred manifold (\ref{cmp12'}) and, consequently, $H_L=\wh
L^*H_N$. For sections $r$ of a fibre bundle $N_L\to X$, one can
write the constrained Hamilton equation
\mar{N44}\beq
r^*(u_N\rfloor dH_N) =0, \label{N44}
\eeq
where $u_N$ is an arbitrary vertical vector field on $N_L\to X$.
This equation possesses the following important properties.

\begin{theorem} \label{cmp22} \mar{cmp22} For any Hamiltonian form $H$ weakly
associated to an almost regular Lagrangian $L$, every solution $r$
of the covariant Hamilton equation which lives in a Lagrangian
constraint space $N_L$ is a solution of the constrained Hamilton
equation (\ref{N44}).
\end{theorem}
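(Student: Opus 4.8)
The plan is to exploit the fact, established in Theorem~\ref{3.22'}, that for a semiregular (hence in particular almost regular) Lagrangian $L$ the restriction $i_N^*H$ of any weakly associated Hamiltonian form $H$ to $N_L$ coincides with the constrained Hamiltonian form $H_N=\Psi^*H_L$, independently of the choice of global section $\Psi$ of the fibred manifold $\wh L:J^1Y\to N_L$. Thus the one-form $dH_N$ on $N_L$ is simply $i_N^*dH$. First I would take a solution $r$ of the covariant Hamilton equation (\ref{b4100a}) -- (\ref{b4100b}) for $H$ that lives in $N_L$, i.e.\ $r:X\to N_L\hookrightarrow\Pi$ with $i_N\circ r_N=r$ for a section $r_N$ of $N_L\to X$. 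By the remark containing (\ref{N7}), the covariant Hamilton equation for $r$ is equivalent to $r^*(u\rfloor dH)=0$ for every vertical vector field $u$ on $\Pi\to X$.

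The key step is then a restriction-of-vector-fields argument. Given an arbitrary vertical vector field $u_N$ on $N_L\to X$, I would extend it (at least locally, or using that $N_L\to\Pi$ is a closed imbedded subbundle so that such extensions patch via a partition of unity) to a vertical vector field $u$ on $\Pi\to X$ that is tangent to $N_L$ along $N_L$; along $N_L$ one has $Ti_N\circ u_N=u\circ i_N$. Then, using $i_N\circ r_N=r$ and $H_N=i_N^*H$,
\[
r_N^*(u_N\rfloor dH_N)=r_N^*(u_N\rfloor i_N^*dH)=r_N^*\,i_N^*(u\rfloor dH)=r^*(u\rfloor dH)=0,
\]
where the middle equality is naturality of contraction under the pull-back along the imbedding $i_N$ applied to the $u$-related pair $(u_N,u)$. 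Since $u_N$ was arbitrary, $r_N$ satisfies the constrained Hamilton equation (\ref{N44}).

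The main obstacle I anticipate is the extension step: one must check that a vertical vector field on the submanifold $N_L\to X$ genuinely extends to a vertical vector field on $\Pi\to X$ that is $i_N$-related to it along $N_L$. This is where the almost regularity hypothesis is used in an essential way, namely that $N_L$ is a \emph{closed imbedded} subbundle $i_N:N_L\to\Pi$ (\ref{ps4}); for a closed imbedded submanifold such local extensions exist and can be glued, and near points of $\Pi\setminus N_L$ one may take $u$ to be zero. Since the identity $r^*(u\rfloor dH)=0$ holds for \emph{all} vertical $u$ on $\Pi\to X$, the particular choice of extension is irrelevant, so the argument closes. One should also note, for completeness, that $dH_N$ is well defined on $N_L$ precisely because of Lemma~\ref{3.22}/Theorem~\ref{3.22'}, so the equation (\ref{N44}) makes sense and the above computation is legitimate.
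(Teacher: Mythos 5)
Your proposal is correct and follows essentially the same route as the paper: both identify $H_N=i_N^*H$ via Theorem \ref{3.22'} and then observe that the constrained Hamilton equation (\ref{N44}), which tests $dH_N$ only against vertical vector fields on $N_L\to X$, is a weaker condition than the covariant Hamilton equation in the form $r^*(u\rfloor dH)=0$ for all vertical $u$ on $\Pi\to X$. Your explicit extension-of-vector-fields argument merely fills in the step the paper dismisses as obvious, so there is nothing to add.
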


\begin{proof} Such a Hamiltonian form $H$ defines
a global section $\Psi=\wh H\circ i_N$ of the fibred manifold
(\ref{cmp12'}). Since $H_N=i^*_NH$ due to the relation
(\ref{2.32}), the constrained Hamilton equation can be written as
\mar{N44'}\beq
r^*(u_N\rfloor di^*_NH)=r^*(u_N\rfloor dH|_{N_L}) =0. \label{N44'}
\eeq
Note that this equation differs from the Hamilton equation
(\ref{N7}) restricted to $N_L$. This reads
\mar{cmp10}\beq
r^*(u\rfloor dH|_{N_L}) =0, \label{cmp10}
\eeq
where $r$ is a section of $N_L\to X$ and $u$ is an arbitrary
vertical vector field on $\Pi\to X$. A solution $r$ of the
equation (\ref{cmp10}) obviously satisfies the weaker condition
(\ref{N44'}).
\end{proof}

\begin{theorem}\label{3.02} \mar{3.02}
The constrained Hamilton equation (\ref{N44}) is equivalent to the
Hamilton -- De Donder equation (\ref{N46}).
\end{theorem}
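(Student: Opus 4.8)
The plan is to exhibit a fibrewise diffeomorphism between the constrained Legendre bundle $N_L$ and the De Donder space $Z_L$ which intertwines the constrained Hamiltonian form $H_N$ with the pull-back De Donder form $\Xi_L$, and then observe that this diffeomorphism carries solutions of one equation to solutions of the other.

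\medskip

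First I would recall the relevant maps. The homogeneous Legendre map $\wh H_L: J^1Y\to Z_Y$ (\ref{N41}) has range $Z_L$ (\ref{23f10}), and its composition with the projection $\pi_{Z\Pi}$ is the Legendre map $\wh L$ (see (\ref{m11'})). Hence $\wh H_L$ descends to a fibred morphism over $\Pi$ from the image of $\wh L$, namely a morphism $N_L\to Z_L$; I would check it is a diffeomorphism. It is surjective by construction, and it is injective because on each fibre the affine coordinate $p$ on $Z_Y$ is already determined by the equation $p = \cL - y^i_\m\pi^\m_i$ evaluated on $\wh L^{-1}$, which by Lemma~\ref{3.22} is constant on the connected fibres of $\wh L:J^1Y\to N_L$; so the value of $p$ depends only on the point of $N_L$. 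Call this diffeomorphism $\vr:N_L\to Z_L$, so that $\wh H_L = \vr\circ\wh L$ and $i_L\circ\vr$ is the natural map $N_L\hookrightarrow Z_Y$.

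\medskip

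Next I would identify the forms. By definition $H_N=\Psi^*H_L$ for a global section $\Psi$ of $\wh L:J^1Y\to N_L$, and by the equality (\ref{cmp14}), $H_L=\wh H_L^*(i_L^*\Xi_Y) = \wh H_L^*\Xi_L$. Therefore $H_N = \Psi^*\wh H_L^*\Xi_L = (\wh H_L\circ\Psi)^*\Xi_L = (\vr\circ\wh L\circ\Psi)^*\Xi_L = \vr^*\Xi_L$, using $\wh L\circ\Psi=\id_{N_L}$. Thus $\vr$ carries $H_N$ to $\Xi_L$ (up to the identification $i_L^*$), and $d H_N = \vr^* d\Xi_L$. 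Since $\vr$ is a diffeomorphism over $X$ (indeed over $Y$), it sends vertical vector fields on $N_L\to X$ bijectively onto vertical vector fields on $Z_L\to X$, and it carries sections $r$ of $N_L\to X$ to sections $\ol r=\vr\circ r$ of $Z_L\to X$. Then for any vertical vector field $u_N$ on $N_L$ with $\vr$-related field $u$ on $Z_L$,
\be
r^*(u_N\rfloor dH_N) = r^*(u_N\rfloor \vr^*d\Xi_L) = (\vr\circ r)^*(u\rfloor d\Xi_L) = \ol r^*(u\rfloor d\Xi_L),
\ee
so $r$ solves (\ref{N44}) iff $\ol r$ solves (\ref{N46}). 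Conversely any section of $Z_L\to X$ is of the form $\vr\circ r$ for a unique section $r$ of $N_L\to X$. This establishes the equivalence.

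\medskip

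The main obstacle I anticipate is verifying carefully that $\vr:N_L\to Z_L$ is well defined and a diffeomorphism — in particular that the fibre coordinate $p$ of the homogeneous bundle, though defined on $J^1Y$ via $\cL - y^i_\m\pi^\m_i$, genuinely descends to $N_L$. This is exactly where semiregularity (connectedness of the fibres of $\wh L:J^1Y\to N_L$) and Lemma~\ref{3.22} are indispensable: the Poincar\'e--Cartan form $H_L$, and hence its $p$-component, is constant along those fibres, so $\vr$ is single-valued; smoothness and the diffeomorphism property then follow because $\wh H_L=\vr\circ\wh L$ with $\wh L:J^1Y\to N_L$ a surjective submersion and $\wh H_L:J^1Y\to Z_L$ a surjective submersion onto $Z_L$ with the same fibres. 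Everything else is a formal manipulation with pull-backs of the canonical forms.
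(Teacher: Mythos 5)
Your proposal is correct and follows essentially the same route as the paper: the paper likewise builds the isomorphism $N_L\to Z_L$ as $\wh H_L\circ\Psi$ for a global section $\Psi$ of $\wh L:J^1Y\to N_L$, uses Lemma \ref{3.22} to show it is well defined (independent of $\Psi$) and a bundle isomorphism over $Y$ with $\pi_{Z\Pi}\circ\wh H_L\circ\Psi=\id_{N_L}$, and then combines (\ref{cmp14}) with (\ref{b4300}) to get $H_N=(\wh H_L\circ\Psi)^*\Xi_L$, from which the equivalence of the two equations follows by pull-back exactly as you spell out. Your map $\vr$ is precisely the paper's $\wh H_L\circ\Psi$, and your extra care about why the fibre coordinate $p$ descends to $N_L$ and why the diffeomorphism carries vertical fields and sections across is a welcome elaboration of what the paper leaves implicit.
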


\begin{proof}
It is readily observed that
\be
\wh L=\pi_{Z\Pi}\circ \wh H_L.
\ee
Hence, the projection $\pi_{Z\Pi}$ (\ref{b418'}) yields a
surjection of $Z_L$ onto $N_L$. Given a section $\Psi$ of the
fibred manifold (\ref{cmp12'}), we have a morphism
\be
\wh H_L\circ \Psi: N_L\to Z_L.
\ee
By virtue of Lemma (\ref{3.22}), this is a surjection such that
\be
\pi_{Z\Pi}\circ\wh H_L\circ \Psi=\id N_L.
\ee
Hence, $\wh H_L\circ \Psi$ is a bundle isomorphism over $Y$ which
is independent of the choice of a global section $\Psi$. Combining
(\ref{cmp14}) and (\ref{b4300}) results in
\be
H_N=(\wh H_L\circ \Psi)^*\Xi_L
\ee
that leads to a desired equivalence.
\end{proof}

This proof gives something more. Namely, since $Z_L$ and $N_L$ are
isomorphic, the homogeneous Legendre map $\wh H_L$ (\ref{N41})
fulfils the conditions of Theorem \ref{ddd}. Then combining
Theorem \ref{ddd} and Theorem \ref{3.02}, we obtain the following.

\begin{theorem}\label{3.01} \mar{3.01} Let $L$ be an almost regular Lagrangian
such that the fibred manifold (\ref{cmp12'}) has a global section.
A section $\ol s$ of the jet bundle $J^1Y\to X$ is a solution of
the Cartan equation (\ref{C28}) iff $\wh L\circ \ol s$ is a
solution of  the constrained Hamilton equation (\ref{N44}).
\end{theorem}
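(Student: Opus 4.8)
The plan is to derive the assertion by concatenating Theorem~\ref{ddd} with (the proof of) Theorem~\ref{3.02}, the bridge between them being the isomorphism $Z_L\cong N_L$ over $Y$ exhibited there. Before the two equivalences can be chained, one first has to check that the homogeneous Legendre map $\wh H_L$ (\ref{N41}) satisfies the hypothesis of Theorem~\ref{ddd}, namely that it is a submersion onto $Z_L$.

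That verification runs as follows. Since $L$ is almost regular, the Legendre map $\wh L:J^1Y\to N_L$ (\ref{cmp12}) is a fibred manifold with connected fibres. In the proof of Theorem~\ref{3.02} it is shown that $\pi_{Z\Pi}$ (\ref{b418'}) restricts to a diffeomorphism $Z_L\to N_L$, with inverse $\wh H_L\circ\Psi$ for any global section $\Psi$ of the fibred manifold (\ref{cmp12'}). Because $\wh L=\pi_{Z\Pi}\circ\wh H_L$, the map $\wh H_L:J^1Y\to Z_L$ is the composition of the fibred manifold $\wh L$ with the diffeomorphism $(\pi_{Z\Pi}|_{Z_L})^{-1}$, hence itself a fibred manifold, in particular a submersion. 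So Theorem~\ref{ddd} applies to $\wh H_L$.

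Now I would chain the equivalences. By Theorem~\ref{ddd}, a section $\ol s$ of $J^1Y\to X$ solves the Cartan equation (\ref{C28}) iff $\wh H_L\circ\ol s$ solves the Hamilton--De Donder equation (\ref{N46}) on $Z_L$. On the other hand, the isomorphism $\wh H_L\circ\Psi:N_L\to Z_L$ (with inverse $\pi_{Z\Pi}|_{Z_L}$), together with the relation $H_N=(\wh H_L\circ\Psi)^*\Xi_L$ from the proof of Theorem~\ref{3.02}, shows that a section $\ol r$ of $Z_L\to X$ solves (\ref{N46}) iff the section $\pi_{Z\Pi}\circ\ol r$ of $N_L\to X$ solves the constrained Hamilton equation (\ref{N44}); indeed, under a diffeomorphism over $X$ vertical vector fields correspond and the pull-back of $\Xi_L$ is $H_N$, so the two equations match. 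Applying this to $\ol r=\wh H_L\circ\ol s$ and using $\pi_{Z\Pi}\circ\wh H_L=\wh L$ gives that $\wh H_L\circ\ol s$ solves (\ref{N46}) iff $\wh L\circ\ol s$ solves (\ref{N44}). Combining with the equivalence of the preceding paragraph yields the claim in both directions, since each link is an equivalence.

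The step I expect to carry the real content is the identification of which section of $N_L\to X$ corresponds, under $Z_L\cong N_L$, to $\wh H_L\circ\ol s$ --- namely $\wh L\circ\ol s=\pi_{Z\Pi}\circ\wh H_L\circ\ol s$ --- together with the verification that $\wh H_L$ is a submersion; both rest on the isomorphism $Z_L\cong N_L$ established in the proof of Theorem~\ref{3.02}, which itself relies on Lemma~\ref{3.22} (constancy of the Poincar\'e--Cartan form $H_L$ on the connected fibres of $\wh L$). Once that machinery is in place, the theorem is a formal concatenation of Theorems~\ref{ddd} and \ref{3.02}.
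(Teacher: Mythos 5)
Your proposal is correct and follows essentially the same route as the paper, which derives Theorem \ref{3.01} precisely by observing that the isomorphism $Z_L\cong N_L$ from the proof of Theorem \ref{3.02} makes $\wh H_L$ satisfy the hypothesis of Theorem \ref{ddd}, and then combining Theorems \ref{ddd} and \ref{3.02}. Your write-up merely makes explicit the bookkeeping (that $\wh H_L\circ\ol s$ corresponds to $\wh L\circ\ol s$ under the isomorphism) that the paper leaves implicit.
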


Theorem \ref{3.01} also is a corollary of Lemma \ref{cmp84} below.
The constrained Hamiltonian form $H_N$ (\ref{b4300}) defines the
constrained Lagrangian
\mar{cmp81}\beq
L_N=h_0(H_N)=(J^1i_N)^*L_H \label{cmp81}
\eeq
on the first order jet manifold $J^1N_L$ of a fibre bundle $N_L\to
X$.

\begin{lemma} \label{cmp84} \mar{cmp84}
There are the relations
\mar{cmp85}\beq
\ol L=(J^1\wh L)^*L_N, \qquad L_N=(J^1\Psi)^*\ol L, \label{cmp85}
\eeq
where $\ol L$ is the first order Lagrangian (\ref{cmp80}).
\end{lemma}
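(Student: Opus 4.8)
\textbf{Proof proposal for Lemma \ref{cmp84}.}

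The plan is to verify both relations of (\ref{cmp85}) by a direct pull-back computation, exploiting that all three Lagrangians in play --- $\ol L$ on $J^1J^1Y$, $L_H$ on $J^1\Pi$ (with $H$ any Hamiltonian form weakly associated to $L$ and restricted via $i_N$), and $L_N$ on $J^1N_L$ --- arise as $h_0$-projections of Lepage-equivalent forms, so that the identities reduce to compatibility of the horizontal projections with the Legendre-type maps. First I would recall that $L_N=h_0(H_N)=(J^1i_N)^*L_H$ by its defining equation (\ref{cmp81}), and that $\ol L=\wh h_0(H_L)$ by (\ref{cmp80}), while by Theorem \ref{cmp110} and the relation (\ref{2.32}) we have $H_L=\wh L^*H_N$ on $N_L$ (equivalently $H_L=\wh L^*H$ restricted appropriately). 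The key structural input is the factorization of the jet prolongation of the Legendre map: $J^1\wh L:J^1J^1Y\to J^1\Pi$ lands, on sections coming from genuine jets, in $J^1N_L\subset J^1\Pi$, and $\pi^1_0$-equivariance together with the identity $\wh L=\pi_{Z\Pi}\circ\wh H_L$ from the proof of Theorem \ref{3.02} lets me commute $h_0$ past these maps.

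For the first relation $\ol L=(J^1\wh L)^*L_N$, I would compute $(J^1\wh L)^*L_N=(J^1\wh L)^*(J^1i_N)^*L_H=(J^1(i_N\circ\wh L))^*L_H$, and since $i_N\circ\wh L:J^1Y\to\Pi$ is just the Legendre map composed with the inclusion, this equals $(J^1\wh L)^*L_H$ viewing $\wh L$ into $\Pi$. Now $L_H=h_0(H)$ with $h_0$ the horizontal projection over $X$, and pulling back along $J^1\wh L$ intertwines $h_0$ on $J^1\Pi$ with $\wh h_0$ on $J^1J^1Y$ because the extra jet coordinates $\wh y^i_\la$ are precisely the images under $\wh L$-prolongation of the repeated-jet variables; explicitly, $p^\la_i\circ\wh L=\p^\la_i$ and $y^i_\la\circ(J^1\wh L\text{-source})=\wh y^i_\la$, so $h_0(p^\la_i\,dy^i\w\om_\la-\cH\om)$ pulls back to $(\p^\la_i\wh y^i_\la-\cH(x,y,\p))\om$, which by the weak-association coordinate identity (\ref{b481}) equals $(\cL+(\wh y^i_\la-y^i_\la)\p^\la_i)\om=\wh h_0(H_L)=\ol L$. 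For the second relation, I would apply $(J^1\Psi)^*$ to the first: $(J^1\Psi)^*\ol L=(J^1\Psi)^*(J^1\wh L)^*L_N=(J^1(\wh L\circ\Psi))^*L_N$, and since $\wh L\circ\Psi=\id_{N_L}$ by the defining property of the section $\Psi$ of the fibred manifold (\ref{cmp12'}), this is simply $L_N$.

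The main obstacle I expect is bookkeeping the horizontal-projection operators correctly across the three jet bundles --- in particular making precise that $(J^1\wh L)^*\circ h_0=\wh h_0\circ(\text{appropriate pull-back})$, i.e. that prolonging the Legendre map sends the contact structure of $J^1\Pi$ into the contact structure of $J^1J^1Y$ in the way that identifies $h_0(dy^i_\m)$-type terms with $\wh h_0(dy^i_\m)$-type terms. This is where Lemma \ref{3.22} (constancy of $H_L$ on the connected fibres $\wh L^{-1}(p)$, which underlies well-definedness of $H_N$) and the constant-rank affine structure of the Hamilton operator are really used: they guarantee that $H_N$, hence $L_N$, does not depend on the choice of $\Psi$, so that the composite $\wh L\circ\Psi=\id$ trick is legitimate and the two relations in (\ref{cmp85}) are genuinely inverse pull-back statements rather than holding only on some section. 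Everything else is a routine coordinate check using (\ref{b330}), (\ref{m11}), (\ref{303}), (\ref{cmp80}), (\ref{Q3}), (\ref{cmp81}) and (\ref{b481}).
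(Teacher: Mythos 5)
Your argument is correct. The paper in fact states Lemma \ref{cmp84} without proof, so there is nothing to compare against; your computation is the natural one and is consistent with how the surrounding results are established. The essential points are exactly the ones you identify: the coordinate form of the jet prolongation $(p^\la_i,y^i_\m,p^\la_{\m i})\circ J^1\wh L=(\p^\la_i,\wh y^i_\m,\wh d_\m\p^\la_i)$ given at the start of Section 9, which turns $(J^1\wh L)^*L_H$ into $(\p^\la_i\wh y^i_\la-\cH(x,y,\p))\om$; the identity (\ref{2.32}) $\cH(x^\m,y^j,\p^\m_j)=\p^\la_iy^i_\la-\cL$ on $N_L$ (valid here since almost regular implies semiregular, and Theorem \ref{3.22'} makes $H_N$, hence $L_N$, independent of the choice of weakly associated $H$ and of $\Psi$); and functoriality $J^1(\wh L\circ\Psi)=J^1\wh L\circ J^1\Psi=\id$ for the second relation. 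Your closing remarks about intertwining $h_0$ with $\wh h_0$ are already subsumed by the explicit coordinate formula for $J^1\wh L$, so no further contact-structure argument is needed.
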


The Euler -- Lagrange equation for the constrained Lagrangian
$L_N$ (\ref{cmp81}) is equivalent to the constrained Hamilton
equation (\ref{N44}) and, by virtue of Lemma \ref{cmp84}, is
quasi-equivalent to the Cartan equation. At the same time, the
Cartan equation of a non-regular Lagrangian system may contain an
additional freedom in comparison with the constrained Hamilton
equation (Section 12).

\section{Lagrangian and Hamiltonian conservation laws}

In order to study symmetries of PS Hamiltonian theory, let us use
the fact that the Hamiltonian form $H$ (\ref{b418}) is a
Poincar\'e -- Cartan form for the Lagrangian $L_H$ (\ref{Q3}) and
that the covariant Hamilton equation for $H$ is the Euler --
Lagrange equation for $L_H$.

 We restrict our consideration to
classical symmetries defined by projectable vector fields
\mar{ps40}\beq
u=u^\la\dr_\la + u^i\dr_i \label{ps40}
\eeq
on a fibre bundle $Y\to X$.

Let us start with Lagrangian conservation laws in first order
Lagrangian formalism on a fibre bundle $Y\to X$
\cite{cmp05,book09,book13}.

The vector field $u$ (\ref{ps40}) admits the canonical
decomposition into the horizontal and vertical parts
\mar{23f40}\beq
u=u_H + u_V= (u^\la\dr_\la + y^i_\la\dr^\la_i) + (u^i\dr_i -
y^i_\la\dr^\la_i) \label{23f40}
\eeq
over $J^1Y$ and the first order jet prolongation
\mar{23f41}\beq
J^1u= u^\la\dr_\la + u^i\dr_i + (d_\la u^i - y^i_\m d_\la
u^\m)\dr^\la_i  \label{23f41}
\eeq
onto $J^1Y$.

Given the first order Lagrangian $L$ (\ref{23f2}), the global
variational decomposition (\ref{+421}) leads to the corresponding
splitting of the Lie derivative $\bL_{J^1u}L$ of $L$ along $J^1u$
(\ref{23f41}):
\mar{23f42}\ben
&& \bL_{J^1u}L= u_V\rfloor\cE_L + d_H(h_0(u\rfloor H_L)),
\label{23f42}\\
&& \dr_\la u^\la\cL +[u^\la\dr_\la+
u^i\dr_i +(d_\la u^i -y^i_\m\dr_\la u^\m)\dr^\la_i]\cL =  \nonumber\\
&& \qquad (u^i-y^i_\la u^\la)\cE_i
- d_\la[\pi^\la_i(u^\m y^i_\m -u^i) -u^\la\cL], \nonumber
\een
where $\Xi_L=H_L$ is the Poincar\'e -- Cartan form (\ref{303}). If
$u$ is an exact symmetry of $L$, i.e. $\bL_{J^1u}L=0$ we obtain a
weak conservation law
\mar{K4}\beq
0\ap - d_\la[\pi^\la_i(u^\m y^i_\m-u^i )-u^\la\cL] \label{K4}
\eeq
of a symmetry current
\mar{Q30}\beq
\cJ_u =[\pi^\la_i(u^\m y^i_\m-u^i )-u^\la\cL]\om_\la \label{Q30}
\eeq
along a vector field $u$ on the shell $\cE_L=0$ (\ref{b327}).

The weak conservation law (\ref{K4}) leads to a differential
conservation law
\be
\dr_\la(\cJ^\la_u\circ s)=0
\ee
on solutions $s$ of the Euler -- Lagrange equation (\ref{b327}).

It is readily observed that the symmetry current $\cJ_u$
(\ref{Q30}) is linear in a vector field $u$. Therefore, one can
consider a superposition of symmetry currents
\be
\cJ_u+\cJ_{u'}=\cJ_{u+u'}, \qquad \cJ_{cu}=c\cJ_u, \qquad
c\in\mathbb R,
\ee
and a superposition of weak conservation laws (\ref{K4})
associated to different symmetries $u$.

For instance, let $u=u^i\dr_i$ be  a vertical vector field on
$Y\to X$. If it is a symmetry of $L$, the weak conservation law
(\ref{K4}) takes a form
\mar{23f46}\beq
0\ap -d_\la(\pi^\la_i u^i). \label{23f46}
\eeq
It is called the Noether conservation law  of the Noether current
\mar{b374}\beq
\cJ_u =-\pi^\la_i u^i\om_\la \label{b374}
\eeq
along a Noether symmetry $u$.

Given the connection $\G$ (\ref{ps9}) on a fibre bundle $Y\to X$,
a vector field $\tau$ on $X$ gives rise to the projectable vector
field
\mar{b1.85}\beq
\G\tau =\tau^\la(\dr_\la +\G^i_\la\dr_i) \label{b1.85}
\eeq
on $Y$. The corresponding symmetry current (\ref{Q30}) along
$\G\tau$ reads
\mar{cc205'}\beq
\cJ_\G  =\tau^\m \cJ_\G{}^\la{}_\m=\tau^\m(\pi^\la_i(y_\m^i
-\G^i_\m)-\dl^\la_\m\cL)\om_\la. \label{cc205'}
\eeq
Its coefficients $\cJ_\G{}^\la{}_\m$ are components of the tensor
field
\mar{23f55}\beq
\cJ_\G=\cJ_\G{}^\la{}_\m dx^\m\ot\om_\la, \qquad \cJ_\G{}^\la{}_\m
=\pi^\la_i(y_\m^i-\G^i_\m)-\dl^\la_\m\cL,\label{23f55}
\eeq
called the energy-momentum tensor relative to a connection $\G$
\cite{book09,sard97,book13}. If $\G\tau$ (\ref{b1.85}) is a
symmetry of a Lagrangian $L$, we have the energy-momentum
conservation law
\mar{504}\beq
0 \ap  - d_\la [\pi^\la_i\tau^\m( y^i_\m-
\G^i_\m)-\dl^\la_\m\tau^\m\cL]. \label{504}
\eeq

Turn now to PS Hamiltonian formalism on a Legendre bundle $\Pi$.
Given the projectable vector field $u$ (\ref{ps40}) on $Y\to X$,
it gives rise to a vector field
\mar{Q4}\beq
\wt u=u^\m\dr_\m + u^i\dr_i +( - \dr_i u^j p^\la_j -\dr_\m u^\m
p^\la_i +\dr_\m u^\la p^\m_i)\dr^i_\la \label{Q4}
\eeq
on a Legendre bundle $\Pi\to Y$ and to a vector field
\mar{Q4'}\beq
J\wt u=\wt u +J^1u \label{Q4'}
\eeq
on $\Pi\op\times_Y J^1Y$. Then we have
\mar{b4180}\beq
\bL_{\wt u}H= \bL_{J\wt u}L_H=(-u^i\dr_i\cH -\dr_\m(u^\m\cH)
-u^\la_i\dr^i_\la\cH +p^\la_i\dr_\la u^i)\om. \label{b4180}
\eeq
It follows that a Hamiltonian form $H$ and a Lagrangian  $L_H$
have the same classical symmetries.

Let us  apply the first variational formula (\ref{23f42}) to the
Lie derivative $\bL_{J\wt u}L_H$ (\ref{Q3}). It reads
\be
&& -u^i\dr_i\cH -\dr_\m(u^\m\cH)
-u^\la_i\dr^i_\la\cH +p^\la_i\dr_\la u^i =
 -(u^i-y^i_\m u^\m)(p^\la_{\la i} +\dr_i\cH) + \\
&& \qquad ( - \dr_i u^j p^\la_j -\dr_\m u^\m p^\la_i
+\dr_\m u^\la p^\m_i- p^\la_{\m i}u^\m)(y^i_\la - \dr^i_\la\cH) -\\
&&\qquad  d_\la[p^\la_i(\dr^i_\m\cH u^\m-u^i) - u^\la (p^\m_i\dr^i_\m\cH
-\cH)].
\ee
On the shell (\ref{b4100a}) -- (\ref{b4100b}), this identity takes
a form
\mar{Q6}\beq
-u^i\dr_i\cH -\dr_\m(u^\m\cH) -u^\la_i\dr^i_\la\cH +p^\la_i\dr_\la
u^i \ap -    d_\la[p^\la_i(\dr^i_\m\cH u^\m-u^i) - u^\la
(p^\m_i\dr^i_\m\cH -\cH)]. \label{Q6}
\eeq
If $\bL_{J^1\wt u}L_H=0$, we obtain a weak Hamiltonian
conservation law
\mar{Q7'}\beq
0\ap  -
 d_\la[p^\la_i(u^\m\dr^i_\m\cH -u^i) - u^\la (p^\m_i \dr^i_\m\cH -\cH)]
\label{Q7'}
\eeq
of a Hamiltonian symmetry current
\mar{b4183}\beq
\wt\cJ_u = [p^\la_i(u^\m\dr^i_\m\cH -u^i) - u^\la (p^\m_i
\dr^i_\m\cH -\cH]\om_\la. \label{b4183}
\eeq

In particular, let $u=u^i\dr_i$ be a vertical vector field on
$Y\to X$. Then the Lie derivative $\bL_{\wt u}H$ (\ref{b4180})
takes a form
\be
\bL_{\wt u}H =(-u^i\dr_i\cH  +\dr_iu^jp^\la_j\dr^i_\la\cH
+p^\la_i\dr_\la u^i)\om.
\ee
The corresponding Noether Hamiltonian current (\ref{b4183}) reads
\mar{mos010}\beq
\wt\cJ_u =-u^i p^\la_i\om_\la \label{mos010}
\eeq
(cf. $\cJ_u$ (\ref{b374})). This is independent of a Hamiltonian
form $H$, and is defined only by a vertical vector field $u$. It
follows that Noether Hamiltonian currents $\wt\cJ_u$
(\ref{mos010}) in PS Hamiltonian theory constitute a real vector
space $\cJ(\Pi)$ isomorphic to that of vertical vector fields $u$
on a fibre bundle $Y$.

Moreover, due to the isomorphism (\ref{000}), Noether Hamiltonian
currents (\ref{mos010}) are represented by $TX$-valued densities
(\ref{ps11}):
\mar{ps44}\beq
\wt\cJ_u =-u^i p^\la_i\dr_\la\ot\om. \label{ps44}
\eeq
If $Y\to X$ is a vector bundle, the PS bracket $\{,\}_{PS}$
(\ref{xx3}) provides Noether Hamiltonian currents (\ref{ps44})
with the Lie bracket
\mar{ps45}\beq
[\wt\cJ_u,\wt\cJ_{u'}]=\{\wt\cJ_u,\wt\cJ_{u'}\}_{PS}=\wt\cJ_{[u,u']}
\label{ps45}
\eeq
which brings their space $\cJ(\Pi)$ into a real Lie algebra,
isomorphic to the Lie algebra of vertical vector fields on $Y$.
Similarly, the Poisson bracket $\{,\}_V$ (\ref{m72}) defines a Lie
bracket of Noether currents in mechanics \cite{book10}.

Let now $\tau =\tau^\la\dr_\la$ be a vector field on $X$ and
$\G\tau$ (\ref{b1.85}) its horizontal lift onto $Y$ by means of a
connection $\G$ on $Y\to X$. Given the splitting (\ref{4.7}) of a
Hamiltonian form $H$, the Lie derivative (\ref{b4180}) reads
\be
\bL_{\wt u}H =p^\la_j([\dr_\la +\G^i_\la\dr_i,u]^j -[\dr_\la
+\G^i_\la\dr_i,u]^\nu\G_\nu^j)\om -
 (\dr_\m u^\m\wt\cH_\G +u\rfloor
d\wt\cH_\G)\om,
\ee
where $[.,.]$ is the Lie bracket of vector fields. Then the weak
identity (\ref{Q6}) takes a form
\be
-(\dr_\m +\G^j_\m\dr_j - p^\la_i\dr_j\G^i_\m\dr^j_\la)\wt\cH_\G +
p^\la_iR^i_{\la\m} \ap - d_\la\wt\cJ_\G{}^\la{}_\m,
\ee
where
\be
R = \frac12 R_{\la\m}^i dx^\la\wedge dx^\m\otimes\dr_i,\qquad
R_{\la\m}^i = \dr_\la\G_\m^i - \dr_\m\G_\la^i + \G_\la^j\dr_j
\G_\m^i - \G_\m^j\dr_j \G_\la^i,
\ee
is the curvature of a connection $\G$. The corresponding symmetry
current (\ref{b4183}) reads
\mar{Q12}\beq
\wt\cJ_\G^\la
=\tau^\m\wt\cJ_\G{}^\la{}_\m=\tau^\m(p^\la_i\dr_\m^i\wt\cH_\G
-\dl^\la_\m(p^\nu_i\dr^i_\nu\wt\cH_\G -\wt\cH_\G)). \label{Q12}
\eeq
The relations (\ref{Q10'}) below show that, on a Lagrangian
constraint space $N_L$, the current (\ref{Q12}) can be treated as
a Hamiltonian energy-momentum current relative to a connection
$\G$.

On solutions $r$ of the covariant Hamilton equation (\ref{b4100a})
-- (\ref{b4100b}), the weak equality (\ref{Q7'}) leads to a
differential conservation law
\be
\dr_\la(\wt\cJ^\la_u (r)=0.
\ee

There is the following relation between differential conservation
laws in Lagrangian and PS Hamiltonian formalisms.

\begin{theorem}\label{hamlaw} \mar{hamlaw} Let a
Hamiltonian form $H$ be associated to an almost regular Lagrangian
$L$. Let $r$ be a solution of the covariant Hamilton equation
(\ref{b4100a}) -- (\ref{b4100b}) for $H$ which lives in a
Lagrangian constraint space $N_L$. Let $s=\pi_{\Pi Y}\circ r$ be
the corresponding solution  of the Euler -- Lagrange equation for
$L$ so that the relation (\ref{2.36}) holds. Then, for any
projectable vector field $u$ on a fibre bundle $Y\to X$, we have
\mar{Q10'}\beq
\wt\cJ_u (r)=\cJ_u( \pi_{\Pi Y}\circ r),\qquad \wt\cJ_u (\wh
L\circ J^1s) =\cJ_u(s), \label{Q10'}
\eeq
where $\cJ_u$ is the symmetry current (\ref{Q30}) on $J^1Y$  and
$\wt\cJ_u$ is the symmetry current (\ref{b4183}) on $\Pi$.
\end{theorem}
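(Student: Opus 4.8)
The plan is to show that, along the section built from $r$, both currents reduce to the same coordinate expression, namely that of $\cJ_u$ (\ref{Q30}) read on $J^1s$. First I would record the geometric relations tying $r$ to $s$. By the equality (\ref{N10}), valid for any solution $r$ of the covariant Hamilton equation (\ref{b4100a}) -- (\ref{b4100b}), we have $\wh H\circ r=J^1(\pi_{\Pi Y}\circ r)=J^1s$. Since $H$ is associated to $L$, the relation (\ref{2.30a}) makes $\wh L\circ\wh H$ restrict to the identity on $N_L$ (equivalently, it is the projector (\ref{b481'}) of $\Pi$ onto $N_L$), and $r$ takes values in $N_L$ by assumption; hence $r=\wh L\circ\wh H\circ r=\wh L\circ J^1s$. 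In particular $\pi_{\Pi Y}\circ r=s$ and $\wh L\circ J^1s=r$, so the two equalities asserted in (\ref{Q10'}) coincide, and it is enough to establish $r^*\wt\cJ_u=(J^1s)^*\cJ_u$.

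Next I would pull the holonomic fibre coordinates back along $r=\wh L\circ J^1s$. By (\ref{m11}) we get $p^\la_i\circ r=\pi^\la_i\circ J^1s$; trivially $y^i\circ r=s^i$; and combining (\ref{415}) with $\wh H\circ r=J^1s$ gives $\dr^i_\la\cH\circ r=(y^i_\la\circ\wh H)\circ r=y^i_\la\circ J^1s=\dr_\la s^i$. Substituting these three identities into the Hamiltonian symmetry current (\ref{b4183}), the only quantity still to be identified along $r$ is the scalar $p^\m_i\dr^i_\m\cH-\cH$. For this I would invoke the relation (\ref{2.32}) of Theorem \ref{3.22'}, which on $N_L$ reads $\cH(x^\m,y^j,\pi^\m_j)=\pi^\la_iy^i_\la-\cL$, whence $(p^\m_i\dr^i_\m\cH-\cH)\circ r=\pi^\m_i\dr_\m s^i-(\pi^\la_i\dr_\la s^i-\cL\circ J^1s)=\cL\circ J^1s$. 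Feeding this back into (\ref{b4183}) yields $r^*\wt\cJ_u=[\pi^\la_i(u^\m\dr_\m s^i-u^i)-u^\la\cL]\om_\la$ evaluated on $J^1s$, which is precisely $(J^1s)^*\cJ_u$ by the formula (\ref{Q30}). The second identity in (\ref{Q10'}) then follows at once from $r=\wh L\circ J^1s$ and $\pi_{\Pi Y}\circ r=s$.

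I expect the computation itself to be routine substitution, so there is no deep obstacle; the one step that genuinely needs care is the identity $r=\wh L\circ J^1s$, i.e. the fact that a Hamilton solution lying in the constraint space is faithfully recovered from its projection $s$. This uses in an essential way both that $H$ is associated to $L$ (so that (\ref{2.30a}) -- (\ref{b481'}) provides a true projector onto $N_L$) and the equivalence (\ref{N10}) between the covariant Hamilton equation and the condition $J^1(\pi_{\Pi Y}\circ r)=\wh H\circ r$; this identity is also exactly what makes the relation (\ref{2.36}) hold, as recorded in the hypotheses. Once it is in hand, (\ref{m11}), (\ref{415}) and (\ref{2.32}) collapse $\wt\cJ_u$ and $\cJ_u$ into one and the same $(n-1)$-form on $X$.
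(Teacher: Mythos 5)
Your proof is correct and follows essentially the same route as the paper, whose own proof is just the one-line citation of the relations (\ref{b481'}), (\ref{b481}) and (\ref{2.32}) that you actually deploy: the projector identity giving $r=\wh L\circ J^1s$, the substitutions $p^\la_i\circ r=\pi^\la_i\circ J^1s$ and $\dr^i_\la\cH\circ r=\dr_\la s^i$, and the identification $(p^\m_i\dr^i_\m\cH-\cH)\circ r=\cL\circ J^1s$. Your write-up merely makes explicit the steps the paper leaves implicit.
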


\begin{proof}
The proof follows from the relations (\ref{b481'}), (\ref{b481})
and (\ref{2.32}).
\end{proof}

By virtue of Theorems \ref{3.23} -- \ref{3.24}, it follows that:

$\bullet$ if $\cJ_u$ in Theorem \ref{hamlaw} is a conserved
symmetry current, then the symmetry current $\wt\cJ_u$
(\ref{Q10'}) is conserved on solutions of the Hamilton equation
which live in a Lagrangian constraint space,

$\bullet$ if $\wt\cJ_u$ in Theorem \ref{hamlaw} is a conserved
symmetry current, then the symmetry current $\cJ_u$ (\ref{Q10'})
is conserved on solutions $s$ of the Euler -- Lagrange equation
which obey the condition (\ref{2.36}).

However, Theorem \ref{hamlaw} fails to provide straightforward
relations between symmetries of Lagrangians and associated
Hamiltonian forms. In Section 12, we can obtain such  relations
between symmetries of constrained Lagrangians $L_N$ (\ref{cmp81})
and original quadratic Lagrangians $L$(\ref{N12}) (Theorem
\ref{y84}).

\section{Lagrangian and Hamiltonian Jacobi
fields}

The vertical extension of Lagrangian theory on a fibre bundle
$Y\to X$ onto the vertical tangent bundle $VY$ of $Y\to X$
describes the linear deviations of solutions of the Euler --
Lagrange equation which are Jacobi fields \cite{book09,jacobi}.
Accordingly, the vertical extension of PS Hamiltonian formalism on
the Legendre bundle $\Pi$ (\ref{00}) onto the vertical Legendre
bundle $\Pi_{VY}$ (\ref{v3}) describes Jacobi fields of solutions
of the covariant Hamilton equations \cite{book09,book00}.

Let $VY$ be the vertical tangent bundle of $Y\to X$ endowed with
holonomic coordinates $(x^\la, y^i,\dot y^i)$. The configuration
space of first order Lagrangian theory on $VY\to X$ is the jet
manifold $J^1VY$. There is the canonical isomorphism
\be
J^1VY\op=_{J^1Y} VJ^1Y, \qquad \dot y^i_\la=(\dot y^i)_\la,
\ee
where, in comparison with $V_YJ^1Y$ in the expression (\ref{ps2}),
$VJ^1Y$ is the vertical tangent bundle of $J^1Y\to X$ which is
provided with holonomic coordinates $(x^\la, y^i, y^i_\la, \dot
y^i, \dot y^i_\la)$. Due to this isomorphism, first order
Lagrangian formalism on $VY$ can be developed as the vertical
extension of Lagrangian theory on $Y$.

\begin{lemma}
Similar to the canonical isomorphism between fibre bundles $TT^*Z$
and $T^*TZ$ \cite{2kij}, the isomorphism
\mar{v10}\beq
VV^*Y \op=_{VY} V^*VY, \qquad p_i\llra\dot v_i, \quad \dot
p_i\llra\dot y_i,\label{v10}
\eeq
can be established by inspection of the transformation laws of
holonomic coordinates $(x^\la, y^i, p_i=\ol y_i)$ on $V^*Y$ and
$(x^\la, y^i, v^i=\dot y^i)$ on $VY$.
\end{lemma}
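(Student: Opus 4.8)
The plan is to argue exactly as the statement suggests: prolong the known transition laws of the fibre coordinates on $V^*Y$ and $VY$ to the two double bundles and verify that the proposed coordinate correspondence identifies them. Fix a change of bundle coordinates $y'^i=y'^i(x^\m,y^j)$ on $Y$, and write $K^i_j=\dr y'^i/\dr y^j$ and $J^j_i=\dr y^j/\dr y'^i$ for the fibre Jacobian and its inverse, so $K^i_jJ^j_k=\dl^i_k$; the base coordinates $x^\la$ never enter. The fibre coordinates then transform by $p'_i=J^j_ip_j$ on $V^*Y$ and by $v'^i=K^i_jv^j$ on $VY$, and the task is to compare the prolongations of these rules to $VV^*Y$ and to $V^*VY$.

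First I would write down the two prolonged laws. On $VV^*Y$, the vertical tangent bundle of $V^*Y\to X$ with holonomic coordinates $(x^\la,y^i,p_i,\dot y^i,\dot p_i)$ (the dotted entries being velocities of $y^i$ and $p_i$ along the fibres of $V^*Y\to X$), taking velocities gives $\dot y'^i=K^i_j\dot y^j$ and $\dot p'_i=J^j_i\dot p_j+(\dr J^j_i/\dr y'^k)\,\dot y'^k\,p_j$. On $V^*VY$, the vertical cotangent bundle of $VY\to X$ with coordinates $(x^\la,y^i,v^i,\dot y_i,\dot v_i)$ (the dotted entries being momenta conjugate to $y^i$ and $v^i$), one inverts and transposes the lower block-triangular Jacobian of $(y^i,v^i)\mapsto(y'^i,v'^i)$ to get $\dot v'_i=J^j_i\dot v_j$ and $\dot y'_i=J^j_i\dot y_j+(\dr J^j_k/\dr y'^i)\,v'^k\,\dot v_j$, the mixed term in the last formula coming from the $y$-dependence of $v'^i=K^i_jv^j$.

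Then I would impose the correspondence $y^i\llra y^i$, $\dot y^i\llra v^i$, $p_i\llra\dot v_i$, $\dot p_i\llra\dot y_i$ and match the four transition laws pairwise. The rules for $y^i$, for $\dot y^i\llra v^i$, and for $p_i\llra\dot v_i$ coincide at once, $p_i$ and $\dot v_i$ both obeying $(\cdot)'_i=J^j_i(\cdot)_j$. The only point requiring work is the equality of the mixed terms of $\dot p'_i$ and $\dot y'_i$: after substituting $\dot y'^k\mapsto v'^k$ and $p_j\mapsto\dot v_j$ these become $(\dr J^j_i/\dr y'^k)\,v'^k\dot v_j$ and $(\dr J^j_k/\dr y'^i)\,v'^k\dot v_j$, so they agree precisely because $\dr J^j_i/\dr y'^k=\dr J^j_k/\dr y'^i=\dr^2 y^j/\dr y'^k\dr y'^i$. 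It follows that the map defined in each chart by $(x^\la,y^i,p_i,\dot y^i,\dot p_i)\mapsto(x^\la,y^i,v^i,\dot y_i,\dot v_i)$ with $v^i=\dot y^i$, $\dot v_i=p_i$, $\dot y_i=\dot p_i$ is globally well defined, and it is a fibrewise-linear diffeomorphism over $VY$. I expect the main obstacle to be purely the index bookkeeping in the second step, namely bringing the mixed term in the transition of $\dot y_i$ on $V^*VY$ into the shape above: this uses the identity $J^m_j\,\dr K^j_a/\dr y^k=-(\dr J^m_j/\dr y^k)\,K^j_a$ obtained by differentiating $J^m_jK^j_a=\dl^m_a$, together with $K^j_av^a=v'^j$ and the chain rule $(\dr J^m_j/\dr y^k)(\dr y^k/\dr y'^i)=\dr J^m_j/\dr y'^i$. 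After that the matching is forced, and the construction runs parallel, step for step, to the standard derivation of the canonical isomorphism $TT^*Z\cong T^*TZ$.
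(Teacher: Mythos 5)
Your proposal is correct and is precisely the computation the paper alludes to: the lemma is justified there only by the remark that the isomorphism "can be established by inspection of the transformation laws," and you carry out exactly that inspection, with the key point — the symmetry $\dr J^j_i/\dr y'^k=\dr J^j_k/\dr y'^i=\dr^2y^j/\dr y'^k\dr y'^i$ matching the mixed terms — handled correctly.
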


It follows that any exterior form $\f$ on a fibre bundle $Y$ gives
rise to an exterior form
\mar{ws531x}\beq
\f_V=\dr_V(\f)=\dot y^i\dr_i(\f) \label{ws531x}
\eeq
on $VY$ so that $d\f_V=(d\f)_V$. For instance,
\be
\dr_Vf=\dot y^i\dr_if, \qquad \dr_V(dy^i)=d\dot y^i, \qquad f\in
C^\infty(Y).
\ee
The form $\f_V$ (\ref{ws531x}) is called the vertical extension of
$\f$ on $Y$.

Let $L$ be the Lagrangian (\ref{23f2}) on $J^1Y$. Its vertical
extension $L_V$ (\ref{ws531x}) onto $VJ^1Y$ (but not $VL$
(\ref{ps2}) onto $V_YJ^1Y$) reads
\mar{v0}\beq
L_V=\dr_VL=(\dot y^i\dr_i + \dot y^i_\la\dr_i^\la)\cL\om.
\label{v0}
\eeq

The corresponding Euler -- Lagrange equation (\ref{b327}) takes a
form
\mar{v1,2}\ben
&& \dot\dl_i\cL_V=\dl_i\cL=0, \label{v1}\\
&& \dl_i\cL_V=\dr_V\dl_i\cL=0, \label{v2}\\
&& \dr_V= \dot y^i\dr_i + \dot y^i_\la\dr_i^\la+ \dot
y^i_{\m\la}\dr_i^{\m\la}. \nonumber
\een
The equation (\ref{v1}) is exactly the Euler -- Lagrange equation
(\ref{b327}) for an original Lagrangian $L$. In order to clarify
the meaning of the equation (\ref{v2}), let us suppose that $Y\to
X$ is a vector bundle. Given a solution $s$ of the Euler --
Lagrange equation (\ref{v1}), let $\dl s$ be a Jacobi field, i.e.,
$s+\ve \dl s$ also is a solution of the Euler -- Lagrange equation
(\ref{v1}) modulo the terms of order $>1$ in a small parameter
$\ve$. Then it is readily observed that a Jacobi field $\dl s$
satisfies the Euler -- Lagrange equation (\ref{v2}), which
therefore is called the variation equation of the equation
(\ref{v1}) \cite{ditt,book00, jacobi}.

The Lagrangian $L_V$ (\ref{v0}) yields a Legendre map
\mar{v3}\beq
\wh L_V: VJ^1Y\ar_{VY} \Pi_{VY}=V^*VY\op\w_{VY}(\op\w^{n-1} T^*X),
\label{v3}
\eeq
where $\Pi_{VY}$ is called the vertical Legendre bundle.

\begin{lemma}
Due to the isomorphism (\ref{v10}) there exists a bundle
isomorphism
\mar{cmp42}\beq
\Pi_{VY}\op=_{VY} V\Pi, \qquad
 p^\la_i\llra\dot p^\la_i,
\qquad q^\la_i\llra p^\la_i, \label{cmp42}
\eeq
written with respect to the holonomic coordinates $(x^\la, y^i,
\dot y^i, p^\la_i, q^\la_i)$ on $\Pi_{VY}$ and  $(x^\la, y^i,
p^\la_i, \dot y^i,\dot p^\la_i)$ on $V\Pi$.
\end{lemma}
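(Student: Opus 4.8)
The plan is to prove the isomorphism by writing out the holonomic transition functions of the two bundles and matching them term by term under the stated correspondence, in the spirit of the proof of the companion isomorphism (\ref{v10}). Equivalently, the claim may be read as the image of (\ref{v10}) under the twist by the $X$-bundle $\op\w^n T^*X\op\ot_X TX$ that turns $V^*Y$ into $\Pi$ (cf. (\ref{00})) and $V^*VY$ into $\Pi_{VY}$, once one observes that this twist, being pulled back from $X$, commutes with the vertical tangent functor over $X$.

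First I would record the coordinate data. The bundle $\Pi_{VY}$ (\ref{v3}) is the Legendre bundle (\ref{00}) associated to $VY\to X$ in place of $Y\to X$, so its holonomic fibre coordinates are the momenta $p^\la_i$ conjugate to $y^i$ and $q^\la_i$ conjugate to $\dot y^i$, with transition functions given by (\ref{2.3}) for $VY\to X$: up to the common scalar factor of (\ref{2.3}), which depends on $x$ alone, they are obtained by contraction with the inverse Jacobian of the coordinate change on $VY\to X$ induced by a change $(x^\m,y^i)\to({x'}^\m,{y'}^i)$. That induced change is ${y'}^i={y'}^i(x^\m,y^j)$, ${\dot y'}^i=(\dr_j{y'}^i)\dot y^j$, hence its Jacobian with respect to $(y^i,\dot y^i)$ is block lower triangular with both diagonal blocks equal to $C^i_j=\dr_j{y'}^i$ and lower-left block $(\dr_k\dr_j{y'}^i)\dot y^k$; consequently the inverse Jacobian has the same shape, with diagonal blocks $B^i_j=\dr y^i/\dr{y'}^j$ and lower-left block $\dot B^i_j:=-B^i_lB^m_j(\dr_m\dr_k{y'}^l)\dot y^k$. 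Reading off its two columns gives the transition functions
\[
{q'}^\la_i=A^\la_\m B^j_i\,q^\m_j,\qquad {p'}^\la_i=A^\la_\m\bigl(B^j_i\,p^\m_j+\dot B^j_i\,q^\m_j\bigr)
\]
on $\Pi_{VY}$, where $A^\la_\m$ is the scalar factor of (\ref{2.3}).

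I would then note that on $V\Pi$ the holonomic coordinates are $(x^\la,y^i,p^\la_i,\dot y^i,\dot p^\la_i)$, with $\dot y^i$ and $\dot p^\la_i$ prolonging the fibre coordinates of $\Pi\to X$. Then ${\dot y'}^i=C^i_j\dot y^j$, while differentiating (\ref{2.3}) along the fibres of $\Pi\to X$ — using that $A^\la_\m$ is independent of $(y^j,p^\m_j)$ and that $(\dr_k B^j_i)\dot y^k=\dot B^j_i$ — gives
\[
{\dot p'}^\la_i=A^\la_\m\bigl(B^j_i\,\dot p^\m_j+\dot B^j_i\,p^\m_j\bigr).
\]
Comparing the two displays, the transition functions of $(x^\la,y^i,\dot y^i,q^\la_i,p^\la_i)$ on $\Pi_{VY}$ coincide term by term with those of $(x^\la,y^i,\dot y^i,p^\la_i,\dot p^\la_i)$ on $V\Pi$ under the correspondence $q^\la_i\llra p^\la_i$, $p^\la_i\llra\dot p^\la_i$ (the base coordinates $x^\la,y^i,\dot y^i$ being left fixed). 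Hence this correspondence defines a bundle isomorphism $\Pi_{VY}\to V\Pi$ over $VY$, as asserted.

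The one point I expect to need care — everything else being index bookkeeping — is the coincidence of the ``extra'' off-diagonal term in the two transition laws: that the fibre derivative $(\dr_k B^j_i)\dot y^k$ entering $\dot p^\la_i$ on $V\Pi$ equals the lower-left block $\dot B^j_i$ of the inverse of the prolonged Jacobian of $VY\to X$ entering the momentum conjugate to $y^i$ on $\Pi_{VY}$; both equal $-B^j_lB^m_i(\dr_m\dr_k{y'}^l)\dot y^k$. This is the same mechanism that underlies the isomorphism $TT^*Z\cong T^*TZ$ behind (\ref{v10}), now carrying the extra $X$-index; and since $A^\la_\m$ is common to all momentum components and depends on $x$ alone, it is neither differentiated when forming $\dot p^\la_i$ nor does it mix the $(y^i,\dot y^i)$ blocks, so it passes through the argument unchanged.
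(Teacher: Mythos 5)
Your proposal is correct and follows essentially the same route as the paper, which justifies the lemma in one line as a consequence of the isomorphism (\ref{v10}) twisted by the pull-back bundle $\op\w^nT^*X\op\ot_X TX$, the underlying mechanism being exactly the ``inspection of the transformation laws of holonomic coordinates'' invoked for (\ref{v10}) itself. Your explicit verification that the lower-left block of the inverse prolonged Jacobian of $VY\to X$ coincides with the fibre derivative $(\dr_kB^j_i)\dot y^k$ appearing in $\dot p'^\la_i$, while the $x$-dependent factor of (\ref{2.3}) passes through undifferentiated, correctly fills in the detail the paper leaves to the reader.
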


In view of the isomorphism (\ref{cmp42}), the Legendre map
(\ref{v3}) takes a form
\mar{v4}\ben
&& \wh L_V=V\wh L: VJ^1Y\ar_{VY} \Pi_{VY}=V\Pi, \label{v4}\\
&& p^\la_i=\dot\dr^\la_i\cL_V=\pi^\la_i, \qquad \dot
p^\la_i=\dr^i_\la\cL=\dr_V\pi^\la_i. \nonumber
\een
It is called the vertical Legendre map.

Let $Z_{VY}$ be the homogeneous Legendre bundle (\ref{N41}) over
$VY$ endowed with the corresponding coordinates $(x^\la,y^i,\dot
y^i,p_i^\la,q_i^\la,p)$. There is a fibre bundle
\mar{cmp40}\beq
\zeta: VZ_Y\to Z_{VY}, \qquad (x^\la,y^i,\dot
y^i,p_i^\la,q_i^\la,p) \circ\zeta= (x^\la,y^i,\dot y^i,\dot
p_i^\la, p_i^\la,\dot p). \label{cmp40}
\eeq
Then the vertical tangent morphism $V\pi_{Z\Pi}$ to $\pi_{Z\Pi}$
(\ref{b418'}) factorizes through the composition of fibre bundles
\mar{cmp34}\beq
V\pi_{Z\Pi}: VZ_Y\to Z_{VY}\to \Pi_{VY}=V\Pi. \label{cmp34}
\eeq

Owing to this fact, one can develop PS Hamiltonian formalism on a
momentum phase space $\Pi_{VY}$ as the vertical extension of PS
Hamiltonian theory on $\Pi$. The corresponding canonical conjugate
pairs are $(y^i,\dot p^\la_i)$ and $(\dot y^i,p_i^\la)$. In
particular, due to the isomorphism (\ref{cmp42}), $V\Pi$ is
endowed with the canonical PS form (\ref{406}) which reads
\be
\bom_{VY}=[d\dot p^\la_i\w dy^i +dp^\la_i\w d\dot
y^i]\w\om\ot\dr_\la.
\ee

Let $Z_{VY}$ be the homogeneous Legendre bundle (\ref{N41}) over
$VY$ with the corresponding coordinates $(x^\la,y^i,\dot
y^i,p_i^\la,q_i^\la,p)$. It can be endowed with the
multisymplectic Liouville form $\Xi_{VY}$ (\ref{N43}). Sections of
the affine bundle
\mar{cmp41}\beq
 Z_{VY}\to V\Pi, \label{cmp41}
\eeq
by definition, provide Hamiltonian forms on $V\Pi$.

Let us consider the following particular case of these forms which
are related to those on a Legendre bundle $\Pi$. Due to the fibre
bundle (\ref{cmp40}):
\be
\zeta: VZ_Y\to Z_{VY},
\ee
the vertical tangent bundle $VZ_Y$ of $Z_Y\to X$ is provided with
an exterior form
\be
\Xi_V=\zeta^*\Xi_{VY}= \dot p\om + (\dot p^\la_i dy^i +
p^\la_id\dot y^i) \w\om_\la,
\ee
which is exactly the vertical extension (\ref{ws531x}) of the
canonical multisymplectic Liouville form $\Xi$ on $Z_Y$. Given the
affine bundle $Z_Y\to\Pi$ (\ref{b418'}), we have the fibre bundle
$VZ_Y\to V\Pi$ (\ref{cmp34}) where $V\pi_{Z\Pi}$ is the vertical
tangent map to $\pi_{Z\Pi}$. Let $h$ be a section of an affine
bundle $Z_Y\to \Pi$ and $H=h^*\Xi$ the corresponding Hamiltonian
form (\ref{b418}) on $\Pi$. Then a section $Vh$ of the fibre
bundle (\ref{cmp34}) and the corresponding section $\zeta\circ Vh$
of the affine bundle (\ref{cmp41}) defines the Hamiltonian form
\mar{m17}\ben
&& H_V=(Vh)^*\Xi_V =(\dot p^\la_idy^i + p^\la_i d\dot
y^i)\w\om_\la -\cH_V\om,
\label{m17}\\
&& \cH_V =\dr_V\cH, \qquad \dr_V=\dot y^i\dr_i +\dot
p^\la_i\dr^i_\la, \nonumber
\een
on $V\Pi$. It is called the  vertical extension of $H$ (or,
simply, the vertical Hamiltonian form). In particular, given the
splitting (\ref{4.7}) of $H$ with respect to a connection $\G$ on
$Y\to X$, we have the corresponding splitting
\be
\cH_V=\dot p^\la_i\G^i_\la +\dot y^j p^\la_i\dr_j\G^i_\la
+\dr_V\wt\cH_\G
\ee
of $H_V$ with respect to the canonical vertical prolongation
\mar{43}\beq
V\G : VY\to J^1VY, \qquad  V\G = dx^\la\otimes(\dr_\la
+\G^i_\la\dr_i+\dr_j\G^i_\la\dot y^j \dot\dr_i), \label{43}
\eeq
of $\G$ onto $VY\to X$.

\begin{theorem}
Let $\g$ (\ref{cmp33}) be a Hamiltonian connection on $\Pi$
associated to a Hamiltonian form $H$. Then its vertical
prolongation $V\g$ (\ref{43}) on $V\Pi\to X$ is a Hamiltonian
connection associated to the vertical Hamiltonian form $H_V$
(\ref{m17}).
\end{theorem}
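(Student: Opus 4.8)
The plan is to reduce the entire statement to the single identity $V\g\rfloor\bom_{VY}=dH_V$ on $V\Pi$. Once this is in hand, Theorem \ref{91t3}, now read on the Legendre bundle $\Pi_{VY}\cong V\Pi$ over $VY$ instead of on $\Pi$ over $Y$, finishes everything: the right-hand side is exact, hence closed, so $V\g\rfloor\bom_{VY}$ is closed and $V\g$ is by definition a Hamiltonian connection, while the equality with $dH_V$ says it is associated to $H_V$; moreover, reading off components, the relation (\ref{cmp3}) for $H_V$ holds automatically. The idea that makes the identity plausible is that every object occurring in it is the vertical extension $\dr_V$ of its counterpart on $\Pi$: the polysymplectic form $\bom_{VY}$ equals $\dr_V\bom_Y$, the vertical Hamiltonian form $H_V$ is by construction (\ref{m17}) the vertical extension $\dr_V H$, and $V\g$ (\ref{43}) is nothing but $\dr_V$ applied componentwise to $\g$. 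So the strategy is to apply $\dr_V$ to the defining relation $\g\rfloor\bom_Y=dH$ of Theorem \ref{91t3} and identify the three resulting pieces.

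I would carry this out in four steps. First, verify $\dr_V\bom_Y=\bom_{VY}$ and $\dr_V H=H_V$ straight from (\ref{406}), (\ref{b418}), (\ref{m17}): $\dr_V$ is the derivation on forms sending $dy^i\mapsto d\dot y^i$, $dp^\la_i\mapsto d\dot p^\la_i$, annihilating every $dx^\la$ (hence $\om$ and $\om_\la$), and acting on a coefficient $f$ as $\dot y^i\dr_if+\dot p^\la_i\dr^i_\la f$; then $\dr_V(dp^\la_i\w dy^i\w\om)=(d\dot p^\la_i\w dy^i+dp^\la_i\w d\dot y^i)\w\om$ and $\dr_V(p^\la_idy^i\w\om_\la-\cH\om)=(\dot p^\la_idy^i+p^\la_id\dot y^i)\w\om_\la-\cH_V\om$ with $\cH_V=\dr_V\cH$, which are exactly $\bom_{VY}$ and $H_V$. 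Second, use $d\f_V=(d\f)_V$ (noted after (\ref{ws531x})) to get $\dr_V(dH)=d(\dr_VH)=dH_V$. Third, establish the compatibility $\dr_V(\g\rfloor\bom_Y)=(V\g)\rfloor(\dr_V\bom_Y)=(V\g)\rfloor\bom_{VY}$, i.e. that $\dr_V$ intertwines the $TX$-valued contraction by $\g$ with the contraction by its vertical prolongation $V\g$. Fourth, combine: $(V\g)\rfloor\bom_{VY}=\dr_V(\g\rfloor\bom_Y)=\dr_V(dH)=dH_V$; and since $\dr_V$ commutes with $\dr_i$ and $\dr^i_\la$ on functions pulled back from $\Pi$ (because $\dot y^i,\dot p^\la_i$ are fibre coordinates independent of $y^i,p^\la_i$), the components coming from $\g^i_\la=\dr^i_\la\cH$ and $\g^\la_{\la i}=-\dr_i\cH$ turn into $\dr^i_\la\cH$, $\dr^i_\la\cH_V$, $-\dr_i\cH$, $-\dr_i\cH_V$, i.e. precisely the component form of (\ref{cmp3}) for the Hamiltonian form $H_V$ on the Legendre bundle over $VY$.

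The main obstacle is the third step, the compatibility of $\dr_V$ with the $TX$-valued contraction: $\dr_V$ is not literally a Lie derivative, so one cannot simply invoke Cartan's formula. I would handle it in one of two ways. Either prove the pointwise intertwining $\dr_V\circ\iota_{\g_\nu}=\iota_{(V\g)_\nu}\circ\dr_V$ on the one-contact factors $dy^i$, $dp^\la_i$ of $\bom_Y$, using that $(V\g)_\nu=\dr_\nu+\g^i_\nu\dr_i+\g^\m_{\nu i}\dr^i_\m+(\dr_V\g^i_\nu)\,\dr/\dr\dot y^i+(\dr_V\g^\m_{\nu i})\,\dr/\dr\dot p^\m_i$ is precisely the ``$\dr_V$-differential'' of the component vector field $\g_\nu$; or, more safely, simply expand both $(V\g)\rfloor\bom_{VY}$ and $dH_V$ in the adapted coordinates $(x^\la,y^i,\dot y^i,p^\la_i,\dot p^\la_i)$ and match them term by term. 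The second option is routine but sign-sensitive: each interior product of the $(n+2)$-forms $d\dot p^\la_i\w dy^i\w\om$ and $dp^\la_i\w d\dot y^i\w\om$ against one of the fibre-derivation parts of $V\g$ produces one of the five summands of $dH_V$ (the top $dx$-degree factor $\om$ killing all the $dx^\la\w\om$ contributions of $d\cH_V$), and one checks the coefficients agree; given the number of terms I would favour doing this computation explicitly. As an aside, the splitting (\ref{4.7}) of $H$ together with item (ii) of Theorem \ref{91t1} and the linearity of $\dr_V$ would let one reduce the identity to its instance for the connection forms $H_\G$ (\ref{3.6}), but this detour is no shorter.
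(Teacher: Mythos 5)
Your proposal is correct and is essentially the paper's own argument: the paper's proof is exactly the ``direct computation'' you describe, writing $V\g=\g+dx^\m\ot[\dr_V\g^i_\m\dot\dr_i+\dr_V\g^\la_{\m i}\dot\dr^i_\la]$ and checking that its components satisfy the Hamilton conditions $\dot\g^i_\m=\dr^i_\m\cH_V=\dr_V\dr^i_\m\cH$, $\dot\g^\la_{\la i}=-\dr_i\cH_V=-\dr_V\dr_i\cH$ together with (\ref{cmp3}), which is precisely the component content of your identity $V\g\rfloor\bom_{VY}=dH_V$. Your packaging of the verification as naturality of $\dr_V$ with respect to $d$ and the contraction is a tidy way to organize the same computation the paper leaves implicit, and your worry about the third step is resolved correctly by the coordinate expansion you propose.
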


\begin{proof}
The proof follows from a direct computation. We have
\be
V\g=\g + dx^\m\ot [\dr_V\g^i_\m\dot\dr_i +\dr_V\g^\la_{\m
i}\dot\dr_\la^i].
\ee
Components of this connection obey the equation
\mar{cmp51}\beq
\dot \g^i_\m=\dr^i_\m\cH_V=\dr_V\dr^i_\m\cH,\qquad
 \dot \g^\la_{\la i}=-\dr_i\cH_V=-\dr_V\dr_i\cH \label{cmp51}
\eeq
and the equation (\ref{cmp3}).
\end{proof}

In order to clarify the meaning of the equation (\ref{cmp51}), let
us suppose that $Y\to X$ is a vector bundle. Given a solution $r$
of the Hamilton equation (\ref{b4100a}) -- (\ref{b4100b}) for $H$,
let $\ol r$ be a Jacobi field, i.e., $r+\ve \ol r$ also is a
solution of the same Hamilton equation modulo terms of order $>1$
in $\ve$. Then it is readily observed that a Jacobi field $\ol r$
satisfies the equation (\ref{cmp51}). At the same time, the
Lagrangian $L_{H_V}$ (\ref{Q3}) on $J^1V\Pi$, defined by the
Hamiltonian form $H_V$ (\ref{m17}),  takes a form
\mar{cmp105}\beq
\cL_{H_V}=h_0(H_V)=\dot p^\la_i(y^i_\la- \dr^i_\la\cH) -\dot
y^i(p^\la_{\la i} + \dr_i\cH) +d_\la(p^\la_i\dot y^i),
\label{cmp105}
\eeq
where $\dot p^\la_i$, $\dot y^i$ play a role of the Lagrange
multipliers.

In conclusion, let us study the relationship between the vertical
extensions of Lagrangian and PS Hamiltonian formalisms. The
Hamiltonian form $H_V$ (\ref{m17}) on $V\Pi$ yields the vertical
Hamiltonian map
\be
&& \wh H_V=V\wh H: V\Pi\op\to_{VY} VJ^1Y, \\
&& y^i_\la=\dot\dr^i_\la\cH_V =\dr^i_\la\cH, \qquad
\dot y^i_\la= \dr_V\dr^i_\la\cH.
\ee

\begin{theorem} \label{p02} \mar{p02} Let a Hamiltonian form $H$ on $\Pi$
be associated to a Lagrangian $L$ on $J^1Y$. Then the vertical
Hamiltonian form $H_V$ (\ref{m17}) is weakly associated to the
Lagrangian $L_V$ (\ref{v0}).
\end{theorem}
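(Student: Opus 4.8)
The plan is to verify the two defining relations of weak association, namely (\ref{2.30a}) and (\ref{2.30b}) restricted to the Lagrangian constraint space, for the pair $(H_V,L_V)$ on $V\Pi$, by differentiating ($\dr_V$-differentiating) the corresponding relations already known for $(H,L)$. The key observation is that every object in sight on the vertical bundles is obtained from its counterpart on $\Pi$, $J^1Y$ by the vertical-extension operator $\dr_V$: the Legendre map $\wh L_V=V\wh L$ is given in coordinates by $p^\la_i=\pi^\la_i$, $\dot p^\la_i=\dr_V\pi^\la_i$ (see (\ref{v4})); the vertical Hamiltonian map $\wh H_V=V\wh H$ is given by $y^i_\la=\dr^i_\la\cH$, $\dot y^i_\la=\dr_V\dr^i_\la\cH$; and the vertical Lagrangian constraint space is $N_{L_V}=V N_L\subset V\Pi$ under the isomorphism (\ref{cmp42}). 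So the whole argument should reduce to applying $\dr_V$ to identities that hold because $H$ is associated to $L$.

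First I would establish (\ref{2.30a}) for $(H_V,L_V)$, i.e. $\wh L_V\circ\wh H_V\circ\wh L_V=\wh L_V$. Since $\wh L_V=V\wh L$ and $\wh H_V=V\wh H$ are the tangent (vertical-tangent) prolongations of $\wh L$ and $\wh H$, and vertical prolongation $V(\cdot)$ is a functor on fibred manifolds over $Y$, the composite $\wh L_V\circ\wh H_V\circ\wh L_V$ equals $V(\wh L\circ\wh H\circ\wh L)=V(\wh L)=\wh L_V$, using the hypothesis (\ref{2.30a}) for $H$. In coordinates this is just the statement that $p^\m_i=\dr^\m_i\cL(x,y,\dr^j_\la\cH(p))$ on $N_L$ (the projector (\ref{b481'})) together with its $\dr_V$-derivative, and the $\dr_V$-derivative of a relation valid on $N_L$ is valid on $VN_L=N_{L_V}$.

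Then I would check (\ref{2.30b}) on $N_{L_V}$, namely $H_V=H_{\wh H_V}+\wh H_V^{\,*}L_V$ on the vertical constraint space. In coordinates, writing (\ref{414}) for $H_{\wh H_V}$ and using $\cH_V=\dr_V\cH$, $\cL_V=\dr_V\cL$ from (\ref{v0}) and (\ref{m17}), this amounts to the identity
\[
\dr_V\cH(p)=\dr_V\big(p^\m_i\dr^i_\m\cH-\cL(x,y,\dr^j_\la\cH(p))\big)
\]
on $N_L$, which is exactly $\dr_V$ applied to the coordinate relation (\ref{b481}) — valid on all of $\Pi$ because $H$ is (strongly) associated to $L$. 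Since $\dr_V$ is a derivation that commutes with pull-back of forms ($d\f_V=(d\f)_V$, and $\wh H_V^{\,*}$ is built from $\wh H^{\,*}$), the pull-back form relation follows termwise from the scalar one, giving $H_V=H_{\wh H_V}+\wh H_V^{\,*}L_V$ on $N_{L_V}$, which is precisely weak association.

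The routine part is bookkeeping with the isomorphisms (\ref{v10}), (\ref{cmp42}) and the factorization (\ref{cmp34}) to make sure the coordinate expressions for $\wh L_V$, $\wh H_V$, $N_{L_V}$ are the ones asserted above; this is essentially done in the lemmas preceding the statement. The one point deserving care — and the main obstacle — is that the constraint relations for $H$ (e.g. (\ref{b481'})) hold only on $N_L$, not everywhere, so one must check that $\dr_V$ of such a relation genuinely holds on $VN_L$ and that $VN_L$ really is the Lagrangian constraint space of $L_V$; this is where the identification $\wh L_V(VJ^1Y)=V\wh L(VJ^1Y)=V N_L$ is used, and it is the step I would write out most carefully.
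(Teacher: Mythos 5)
Your proposal is correct and follows essentially the same route as the paper's (very terse) proof: the relation (\ref{2.30a}) for $(H_V,L_V)$ is obtained by applying the vertical tangent functor to (\ref{2.30a}) for $(H,L)$, and the condition (\ref{2.30b}) for $H_V$ is verified by $\dr_V$-differentiating the coordinate identity (\ref{b481}), which is exactly the paper's reduction to the relations (\ref{2.31}) that hold because $H$ is associated (not merely weakly associated) to $L$.
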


\begin{proof}
If the morphisms $\wh H$ and $\wh L$ obey the relation
(\ref{2.30a}), then the corresponding vertical tangent morphisms
satisfy the relation
\be
V\wh L\circ V\wh H\circ Vi_N=Vi_N.
\ee
The condition (\ref{2.30b}) for $H_V$ reduces to the equality
(\ref{2.31}) which is fulfilled if $H$ is associated to $L$.
\end{proof}

\section{Quadratic Lagrangian and Hamiltonian systems}

Field theories with almost regular quadratic Lagrangians admit
comprehensive PS Hamiltonian formulation \cite{book,jpa99,book09}.

Given a fibre bundle $Y\to X$, let us consider the  quadratic
Lagrangian $L$ (\ref{23f2}):
\mar{N12}\beq
\cL=\frac12 a^{\la\m}_{ij}(x^\nu,y^k) y^i_\la y^j_\m +
b^\la_i(x^\nu,y^k) y^i_\la + c(x^\nu,y^k), \label{N12}
\eeq
where $a$, $b$ and $c$ are local functions on $Y$. This property
is coordinate-independent due to the affine transformation law
(\ref{50}) of coordinates $y^i_\la$. The associated Legendre map
$\wh L$ (\ref{b330}) is given by the coordinate expression
\mar{N13}\beq
p^\la_i\circ\wh L= a^{\la\m}_{ij} y^j_\m +b^\la_i, \label{N13}
\eeq
and is an affine morphism over $Y$. It yields the corresponding
linear morphism
\mar{N13'}\beq
\wh a: T^*X\op\otimes_YVY\op\to_Y \ol N_L\subset \Pi,\qquad
p^\la_i\circ\wh a=a^{\la\m}_{ij}\ol y^j_\m, \label{N13'}
\eeq
where $\ol y^j_\mu$ are fibred coordinates on the vector bundle
(\ref{cc9}).

Let the Lagrangian $L$ (\ref{N12}) be almost regular, i.e., the
morphism $\wh a$ (\ref{N13'}) is of constant rank. Then the
Lagrangian constraint space $N_L$ (\ref{N13}) is an affine
subbundle of the Legendre bundle $\Pi\to Y$, modelled over the
vector subbundle $\ol N_L$ (\ref{N13'}) of $\Pi\to Y$. Hence,
$N_L\to Y$ has a global section $s$. For the sake of simplicity,
let us assume that $s=\wh 0$ is the canonical zero section of
$\Pi\to Y$. Then $\ol N_L=N_L$. Accordingly, the kernel of the
Legendre map (\ref{N13})  is an affine subbundle of the affine jet
bundle $J^1Y\to Y$, modelled over the kernel of the linear
morphism $\wh a$ (\ref{N13'}). Then there exists a connection
\mar{N16}\beq
\G: Y\to \Ker\wh L\subset J^1Y, \qquad a^{\la\m}_{ij}\G^j_\m +
b^\la_i =0, \label{N16}
\eeq
on $Y\to X$. Connections (\ref{N16}) constitute an affine space
modelled over a vector space of soldering forms
\be
\f=\f^i_\la dx^\la\ot\dr_i
\ee
on $Y\to X$, satisfying the conditions
\mar{cmp21}\beq
a^{\la\m}_{ij}\f^j_\m =0 \label{cmp21}
\eeq
and, as a consequence, the conditions $\f^i_\la b^\la_i=0$. If the
Lagrangian (\ref{N12}) is regular, the connection (\ref{N16}) is
unique.

\begin{remark}
If $s\neq\wh 0$, one can consider connections $\G$ taking their
values into $\Ker_s\wh L$.
\end{remark}

A matrix $a$ in the Lagrangian $L$ (\ref{N12}) can be seen as a
global section of constant rank of a tensor bundle
\be
\op\w^n T^*X\op\ot_Y[\op\vee^2(TX\op\ot_Y V^*Y)]\to Y.
\ee
 Then it satisfies the following corollary of the well-known theorem on a splitting of
 a short exact sequence of vector bundles \cite{book09}.

\begin{corollary} \label{mm45} \mar{mm45}
Given a $k$-dimensional vector bundle $E\to Z$, let $a$ be a fibre
metric  of rank $r$ in $E$. There is a splitting
\mar{mm50}\beq
E= \Ker a\op\oplus_Z  E' \label{mm50}
\eeq
where $E'=E/\Ker a$ is the quotient bundle, and $a$ is a
non-degenerate fibre metric in $E'$.
\end{corollary}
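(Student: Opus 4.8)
The plan is to realize the decomposition as an instance of the splitting theorem for short exact sequences of vector bundles over a paracompact base, and then to check that $a$ induces a non-degenerate fibre metric on the quotient $E'$.

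First I would observe that $\Ker a$ is a genuine vector subbundle of $E\to Z$. Regarding the fibre metric $a$ as a bundle morphism $\wh a:E\to E^*$ over $Z$, $v\mapsto a(v,\,\cdot\,)$, the hypothesis that $a$ has rank $r$ means precisely that $\wh a$ has constant rank $r$ at every point; hence $\Ker a=\Ker\wh a$ is a vector subbundle of $E$ of rank $k-r$, and the quotient $E'=E/\Ker a$ is a well-defined vector bundle of rank $r$. This produces the short exact sequence
\[
0\to \Ker a \to E \to E' \to 0
\]
of vector bundles over $Z$. Since $Z$ is paracompact, this sequence splits, which gives the direct sum decomposition $E=\Ker a\op\oplus_Z E'$. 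Concretely, one may pick any positive-definite fibre metric $g$ on $E$ (obtained via a partition of unity) and take $E'$ to be realized inside $E$ as the $g$-orthogonal complement of $\Ker a$.

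It remains to show that $a$ descends to a non-degenerate fibre metric on $E'$. Because $a(v,w)=0$ whenever $v\in\Ker a$, the symmetric bilinear form $a$ factors through $E'\op\times_Z E'$, hence defines a fibre metric on $E'$; and if $a([v],[w])=0$ for every $[w]\in E'_z$, then $a(v,w)=0$ for every $w\in E_z$, so $v\in(\Ker a)_z$ and thus $[v]=0$. Therefore the induced metric on $E'$ is non-degenerate, and its rank equals $r=\dim E'_z$.

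The only point requiring care is the first step: one must read ``$a$ has rank $r$'' as rank $r$ at every point, so that $\wh a$ has constant rank and $\Ker a$ is locally trivial; with that granted, the remainder is the textbook splitting theorem together with the one-line non-degeneracy argument above, and there is no serious obstacle.
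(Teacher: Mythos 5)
Your proof is correct and follows exactly the route the paper intends: the paper states this as ``a corollary of the well-known theorem on a splitting of a short exact sequence of vector bundles'' and gives no further argument, and your write-up simply supplies the standard details (constant rank of $\wh a:E\to E^*$ makes $\Ker a$ a subbundle, the sequence splits over a paracompact base, and the induced form on the quotient is non-degenerate by construction). Nothing to correct.
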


\begin{theorem}\label{04.2} \mar{04.2} There exists a linear bundle
map
\mar{N17}\beq
\si: \Pi\op\to_Y T^*X\op\otimes_YVY, \qquad \ol y^i_\la\circ\si
=\si^{ij}_{\la\m}p^\m_j, \label{N17}
\eeq
such that $\wh a\circ\si\circ i_N= i_N$.
\end{theorem}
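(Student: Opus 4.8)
The plan is to construct the desired map $\si$ fibrewise from the splitting provided by Corollary \ref{mm45}, applied to the fibre metric $a$ viewed as a section of the appropriate tensor bundle. Concretely, I would regard $a=a^{\la\m}_{ij}$ as a fibre metric of constant rank $r$ in the vector bundle $E=TX\op\otimes_Y VY\to Y$ (twisted by the line bundle $\op\w^n T^*X$, which plays no role for the splitting argument). By Corollary \ref{mm45}, there is a decomposition $E=\Ker a\op\oplus_Y E'$ with $E'=E/\Ker a$ and $a$ restricting to a non-degenerate fibre metric on $E'$. Via this splitting, $a$ determines an isomorphism $E'\op\to_Y E'^*$, hence an inverse isomorphism; composing with the inclusions and projections of the direct sum, one obtains a linear bundle map $\si:\Pi\op\to_Y E=T^*X\op\otimes_Y VY$ over $Y$. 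In coordinates it is given by $\si^{ij}_{\la\m}$, the "pseudoinverse" matrix characterised by $a^{\la\n}_{ik}\si^{kj}_{\n\m}a^{\m\rho}_{jl}=a^{\la\rho}_{il}$, i.e. $\wh a\circ\si$ acts as the identity on the image of $\wh a$.

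The key steps, in order, are: (i) identify $a$ with a fibre metric of rank $r$ on $E=TX\op\otimes_Y VY$ and invoke Corollary \ref{mm45} to get $E=\Ker a\op\oplus_Y E'$; (ii) note that by the almost-regularity hypothesis and the assumption $s=\wh 0$, one has $N_L=\ol N_L=\wh a(E)$, and that $\wh a$ factors as $E\to E'\op\xrightarrow{\ \sim\ } N_L$, the second arrow being an isomorphism induced by the non-degenerate restriction of $a$; (iii) define $\si$ as the composition $\Pi\supset N_L\op\xrightarrow{\ \sim\ } E'\hookrightarrow E$, extended by zero on a complement of $N_L$ in $\Pi$ (or, since only the behaviour on $N_L$ is asserted, simply any linear extension); (iv) verify $\wh a\circ\si\circ i_N=i_N$ directly: on $N_L$, $\si$ is by construction a right inverse of the isomorphism $E'\to N_L$ composed with the inclusion, so $\wh a\circ\si$ restricts to the identity of $N_L$; (v) record the coordinate expression $\ol y^i_\la\circ\si=\si^{ij}_{\la\m}p^\m_j$ with $\si$ symmetric in the appropriate indices.

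I expect the main obstacle to be the bookkeeping needed to extend the fibrewise pseudoinverse to a globally defined \emph{smooth} linear bundle map on all of $\Pi$, not merely on $N_L$. The naive fibrewise pseudoinverse (e.g. via an auxiliary Riemannian metric on the fibres) is smooth precisely because $a$ has \emph{constant} rank — this is exactly where almost regularity is used — but one must be a little careful that the splitting (\ref{mm50}) furnished by Corollary \ref{mm45} is a smooth splitting of vector bundles over $Y$, which is the content of that corollary (a consequence of the splitting theorem for short exact sequences of vector bundles). Once the smooth splitting is in hand, choosing a complementary subbundle to $N_L=\wh a(E')$ inside $\Pi\to Y$ (again using an auxiliary fibre metric, or the linear structure of the vector bundle $\Pi\to Y$) and declaring $\si$ to vanish there produces the required global $\si$; the identity $\wh a\circ\si\circ i_N=i_N$ then holds by construction. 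The remaining verifications — linearity over $Y$, the coordinate form, and symmetry of $\si^{ij}_{\la\m}$ — are routine.
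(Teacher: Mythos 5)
Your proposal is correct and takes essentially the same route as the paper: both invoke Corollary \ref{mm45} to split $T^*X\op\ot_Y VY$ into $\Ker a\op\oplus_Y E'$, invert the non-degenerate restriction of $a$ on $E'$ to obtain the pseudoinverse $\si_0$ satisfying $a\circ\si_0\circ a=a$, and extend arbitrarily on the complementary summand (the paper makes your ``any linear extension'' explicit as an arbitrary symmetric $\si_1$ valued in $\op\vee^2\Ker a$, giving $\si=\si_0\oplus\si_1$ with $a\circ\si_1=\si_1\circ a=0$). The smoothness concern you raise is resolved exactly as you anticipate, by the smooth bundle splitting of Corollary \ref{mm45} and an adapted atlas in which $a$ is diagonal on $E'$.
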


\begin{proof}
The map (\ref{N17}) is a solution of algebraic equations
\mar{N45}\beq
a^{\la\mu}_{ij}\si^{jk}_{\mu\al}a^{\al\nu}_{kb}=a^{\la\nu}_{ib}.
\label{N45}
\eeq
By virtue of Corollary \ref{mm45}, there exists the bundle
splitting
\mar{mm46}\beq
TX^*\op\ot_Y VY=\Ker a\op\oplus_Y E' \label{mm46}
\eeq
and an atlas of this bundle such that transition functions of
$\Ker a$ and $E'$ are mutually independent. Since $a$ is a
non-degenerate section of
\be
\op\w^n T^*X\op\ot_Y(\op\vee^2E'^*)\to Y,
\ee
there exist fibre coordinates $(\ol y^A)$ on $E'$ such that $a$ is
brought into a diagonal matrix with non-vanishing components
$a_{AA}$. Due to the splitting (\ref{mm46}), we have the
corresponding bundle splitting
\be
TX\op\ot_Y V^*Y=(\Ker a)^*\op\oplus_Y E'^*.
\ee
Then a desired map $\si$ is represented by a direct sum
$\si_1\oplus\si_0$ of an arbitrary section $\si_1$ of a fibre
bundle
\be
\op\w^n TX\op\ot_Y(\op\vee^2\Ker a)\to Y
\ee
and the section $\si_0$ of a fibre bundle
\be
\op\w^n TX\op\ot_Y(\op\vee^2E')\to Y
\ee
which has non-vanishing components $\si^{AA}=(a_{AA})^{-1}$ with
respect to the fibre coordinates $(\ol y^A)$ on $E'$. We have
relations
\mar{N21}\beq
\si_0=\si_0\circ a\circ\si_0, \qquad a\circ\si_1=0, \qquad
\si_1\circ a=0. \label{N21}
\eeq
\end{proof}

\begin{remark}
Using the relations (\ref{N21}), one can write the above
assumption, that the Lagrangian constraint space $N_L\to Y$ admits
a global zero section, in the form
\mar{NN21}\beq
b^\m_i=a^{\m\la}_{ij}\si^{jk}_{\la\nu} b^\nu_k. \label{NN21}
\eeq
\end{remark}

With the relations (\ref{N16}), (\ref{N45}) and (\ref{N21}), we
obtain a splitting
\mar{N18,b4122}\ben
&& J^1Y=\cS(J^1Y)\op\oplus_Y \cF(J^1Y)=\Ker\wh L\op\oplus_Y{\rm
Im}(\si\circ
\wh L), \label{N18} \\
&& y^i_\la=\cS^i_\la+\cF^i_\la= [y^i_\la
-\si^{ik}_{\la\al} (a^{\al\m}_{kj}y^j_\m + b^\al_k)]+
[\si^{ik}_{\la\al} (a^{\al\m}_{kj}y^j_\m + b^\al_k)],
\label{b4122}
\een
where, in fact,  $\si=\si_0$ owing to the relations (\ref{N21})
and (\ref{NN21}). Then with respect to the coordinates $\cS^i_\la$
and $\cF^i_\la$ (\ref{b4122}), the Lagrangian (\ref{N12}) reads
\mar{cmp31}\beq
\cL=\frac12 a^{\la\m}_{ij}\cF^i_\la\cF^j_\m +c', \label{cmp31}
\eeq
where
\mar{mos018}\beq
\cF^i_\la= \si_0{}^{ik}_{\la\al} a^{\al\m}_{kj}(y^j_\m -\G^j_\m)
\label{mos018}
\eeq
for some $(\Ker\wh L)$-valued connection $\G$ (\ref{N16}) on $Y\to
X$. Thus, the Lagrangian (\ref{N12}), written in the form
(\ref{cmp31}), factorizes through the covariant differential
relative to any such connection.

Turn now to PS Hamiltonian formalism. Let $L$ (\ref{N12}) be an
almost regular quadratic Lagrangian brought into the form
(\ref{cmp31}), $\si=\si_0+\si_1$ the linear map (\ref{N17}) and
$\G$ the connection (\ref{N16}). Similarly to the splitting
(\ref{N18}) of a configuration space $J^1Y$, we have the following
decomposition of a momentum phase space:
\mar{N20,'}\ben
&& \Pi=\cR(\Pi)\op\oplus_Y\cP(\Pi)=\Ker\si_0 \op\oplus_Y N_L, \label{N20}
\\ && p^\la_i = \cR^\la_i+\cP^\la_i= [p^\la_i -
a^{\la\m}_{ij}\si^{jk}_{\m\al}p^\al_k] +
[a^{\la\m}_{ij}\si^{jk}_{\m\al}p^\al_k]. \label{N20'}
\een
The relations (\ref{N21}) lead to the equalities
\mar{m25}\beq
\si_0{}^{jk}_{\m\al}\cR^\al_k=0, \qquad
\si_1{}^{jk}_{\m\al}\cP^\al_k=0, \qquad \cR^\la_i\cF^i_\la=0.
\label{m25}
\eeq
Relative to the coordinates (\ref{N20'}), the Lagrangian
constraint space $N_L$ (\ref{N13}) is given by the equations
\mar{zzz}\beq
\cR^\la_i= p^\la_i - a^{\la\m}_{ij}\si^{jk}_{\m\al}p^\al_k=0.
\label{zzz}
\eeq

Let the splitting (\ref{mm46}) be provided with adapted fibre
coordinates $(\ol y^a,\ol y^A)$ such that the matrix function $a$
(\ref{N13'}) is brought into a diagonal matrix with non-vanishing
components $a_{AA}$. Then the Legendre bundle $\Pi$ (\ref{N20}) is
endowed with the dual (non-holonomic) fibre coordinates
$(p_a,p_A)$ where $p_A$ are coordinates on a Lagrangian constraint
space $N_L$, given by the equalities $p_a=0$. Relative to these
coordinates, $\si_0$ becomes the diagonal matrix
\mar{m39}\beq
\si_0^{AA}=(a_{AA})^{-1}, \qquad \si_0^{aa}=0, \label{m39}
\eeq
while $\si_1^{Aa}=\si_1^{AB}=0$. Let us write
\mar{m41}\beq
p_a=M_a{}^i_\la p^\la_i, \qquad p_A=M_A{}^i_\la p^\la_i,
\label{m41}
\eeq
where $M$ are the matrix functions on $Y$ which obeys the
relations
\mar{y30}\ben
&& M_a{}^i_\la a^{\la\m}_{ij}=0, \qquad
(M^{-1})^a{}_i^\la\si_0{}^{ij}_{\la\m}=0, \label{y30}\\
&& M_A{}^i_\la(a\circ\si_0)^{\la j}_{i\m}= M_A{}^j_\m, \qquad
(M^{-1})^A{}_j^\m M_A{}^i_\la=
a_{jk}^{\m\nu}\si_0{}^{ki}_{\nu\la}. \nonumber
\een

Let us consider the affine Hamiltonian map
\mar{N19}\beq
\Phi=\wh\G+\si:\Pi \op\to J^1Y, \qquad \Phi^i_\la = \G^i_\la +
\si^{ij}_{\la\m}p^\m_j,\label{N19}
\eeq
and the Hamiltonian form
\mar{N22}\ben
&& H(\G,\si_1)=H_\Phi +\Phi^*L= p^\la_idy^i\w\om_\la - \label{N22} \\
&& \qquad [\G^i_\la p^\la_i +\frac12 \si_0{}^{ij}_{\la\m}p^\la_ip^\m_j
+\si_1{}^{ij}_{\la\m}p^\la_ip^\m_j -c']\om = \nonumber\\
&& \qquad (\cR^\la_i+\cP^\la_i)dy^i\w\om_\la - [(\cR^\la_i+\cP^\la_i)\G^i_\la +\frac12
\si_0{}^{ij}_{\la\m}\cP^\la_i\cP^\m_j
+\si_1{}^{ij}_{\la\m}\cR^\la_i\cR^\m_j -c']\om.\nonumber
\een

\begin{theorem} \label{cmp30} \mar{cmp30} The Hamiltonian
forms $H(\G,\si_1)$ (\ref{N22}) parameterized by connections $\G$
(\ref{N16}) are weakly associated to the Lagrangian (\ref{N12}),
and they constitute a complete set.
\end{theorem}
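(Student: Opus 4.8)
\section*{Proof proposal for Theorem~\ref{cmp30}}

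The plan is to prove the two claims separately: weak association will follow from Theorem~\ref{jp}, and completeness from Theorem~\ref{3.24}, so that the only genuinely geometric work sits in the completeness half.

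For weak association, I would verify that the affine Hamiltonian map $\Phi=\wh\G+\si$ of (\ref{N19}) satisfies the hypothesis (\ref{ps130}) of Theorem~\ref{jp}, i.e. $\wh L\circ\Phi\circ\wh L=\wh L$. Applying to a point $(x^\la,y^i,y^i_\la)$ first the Legendre map (\ref{N13}), then $\Phi$, then (\ref{N13}) again gives
\be
p^\la_i\longmapsto a^{\la\m}_{ij}\bigl(\G^j_\m+\si^{jk}_{\m\al}(a^{\al\nu}_{kl}y^l_\nu+b^\al_k)\bigr)+b^\la_i .
\ee
Here $a^{\la\m}_{ij}\G^j_\m+b^\la_i=0$ by (\ref{N16}); and since $\wh a\circ\si_1=0$ the surviving term is $a^{\la\m}_{ij}\si_0{}^{jk}_{\m\al}(a^{\al\nu}_{kl}y^l_\nu+b^\al_k)$, which collapses to $a^{\la\nu}_{il}y^l_\nu+b^\la_i=p^\la_i\circ\wh L$ by (\ref{N45}) and the zero-section condition (\ref{NN21}). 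Thus (\ref{ps130}) holds for every admissible $\G$ and $\si_1$, and since $H(\G,\si_1)$ is by definition (\ref{N22}) the form $H_\Phi+\Phi^*L$, Theorem~\ref{jp} yields that it is weakly associated to $L$.

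For completeness, by Theorem~\ref{3.24} it suffices to show that any solution $s$ of the Euler--Lagrange equation (\ref{b327}) for $L$ admits, within our family, a form whose associated Hamiltonian map $\wh H$ (\ref{415}) satisfies (\ref{2.36}), i.e. $\wh H\circ\wh L\circ J^1s=J^1s$; then $r=\wh L\circ J^1s$ solves the covariant Hamilton equation and $\pi_{\Pi Y}\circ r=s$. I would first compute $\wh H$ on the constraint space $N_L$: there $p^\la_i=\cP^\la_i$, so the $\si_1$-contributions to $\dr^i_\la\cH$ vanish by (\ref{m25}) and $\wh H|_{N_L}$ has components $y^i_\la=\G^i_\la+\si_0{}^{ij}_{\la\m}p^\m_j$. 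Using the identity (\ref{mos018}) for the $\cF(J^1Y)$-component, one then finds that $\wh H\circ\wh L\circ J^1s$ has components $\G^i_\la(s)+\cF^i_\la$, where $\cF^i_\la$ is the $\cF$-part of $J^1s$ in the splitting (\ref{N18})--(\ref{b4122}). Since $J^1s$ has components $\dr_\la s^i=\cS^i_\la+\cF^i_\la$, the relation (\ref{2.36}) is equivalent to $\G^i_\la(s(x))=\cS^i_\la(x)$ for all $x\in X$.

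It then remains to produce, within the affine family (\ref{N16}), a connection $\G$ whose restriction to the section $s$ coincides with the form $\cS$ along $s$. I would fix any $\G_0$ of the form (\ref{N16}) --- it is $\Ker\wh L$-valued --- and check, via (\ref{N45}) and (\ref{NN21}), that $\cS$ is $\Ker\wh L$-valued as well, so that $x\mapsto\cS^i_\la(x)-(\G_0)^i_\la(s(x))$ is a section over the closed imbedded submanifold $s(X)\subset Y$ of the vector bundle $T^*X\op\otimes_Y\Ker\wh a$, which by (\ref{cmp21}) is the model bundle of the affine space (\ref{N16}). By paracompactness this extends to a global soldering form $\f$ with $a^{\la\m}_{ij}\f^j_\m=0$, and $\G=\G_0+\f$ is then the desired connection, so $H(\G,\si_1)$ fulfils (\ref{2.36}) and Theorem~\ref{3.24} applies. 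The hard part will be precisely this last step --- verifying that $\cS$ takes its values in the model bundle $\Ker\wh a$ relative to $\G_0$, so that the required correction is an admissible soldering form, and then extending it off $s(X)$ to all of $Y$; everything else is the routine index algebra of $a$, $\si$, $\G$ and $b$.
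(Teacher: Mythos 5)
Your proposal is correct and follows essentially the same route as the paper: weak association via Theorem~\ref{jp} applied to $\Phi=\wh\G+\si$ (your index computation is exactly what the paper summarizes as ``by the very definitions of $\G$ and $\si$''), and completeness by exhibiting, for each Euler--Lagrange solution $s$, a connection $\G$ of type (\ref{N16}) with $\G\circ s=\cS\circ J^1s$ so that (\ref{2.36}) holds. The only cosmetic difference is in that last step: where you extend the soldering-form correction $\cS\circ J^1s-\G_0\circ s$ off $s(X)$, the paper simply sets $\G=\cS\circ\G'$ for any connection $\G'$ having $s$ as an integral section --- the same construction in disguise.
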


\begin{proof}
By the very definitions of $\G$ and $\si$, the Hamiltonian map
(\ref{N19}) satisfies the condition (\ref{2.30a}). Then
$H(\G,\si_1)$ is weakly associated to $L$ (\ref{N12}) in
accordance with Theorem \ref{jp}. Let us write the corresponding
Hamilton equation (\ref{b4100a}) for a section $r$ of a Legendre
bundle $\Pi\to X$. It reads
\mar{N29}\beq
J^1s= (\wh\G+\si)\circ r, \qquad s=\pi_{\Pi Y}\circ r. \label{N29}
\eeq
Due to the surjections $\cS$ and $\cF$ (\ref{N18}), the Hamilton
equation (\ref{N29}) is brought into the two parts
\mar{N23,8}\ben
&&\cS\circ J^1s=\G\circ s, \qquad \dr_\la r^i- \si_0{}^{ik}_{\la\al} (a^{\al\m}_{kj}\dr_\mu r^j +
b^\al_k)=\G^i_\la\circ s, \label{N23}\\
&&\cF \circ J^1s=\si\circ r, \qquad \si_0{}^{ik}_{\la\al} (a^{\al\m}_{kj}\dr_\mu r^j + b^\al_k)=
\si^{ik}_{\la\al}r^\al_k. \label{N28}
\een
Let $s$ be an arbitrary section of $Y\to X$, e.g., a solution of
the Euler -- Lagrange equation. There exists the connection $\G$
(\ref{N16}) such that the relation (\ref{N23}) holds, namely,
$\G={\cal S}\circ\G'$ where $\G'$ is a connection on $Y\to X$
which has $s$ as an integral section. It is easily seen that, in
this case, the Hamiltonian map (\ref{N19}) satisfies the relation
(\ref{2.36}) for $s$. Hence, the Hamiltonian forms (\ref{N22})
constitute a complete set.
\end{proof}

It is readily observed that, if $\si_1=0$, then $\Phi=\wh H(\G)$,
and the Hamiltonian forms $H(\G,\si_1=0)$ (\ref{N22}) are
associated to the Lagrangian (\ref{N12}). For different $\si_1$,
we have different complete sets of Hamiltonian forms (\ref{N22}).
Hamiltonian forms $H(\G,\si_1)$ and $H(\G',\si_1)$ (\ref{N22}) of
such a complete set differ from each other in the term
$\f^i_\la\cR^\la_i$, where $\f$ are the soldering forms
(\ref{cmp21}). This term vanishes on the Lagrangian constraint
space (\ref{zzz}). Accordingly, the covariant Hamilton equations
for different Hamiltonian forms $H(\G,\si_1)$ and $H(\G',\si_1)$
(\ref{N22}) differ from each other in the equations (\ref{N23}).

Since the Lagrangian constraint space $N_L$ (\ref{zzz}) is an
imbedded subbundle of $\Pi \to Y$, all Hamiltonian forms
$H(\G,\si_1)$ (\ref{N22}) define a unique constrained Hamiltonian
form $H_N$ (\ref{b4300}) on $N_L$ which reads
\mar{94f1}\beq
H_N=i_N^*H(\G,\si_1)= \cP^\la_i dy^i\w\om_\la - [\cP^\la_i\G^i_\la
+\frac12 \si_0{}^{ij}_{\la\m}\cP^\la_i\cP^\m_j
 -c']\om.\label{94f1}
\eeq
In view of the relations (\ref{m25}), the corresponding
constrained Lagrangian $L_N$ (\ref{cmp81}) on $J^1N_L$ takes a
form
\mar{bv2}\beq
L_N=h_0(H_N)=(\cP^\la_i\cF_\la^i
-\frac12\si_0{}^{ij}_{\la\m}\cP^\la_i\cP^\m_j + c')\om.
\label{bv2}
\eeq
It is the pull-back onto $J^1N_L$ of a Lagrangian
\mar{m16}\beq
L_{H(\G,\si_1)}=\cR^\la_i(\cS^i_\la-\G^i_\la) +\cP^\la_i\cF_\la^i
-\frac12\si_0{}^{ij}_{\la\m}\cP^\la_i\cP^\m_j -
\frac12\si_1{}^{ij}_{\la\m} \cR^\la_i \cR^\m_j+ c' \label{m16}
\eeq
on $J^1\Pi$ for any Hamiltonian form $H(\G,\si_1)$ (\ref{N22}).

In fact, the Lagrangian $L_N$ (\ref{bv2}) is defined on the
product $N_L\times_Y J^1Y$ (see Remark \ref{91r10}). Since the
momentum phase space $\Pi$ (\ref{N20}) is a trivial bundle ${\rm
pr}_2:\Pi\to N_L$ over the Lagrangian constraint space $N_L$, one
can consider the pull-back
\mar{m32}\beq
L_\Pi=(\cP^\la_i\cF_\la^i -\frac12\si_0{}^{ij}_{\la\m}\cP^\la_i
\cP^\m_j + c')\om \label{m32}
\eeq
of the constrained Lagrangian $L_N$ (\ref{bv2}) onto
$\Pi\op\times_Y J^1Y$.

In a case of quadratic Lagrangians, we can improve Theorem
\ref{cmp22} as follows.

\begin{theorem} \label{cmp23} For every Hamiltonian
form $H$ (\ref{N22}), the Hamilton equations (\ref{b4100b}) and
(\ref{N28}) restricted to a Lagrangian constraint space $N_L$ are
equivalent to the constrained Hamilton equation (\ref{N44}).
\end{theorem}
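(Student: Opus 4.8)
The plan is to write both sides of the claimed equivalence as explicit first–order PDE systems for a section $r$ of $N_L\to X$ and compare them termwise, using the splittings (\ref{N18}), (\ref{N20'}) and the algebraic identities (\ref{N16}), (\ref{N21}), (\ref{N45}), (\ref{m25}), (\ref{NN21}) that tie together $a$, $\si_0$, $\si_1$, $\G$ and the matrices $M$ of (\ref{m41}). First I would unfold the constrained Hamilton equation (\ref{N44}). Since $N_L$ is an imbedded subbundle with adapted (non–holonomic) fibre coordinates $p_A$ from (\ref{m41}), a global frame of vertical vector fields on $N_L\to X$ is $\{\dr_i,\ \dr/\dr p_A\}$, so (\ref{N44}) is equivalent to the two families $r^*(\dr_i\rfloor dH_N)=0$ and $r^*((\dr/\dr p_A)\rfloor dH_N)=0$, with $H_N$ given by (\ref{94f1}).

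Contracting $dH_N$ with the momentum derivatives $\dr/\dr p_A$ and pulling back along $r$ produces, after using $\si_0 a\si_0=\si_0$ and $a\si_0 a=a$ (relations (\ref{N21}), (\ref{N45})), precisely the condition that the $\cF$–part of the velocity of $r$ equals $\si_0\cP$, i.e. the Hamilton equation (\ref{N28}) restricted to $N_L$; here one uses that $\si=\si_0$ on $N_L$ because $\si_1\cP=0$ by (\ref{m25}). Contracting $dH_N$ with $\dr_i$ and pulling back gives a momentum equation, which I would match with (\ref{b4100b})$|_{N_L}$ by means of the chain–rule identity $\cH_N(y,p_A)=\cH\bigl(y,\cP(y,p_A)\bigr)$ — valid because the $\si_1$–term of the full Hamiltonian $\cH$ in (\ref{N22}) vanishes at $p=\cP$, again by (\ref{m25}) — together with the reduction of the extra $y$–derivative terms coming from the $y$–dependent change of frame between the holonomic momenta $p^\la_i$ and the fibre coordinates $p_A$ of $N_L$; this reduction is controlled by the projector relation $(a\si_0)^2=a\si_0$ and by $(\dr_i\si_1)(\cP,\cP)=0$ on $N_L$. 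The converse implication is obtained by reading these two identifications backwards, so that {(\ref{b4100b}), (\ref{N28})}$|_{N_L}$ forces $r^*(u_N\rfloor dH_N)=0$ for every $u_N$. Along the way the computation recovers Theorem \ref{cmp22} as the trivial half, and it makes transparent why (\ref{N44}) is blind to the $\Ker\wh a$–components of the velocity: those are governed only by the omitted equation (\ref{N23}), which does not appear in the constrained picture.

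An equivalent route, which I would use as a cross–check, is to invoke the identification recorded in Section 9 of (\ref{N44}) with the Euler–Lagrange equation of the constrained Lagrangian $L_N$ (\ref{bv2}): variation of $L_N=(\cP^\la_i\cF^i_\la-\frac12\si_0{}^{ij}_{\la\m}\cP^\la_i\cP^\m_j+c')\om$ in the non–dynamical momenta $\cP$ gives (\ref{N28}), variation in $y^i$ gives (\ref{b4100b})$|_{N_L}$, and the key step $\dr L_N/\dr y^i_\la=\cP^\la_i$ on $N_L$ is exactly (\ref{N21})/(\ref{m25}) in disguise; the same cancellation of cross–terms in the $y^i$–variation is then needed.

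The step I expect to be the main obstacle is the momentum equation: carefully disentangling the holonomic momenta $p^\la_i$ (in which (\ref{b4100b}) is phrased) from the adapted fibre coordinates $p_A$ on $N_L$ (in which $H_N$, and hence $L_N$, is phrased), and verifying that every cross–term produced by the $y$–dependence of this change of coordinates reduces correctly by virtue of the projector identities (\ref{N21}), (\ref{N45}) and the vanishing relations (\ref{m25}) valid on $N_L$. Everything else — the contraction and pull–back of the forms (\ref{94f1}) and (\ref{ps13}) — is routine.
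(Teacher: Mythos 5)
Your proposal is correct and follows the same strategy as the paper: use the splitting (\ref{N20}) to decompose the vertical directions according to the constraint, and identify the contraction of the Hamiltonian form along the momentum directions tangent to $N_L$ with (\ref{N28}) and along the $\dr_i$ directions with (\ref{b4100b}). The one genuine difference is the choice of frame: you contract $dH_N$ with the adapted frame $\{\dr_i,\dr/\dr p_A\}$ on $N_L$, which forces you to untangle the $y$-dependent change of coordinates $p_A=M_A{}^i_\la p^\la_i$ of (\ref{m41}) --- precisely the step you flag as the main obstacle. The paper instead stays on $\Pi$: it projects an arbitrary vertical vector field $u$ to $u_{TN}=u^i\dr_i+a^{\la\m}_{ij}\si^{jk}_{\m\al}u^\al_k\dr^i_\la$ and considers $r^*(u_{TN}\rfloor dH)=0$, so that the $\dr_i$-contraction is literally (\ref{b4100b}) with no cross-terms, the $a\si_0\dr^i_\la$-contraction reduces to (\ref{N28}) by (\ref{N16}) and (\ref{N21}), and the restriction to $N_L$ is the constrained equation (\ref{N44'}) because $H_N=i_N^*H$ by (\ref{2.32}); this buys a shorter computation at the cost of working with a projected, rather than intrinsic, family of vector fields. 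Your cross-check via the Euler--Lagrange equations of $L_N$ (\ref{bv2}) is consistent with the remark following Lemma \ref{cmp84} but is not needed for the argument.
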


\begin{proof} Due to the splitting (\ref{N20}), we have the corresponding
splitting of the vertical tangent bundle $V_Y\Pi$ of a Legendre
bundle $\Pi\to Y$. In particular, any vertical vector field $u$ on
$\Pi\to X$ admits the decomposition
\be
u= [u-u_{TN}] + u_{TN},  \qquad  u_{TN}=u^i\dr_i
+a^{\la\m}_{ij}\si^{jk}_{\m\al}u^\al_k\dr_\la^i,
\ee
such that $u_N=u_{TN}\mid_{N_L}$ is a vertical vector field on a
Lagrangian constraint space $N_L\to X$. Let us consider the
equations
\beq
r^*(u_{TN}\rfloor dH)=0 \label{cmp15}
\eeq
where $r$ is a section of $\Pi\to X$ and $u$ is an arbitrary
vertical vector field on $\Pi\to X$. They are equivalent to the
pair of equations
\ben
&& r^*(a^{\la\m}_{ij}\si^{jk}_{\m\al}\dr_\la^i\rfloor dH)=0,
\label{b4125a} \\
&& r^*(\dr_i\rfloor dH)=0. \label{b4125b}
\een
The equation (\ref{b4125b}) obviously is the Hamilton equation
(\ref{b4100b}) for $H$. Bearing in mind the relations (\ref{N16})
and (\ref{N21}), one can easily show that the equation
(\ref{b4125a}) coincides with the Hamilton equation (\ref{N28}).
The proof is completed by observing that, restricted to a
Lagrangian constraint space $N_L$, the equation (\ref{cmp15}) is
exactly the constrained Hamilton equation (\ref{N44'}).
\end{proof}

Theorem \ref{cmp23} shows that, restricted to a Lagrangian
constraint space, the Hamilton equation for different Hamiltonian
forms (\ref{N22}) associated to the same quadratic Lagrangian
(\ref{N12}) differ from each other in the equations (\ref{N23}).
These equations are independent of momenta and play a role of the
gauge-type conditions as follows.

By virtue of Theorem \ref{3.01}, the constrained Hamilton equation
is quasi-equivalent to the Cartan equation. A section $\ol s$ of
$J^1Y\to X$ is a solution of the Cartan equation for an almost
regular quadratic  Lagrangian (\ref{N12}) iff $r=\wh L\circ \ol s$
is a solution of the Hamilton equations (\ref{b4100b}) and
(\ref{N28}). In particular, let $\ol s$ be such a solution of the
Cartan equation and $\ol s_0$ a section of a fibre bundle
$T^*X\op\ot_Y VY\to X$ which takes its values into $\Ker \ol L$
(see (\ref{N13'})) and projects onto a section $s=\pi^1_0\circ \ol
s$ of $Y\to X$. Then the affine sum $\ol s +\ol s_0$ over
$s(X)\subset Y$  is also a solution of the Cartan equation. Thus,
we come to the notion of a gauge-type freedom of the Cartan
equation for an almost regular quadratic Lagrangian $L$. One can
speak of the gauge classes of solutions of the Cartan equation
whose elements differ from each other in the above-mentioned
sections $\ol s_0$. Let $z$ be such a gauge class whose elements
project onto a section $s$ of $Y\to X$. For different connections
$\G$ (\ref{N16}), we consider a condition
\beq
\cS\circ\ol s=\G\circ s, \qquad \ol s\in z. \label{cmp25}
\eeq

\begin{lemma} \label{cmp26}
(i) If two elements $\ol s$ and $\ol s'$ of the same  gauge class
$z$ obey the same condition (\ref{cmp25}), then $\ol s=\ol s'$.
(ii) For any solution $\ol s$ of the Cartan equation, there exists
a connection (\ref{N16}) which fulfills the condition
(\ref{cmp25}).
\end{lemma}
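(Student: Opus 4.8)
The plan is to deduce both parts from the explicit splitting (\ref{N18})--(\ref{b4122}) of $J^1Y$ together with the fact that, for an almost regular quadratic $L$, the subbundle $\Ker\wh L\to Y$ is an affine subbundle of $J^1Y\to Y$ modelled over the vector bundle $\Ker\ol L$ of (\ref{N13'}).

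For (i): two elements $\ol s,\ol s'$ of the same gauge class $z$ project onto the same section $s$, so $\ol s'=\ol s+\ol s_0$, where $\ol s_0$ is a section of $T^*X\op\ot_Y VY$ over $s(X)$ with values in $\Ker\ol L$, i.e. $a^{\al\m}_{kj}(\ol s_0)^j_\m=0$. In (\ref{b4122}) the $\cF$-part $\cF^i_\la=\si_0{}^{ik}_{\la\al}(a^{\al\m}_{kj}y^j_\m+b^\al_k)$ depends on the jet coordinates only through $a^{\al\m}_{kj}y^j_\m$, hence $\cF\circ\ol s'=\cF\circ\ol s$; subtracting the two instances of the identity $y^i_\la=\cS^i_\la+\cF^i_\la$ of (\ref{b4122}) then gives $\cS\circ\ol s'=\cS\circ\ol s+\ol s_0$. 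If $\ol s$ and $\ol s'$ obey the same condition (\ref{cmp25}), i.e. $\cS\circ\ol s=\G\circ s=\cS\circ\ol s'$ for one and the same connection $\G$ of the form (\ref{N16}), we conclude $\ol s_0=0$, whence $\ol s'=\ol s$.

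For (ii): let $\ol s$ be a solution of the Cartan equation and $s=\pi^1_0\circ\ol s$. The composite $\cS\circ\ol s\colon X\to\Ker\wh L$ is a section of $\Ker\wh L\to X$ lying over $s$, i.e. a section of the affine bundle $\Ker\wh L\to Y$ defined over the closed imbedded submanifold $s(X)\subset Y$, and it suffices to extend it to a global section. Choose any connection $\G_0$ of type (\ref{N16}) --- such connections exist by the remark following (\ref{N16}) --- and write $\cS\circ\ol s=\G_0\circ s+\f_0$ over $s(X)$, with $\f_0$ a $\Ker\ol L$-valued soldering form (\ref{cmp21}) defined over $s(X)$. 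Using a tubular neighbourhood of $s(X)$ and a partition of unity, extend $\f_0$ to a global soldering form $\f$ satisfying (\ref{cmp21}), and set $\G=\G_0+\f$. Then $\G$ is again a connection obeying (\ref{N16}), and $\G\circ s=\cS\circ\ol s$, so the condition (\ref{cmp25}) holds for $\ol s$ with this $\G$.

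The step I expect to be delicate is the extension in (ii): the partial section $\cS\circ\ol s$ over $s(X)$ must be extended to all of $Y$ while remaining inside $\Ker\wh L$. This is precisely where almost regularity enters --- it is what makes $\Ker\wh L$ a genuine, locally trivial, nonempty affine subbundle of $J^1Y\to Y$ modelled over the vector bundle $\Ker\ol L$ (via Corollary \ref{mm45}), so that the extension problem reduces to the standard one of extending a smooth section of a vector bundle from a closed submanifold, solved by a partition of unity. Everything else is routine bookkeeping with the splittings (\ref{N18}) and (\ref{b4122}).
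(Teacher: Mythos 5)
Your proposal is correct and follows essentially the same route as the paper: part (i) reduces to the observation that the affine difference $\ol s-\ol s'$ is killed by $a$ and hence fixed by $\cS$, and part (ii) writes $\cS\circ\ol s$ as a reference connection of type (\ref{N16}) plus a $\Ker\ol L$-valued soldering form over $s(X)$, which is then extended to $Y$ and added to the connection. The only (harmless) cosmetic difference is that the paper takes the reference connection to be the one with $\cS\circ J^1s=\G\circ s$ from the proof of Theorem \ref{cmp30}, while you use an arbitrary $\G_0$ obeying (\ref{N16}); your explicit tubular-neighbourhood extension just fills in what the paper calls the ``prolongation'' of the local section of $\Ker\ol L\to Y$.
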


\begin{proof}
(i) Let us consider the affine difference  $\ol s-\ol s'$ over
$s(X)\subset Y$. We have $\cS(\ol s-\ol s')=0$ iff $\ol s=\ol s'$.
(ii) In the proof of Theorem \ref{cmp30}, we have shown that,
given $s=\pi^0_1\circ\ol s$, there exists the connection $\G$
(\ref{N16}) which fulfills the relation (\ref{N23}). Let us
consider the affine difference $\cS(\ol s- J^1s)$ over
$s(X)\subset Y$. This is a local section of the vector bundle
$\Ker \ol L\to Y$ over $s(X)$. Let $\f$ be its prolongation onto
$Y$. It is easy to see that $\G+\f$ is a desired connection.
\end{proof}

Due to the properties in Lemma \ref{cmp26}, one can treat
(\ref{cmp25}) as a gauge-type condition on solutions of the Cartan
equation. The Hamilton equation (\ref{N23}) exemplifies this
gauge-type condition when $\ol s= J^1s$ is a solution of the Euler
-- Lagrange equation. At the same time, the above-mentioned
freedom characterizes solutions of the Cartan equation, but not of
the Euler -- Lagrange one. First of all, this freedom reflects the
degeneracy of the Cartan equation (\ref{b336a}). Therefore, e.g.,
in Hamiltonian gauge theory (Section 13), the above mentioned
freedom is not related directly to the familiar gauge invariance.
Nevertheless, the Hamilton equation (\ref{N23}) are not gauge
invariant, and thus can play a role of gauge conditions in gauge
theory.

Now let us study symmetries of the Lagrangians $L_N$ (\ref{bv2})
and $L_\Pi$ (\ref{m32}) \cite{bs04,book09}. We aim to show that,
under certain conditions, they inherit Noether (i.e. vertical
classical) symmetries of an original Lagrangian $L$ (\ref{cmp31})
(Theorems \ref{y84} -- \ref{y41'}).

Let a vertical vector field $u=u^i\dr_i$ on $Y\to X$ be a
classical (Noether) symmetry of the Lagrangian $L$ (\ref{cmp31}),
i.e.,
\mar{m49}\beq
\bL_{J^1u}L=(u^i\dr_i +d_\la u^i\dr_i^\la)\cL\om=0. \label{m49}
\eeq
Since
\mar{y50}\beq
J^1u(y^i_\la-\G^i_\la)=\dr_ku^i(y^k_\la-\G^k_\la), \label{y50}
\eeq
one easily obtains from the equality (\ref{m49}) that
\mar{y45}\beq
u^k\dr_k a^{\la\m}_{ij}+ \dr_iu^k a^{\la\m}_{kj} +
a^{\la\m}_{ik}\dr_ju^k =0. \label{y45}
\eeq
It follows that the summands of the Lagrangian (\ref{cmp31}) are
invariant separately, i.e.,
\mar{y31}\beq
J^1u(a^{\la\m}_{ij}\cF^i_\la\cF^j_\m)=0, \qquad
J^1u(c')=u^k\dr_kc'=0. \label{y31}
\eeq
The equalities (\ref{mos018}), (\ref{y50}) and (\ref{y45}) give
the transformation law
\mar{b1}\beq
J^1u(a^{\la\m}_{ij}\cF^j_\m)=-\dr_i u^k a^{\la\m}_{kj}\cF^j_\m.
\label{b1}
\eeq
The  relations (\ref{N21}) and (\ref{y45}) lead to the equality
\mar{y53}\beq
a^{\la\mu}_{ij}[u^k\dr_k\si_0{}^{jn}_{\mu\al} -\dr_ku^j
\si_0{}^{kn}_{\mu\al} - \si_0{}^{jk}_{\mu\al}\dr_ku^n
]a^{\al\nu}_{nb}=0. \label{y53}
\eeq

Let us compare symmetries of the Lagrangian $L$ (\ref{cmp31}) and
the Lagrangian $L_N$ (\ref{bv2}). Given the Legendre map $\wh L$
(\ref{N13}) and the tangent morphism
\be
T\wh L: TJ^1Y\to TN_L, \qquad \dot p_A=(\dot y^i\dr_i +\dot
y^k_\nu\dr^\nu_k) (M_A{}^i_\la a^{\la\m}_{ij}\cF^j_\m),
\ee
let us consider the map
\mar{y79}\ben
&& T\wh L\circ J^1u: J^1Y\ni (x^\la,y^i,y^i_\la) \to
\label{y79}\\
&& u^i\dr_i + (u^k\dr_k +\dr_\nu u^k\dr^\nu_k) (M_A{}^i_\la
a^{\la\m}_{ij}\cF^j_\m)\dr^A= \nonumber \\
&& \qquad
u^i\dr_i + [u^k\dr_k(M_A{}^i_\la) a^{\la\m}_{ij}\cF^j_\m
+M_A{}^i_\la J^1u
(a^{\la\m}_{ij}\cF^j_\m)]\dr^A= \nonumber\\
&& \qquad  u^i\dr_i + [u^k\dr_k(M_A{}^i_\la)
a^{\la\m}_{ij}\cF^j_\m -M_A{}^i_\la \dr_iu^k
a^{\la\m}_{kj}\cF^j_\m]\dr^A=\nonumber \\
&& \qquad u^i\dr_i + [u^k\dr_k(a\circ\si_0)^{\m i}_{j\la}\cP^\la_i-
(a\circ\si_0)^{\m i}_{j\la}\dr_i u^k\cP^\la_k]\dr^j_\m\in TN_L,
\nonumber
\een
where the relations (\ref{y30}) and (\ref{b1}) have been used. Let
us assign to a point $(x^\la,y^i,\cP^\la_i)\in N_L$ some point
\mar{y81}\beq
(x^\la,y^i,y^i_\la) \in \wh L^{-1}(x^\la,y^i,\cP^\la_i)
\label{y81}
\eeq
and then the image of the point (\ref{y81}) under the morphism
(\ref{y79}). We obtain the map
\mar{y82}\beq
v_N:(x^\la,y^i,\cP^\la_i) \to u^i\dr_i +
[u^k\dr_k(a\circ\si_0)^{\m i}_{j\la}\cP^\la_i- (a\circ\si_0)^{\m
i}_{j\la}\dr_i u^k\cP^\la_k]\dr^j_\m \label{y82}
\eeq
which is independent of a choice of the point (\ref{y81}).
Therefore, it is a vector field on a Lagrangian constraint space
$N_L$. This vector field gives rise to a vector field
\mar{y83}\beq
Jv_N=u^i\dr_i + [u^k\dr_k(a\circ\si_0)^{\m i}_{j\la}\cP^\la_i-
(a\circ\si_0)^{\m i}_{j\la}\dr_i u^k\cP^\la_k]\dr^j_\m + d_\la
u^i\dr^\la_i \label{y83}
\eeq
on $N_L\op\times_Y J^1Y$.

\begin{theorem} \label{y84} \mar{y84}
The Lie derivative $\bL_{Jv_N} L_N$ of the Lagrangian $L_N$
(\ref{bv2}) along the vector field $Jv_N$ (\ref{y83}) vanishes,
i.e., any Noether symmetry $u$ of the Lagrangian $L$ (\ref{cmp31})
yields the symmetry $v_N$ (\ref{y82}) of the Lagrangian $L_N$
(\ref{bv2}).
\end{theorem}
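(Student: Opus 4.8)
The plan is to obtain the invariance $\bL_{Jv_N}L_N=0$ as a pull-back of the invariance $\bL_{J^1u}L=0$ already assumed for $L$, by exhibiting $L_N$ (\ref{bv2}), regarded as a density on $N_L\op\times_Y J^1Y$, as the pull-back of a Lagrangian on the fibre square $J^1Y\op\times_Y J^1Y$ along a surjective submersion, and by using that $Jv_N$ is the image of the diagonal jet prolongation of $u$ under this map. The only genuinely hand computation would then be the remark that a bilinear block $a^{\la\m}_{ij}w^i_\la {w'}^j_\m$ built from the $L$-invariant data is $J^1u$-invariant.

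Concretely, I would set $\phi=\wh L\op\times_Y\id_{J^1Y}:J^1Y\op\times_Y J^1Y\to N_L\op\times_Y J^1Y$, $(j_1,j_2)\mapsto(\wh L(j_1),j_2)$, regarding $L_N$ (\ref{bv2}) as a density on the target (as in the remark following (\ref{m16})). Since $L$ is almost regular, $\wh L:J^1Y\to N_L$ is a fibred manifold, so $\phi$ is a surjective submersion and $\phi^*$ is injective on exterior forms. Let $J^1u\oplus J^1u$ denote the vector field on $J^1Y\op\times_Y J^1Y$ acting by $u^i\dr_i$ on the common base $Y$ and by $J^1u$ on each of the two jet factors. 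The equalities (\ref{y79})--(\ref{y82}), i.e. $T\wh L\circ J^1u=v_N\circ\wh L$, together with the fact that $Jv_N$ (\ref{y83}) transports the $J^1Y$-factor by $J^1u$ itself, say exactly that $J^1u\oplus J^1u$ is $\phi$-related to $Jv_N$. Hence $\phi^*(\bL_{Jv_N}L_N)=\bL_{J^1u\oplus J^1u}(\phi^*L_N)$, and it would suffice to prove $\bL_{J^1u\oplus J^1u}(\phi^*L_N)=0$.

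The next step is to compute $\phi^*L_N$. Since $b^\la_i=-a^{\la\m}_{ij}\G^j_\m$ by (\ref{N16}), the field strength reads $\cF^i_\la=\si_0{}^{ik}_{\la\al}(a^{\al\m}_{kj}y^j_\m+b^\al_k)$, so $\cF$ depends on the jet variables only through the Legendre map composed with $\si_0$; using this, the identity $a\circ\si_0\circ a=a$ (equation (\ref{N45})) and the symmetry of $a$ and $\si_0$, a routine index manipulation gives
\[
\phi^*L_N=\big(a^{\la\m}_{ij}w^i_\la {w'}^j_\m-\frac12\,a^{\la\m}_{ij}w^i_\la w^j_\m+c'\big)\om,
\]
where $w^i_\la=y^i_\la-\G^i_\la$ and ${w'}^i_\la={y'}^i_\la-\G^i_\la$ refer to the two jet factors. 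It remains to check $\bL_{J^1u\oplus J^1u}(\phi^*L_N)=0$. As $u=u^i\dr_i$ is vertical, $J^1u\oplus J^1u$ is vertical over $X$ and annihilates $\om$, so only the invariance of the bracketed density matters. On each jet factor (\ref{y50}) gives $J^1u(w^i_\la)=\dr_ku^i\,w^k_\la$, and (\ref{y45}) reads $u^k\dr_ka^{\la\m}_{ij}+\dr_iu^ka^{\la\m}_{kj}+a^{\la\m}_{ik}\dr_ju^k=0$; combining these (with the symmetry $a^{\la\m}_{ij}=a^{\m\la}_{ji}$ for the second summand) shows that each of $a^{\la\m}_{ij}w^i_\la {w'}^j_\m$ and $a^{\la\m}_{ij}w^i_\la w^j_\m$ is annihilated by $J^1u\oplus J^1u$, while $u^k\dr_kc'=0$ by (\ref{y31}). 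This would give $\bL_{J^1u\oplus J^1u}(\phi^*L_N)=0$, hence $\bL_{Jv_N}L_N=0$, which is the assertion.

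The step I expect to be the main obstacle is the identification of $\phi^*L_N$: one must be sure that $\cF^i_\la$, and hence $\cL_N$, really factors through the Legendre map, so that no $\si_0$ survives in $\phi^*L_N$; this rests on $a\circ\si_0\circ a=a$, $\si_0\circ a\circ\si_0=\si_0$ and on the constraint $\cP=a\circ\si_0\circ\cP$ holding on $N_L$, which is also what makes the coordinate form (\ref{bv2}) consistent with $\cL_N$ being affine in $y^i_\la$. All of this is part of the algebra already set up around (\ref{N21}) and (\ref{y79})--(\ref{y82}), so the argument introduces no new idea. Equivalently, and closer to a bare-hands proof, one could verify $Jv_N(\cL_N)=0$ directly on $N_L\op\times_Y J^1Y$ by separating the part of $\cL_N$ linear in $y^i_\la$ from the rest: the linear part cancels by (\ref{b1}) together with $\cP=a\circ\si_0\circ\cP$, and the $y_\la$-independent remainder cancels by (\ref{y53}), (\ref{N21}) and (\ref{y31}). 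This is the computational heart of the theorem in either presentation.
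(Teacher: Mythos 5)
Your proposal is correct, but it organizes the computation differently from the paper. The paper's proof stays on $N_L\op\times_Y J^1Y$: it first establishes the transformation law $v_N(\cP^\la_i)=-\dr_iu^k\cP^\la_k$ on the constraint space (formula (\ref{y87})), then splits the invariance condition $Jv_N(\cL_N)=0$ into the three separate equalities (\ref{y85}) for the summands $\si_0{}^{ij}_{\la\m}\cP^\la_i\cP^\m_j$, $\cP^\la_i\cF^i_\la$ and $c'$, and verifies each one using (\ref{y31}), (\ref{y50}), (\ref{y53}) and (\ref{y87}) — essentially the "bare-hands" variant you sketch in your last paragraph. Your primary route instead pulls $L_N$ back along $\wh L\op\times_Y\id$ to the fibred square $J^1Y\op\times_Y J^1Y$, where (using $b=-a\circ\G$, $a\circ\si_0\circ a=a$) the constrained Lagrangian becomes the bilinear expression $a^{\la\m}_{ij}w^i_\la w'^j_\m-\frac12 a^{\la\m}_{ij}w^i_\la w^j_\m+c'$ in the two covariant differentials, and the whole invariance reduces to the single identity (\ref{y45}) together with (\ref{y50}) and the second equality (\ref{y31}); injectivity of the pull-back along the surjective submersion (guaranteed by almost regularity) then gives $\bL_{Jv_N}L_N=0$. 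The $\phi$-relatedness of $J^1u\oplus J^1u$ and $Jv_N$ is exactly the content of (\ref{y79})--(\ref{y82}), so nothing is missing. What your route buys is economy and transparency: the auxiliary identities (\ref{y53}) and (\ref{y87}), which the paper must verify separately, are absorbed into $a\circ\si_0\circ a=a$ and (\ref{y45}); what the paper's route buys is that it never leaves the constraint space and produces the transformation law (\ref{y87}) of the momenta, which is reused elsewhere (e.g. in Theorem \ref{y41'}).
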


\begin{proof}
One can show that
\mar{y87}\beq
v_N(\cP^\la_i)=-\dr_iu^k\cP^\la_k \label{y87}
\eeq
on the constraint space $\cR^\la_i=0$. Then the invariance
condition $Jv_N(\cL_N)=0$ falls into the three equalities
\mar{y85}\beq
Jv_N(\si_0{}^{ij}_{\la\m}\cP^\la_i\cP^\m_j)=0, \qquad
Jv_N(\cP^\la_i\cF^i_\la)=0, \qquad Jv_N(c')=0. \label{y85}
\eeq
The latter is exactly the second equality (\ref{y31}). The first
equality (\ref{y85}) is satisfied due to the relations (\ref{y53})
and (\ref{y87}). The second one takes a form
\mar{y90}\beq
Jv_N(\cP^\la_i(y^i_\la-\G^i_\la))=0. \label{y90}
\eeq
It holds owing to the relations (\ref{y50}) and (\ref{y87}).
\end{proof}

Turn now to symmetries of the Lagrangian $L_\Pi$ (\ref{m32}).
Since $L_\Pi$ is the pull-back of $L_N$ onto $\Pi\op\times_Y
J^1Y$, its symmetry must be an appropriate lift of the vector
field $v_N$ (\ref{y82}) onto $\Pi$.

Given a vertical vector field $u$ on $Y\to X$, let us consider its
canonical lift (\ref{Q4}):
\mar{y41}\beq
\wt u=u^i\dr_i - \dr_iu^j p^\la_j\dr^i_\la,  \label{y41}
\eeq
onto the Legendre bundle $\Pi$. It readily observed that the
vector field $\wt u$ is projected onto the vector field $v_N$
(\ref{y82}).

Let us additionally suppose that the one-parameter group of
automorphisms of $Y$ generated by $u$ preserves the splitting
(\ref{N18}), i.e., $u$ obeys the condition
\mar{y49'}\beq
u^k\dr_k(\si_0{}^{im}_{\la\nu}a^{\nu\m}_{mj})+
\si_0{}^{im}_{\la\nu}a^{\nu\m}_{mk}\dr_ju^k -
\dr_ku^i\si_0{}^{km}_{\la\nu}a^{\nu\m}_{mj} =0. \label{y49'}
\eeq
The relations (\ref{y50}) and (\ref{y49'}) lead to the
transformation law
\mar{m53}\beq
J^1u(\cF^i_\m)=\dr_ju^i\cF^j_\m.  \label{m53}
\eeq

\begin{theorem} \label{y41'} \mar{y41'}
If the condition (\ref{y49'}) holds, the vector field $\wt u$
(\ref{y41}) is a symmetry of the Lagrangian $L_\Pi$ (\ref{m32})
iff $u$ is a Noether symmetry of the original Lagrangian $L$
(\ref{cmp31}).
\end{theorem}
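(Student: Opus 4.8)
The plan is to treat the two implications separately, using the relation $L_\Pi=\mathrm{pr}^*L_N$ — the pull-back of the constrained Lagrangian $L_N$ (\ref{bv2}) under the trivial fibration $\mathrm{pr}_2:\Pi\to N_L$ extended by the identity on $J^1Y$ — together with the transformation laws already established. For the ``if'' part, suppose $u$ is a Noether symmetry of $L$ in the form (\ref{cmp31}). By Theorem \ref{y84} the prolonged vector field $Jv_N$ (\ref{y83}) obeys $\bL_{Jv_N}L_N=0$. Since $\wt u$ (\ref{y41}) projects onto $v_N$ (\ref{y82}), the field $J\wt u=\wt u+J^1u$ is $\mathrm{pr}$-related to $Jv_N=v_N+J^1u$, whence $\bL_{J\wt u}L_\Pi=\mathrm{pr}^*(\bL_{Jv_N}L_N)=0$; thus $\wt u$ is a symmetry of $L_\Pi$. (This half uses Theorem \ref{y84} and not (\ref{y49'}); the role of the projection is exactly to dispose of the $\Ker\si_0$-directions of $\Pi$ on which $J\wt u\cdot\cL_\Pi$ could otherwise depend.)

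For the ``only if'' part, assume $\bL_{J\wt u}L_\Pi=0$ and derive that $u$ is a Noether symmetry of $L$ (\ref{cmp31}). The idea is to probe $\bL_{J\wt u}L_\Pi$ by a one-parameter family of sections of $\Pi\op\times_Y J^1Y\to J^1Y$. Set $\si_t:(x^\la,y^i,y^i_\la)\mapsto(x^\la,y^i,\ t\,a^{\la\m}_{ij}\cF^j_\m,\ y^i_\la)$, i.e. $p^\la_i=t\,\wh a(\cF)^\la_i$ along $\si_t$, with $\cF^i_\la$ as in (\ref{mos018}). Using the pseudo-inverse identities (\ref{N45}), (\ref{N21}) one finds
\[
\si_t^*\cL_\Pi=(t-t^2/2)\,a^{\la\m}_{ij}\cF^i_\la\cF^j_\m+c',\qquad (\dr\cL_\Pi/\dr p^\la_i)|_{\si_t}=(1-t)\,\cF^i_\la .
\]
Next, by the transformation law (\ref{m53}) — which is where (\ref{y49'}) is used — the section $\si_t$ fails to intertwine $J^1u$ with $J\wt u$ only by the $\Pi$-vertical vector field $W_t=t\,T^{\la\m}_{ij}\cF^j_\m\,\dr/\dr p^\la_i$, where $T^{\la\m}_{ij}=u^k\dr_ka^{\la\m}_{ij}+\dr_iu^k a^{\la\m}_{kj}+a^{\la\m}_{ik}\dr_ju^k$ is the left-hand side of (\ref{y45}). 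Hence, by naturality of the Lie derivative under the $\si_t$-related pair $(J^1u,\ J\wt u|_{\si_t}+W_t)$ and the hypothesis $\bL_{J\wt u}L_\Pi=0$, we get $\bL_{J^1u}(\si_t^*L_\Pi)=\si_t^*\big((W_t\cdot\cL_\Pi)\om\big)$. Evaluating both sides with the displayed formulas yields, for all $t$, the identity $(t-t^2/2)\,T^{\la\m}_{ij}\cF^i_\la\cF^j_\m+u^k\dr_kc'=t(1-t)\,T^{\la\m}_{ij}\cF^i_\la\cF^j_\m$; comparing powers of $t$ forces $u^k\dr_kc'=0$ and $T^{\la\m}_{ij}\cF^i_\la\cF^j_\m=0$, which together are equivalent to $\bL_{J^1u}L=0$. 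So $u$ is a Noether symmetry of $L$ (\ref{cmp31}).

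The routine parts here are the two displayed identities and the evaluation of $W_t$; the one step needing genuine care — and the only place (\ref{y49'}) enters — is the assertion that $J^1u$ acts on $\cF^i_\la$ through the pointwise linear term $\dr_ku^i$ (equivalently, that the one-parameter group of $u$ preserves the splitting (\ref{N18}), i.e. the projector $\si_0\circ a$). Without (\ref{y49'}) the section $\si_t$ acquires an extra non-vertical defect, the $t$-identity picks up uncontrolled terms, and the argument fails. A secondary point worth flagging is that the test sections $\si_t$ sweep out only the Lagrangian constraint space $N_L\subset\Pi$ and not its $\Ker\si_0$-complement — which is precisely why the ``if'' direction must be obtained via the projection onto $v_N$ and cannot simply be the same $\si_t$-computation run in reverse.
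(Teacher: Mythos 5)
Your proof is correct under the theorem's hypotheses, but it runs along a genuinely different track from the paper's. The paper argues both directions by direct computation on the three summands of $\cL_\Pi$: for the ``if'' part it checks that $\cP^\la_i\cF^i_\la$, $\si_0{}^{ij}_{\la\m}\cP^\la_i\cP^\m_j$ and $c'$ are each invariant using (\ref{y55}), (\ref{m53}) and (\ref{y53}); for the ``only if'' part it observes that, since $J\wt u$ preserves the degree in momenta, the invariance condition (\ref{m49'}) splits into the three independent equalities (\ref{m50}), from which it reads off that $\cF$ transforms dually to $p$ and $\si_0p$ transforms like $\cF$, so that $a\cF\cF$ transforms like $\si_0pp$ and is invariant. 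Your converse replaces this splitting-by-degree argument with evaluation along the family $p=t\,\wh a(\cF)$ and comparison of powers of $t$ --- the same homogeneity is being exploited, but packaged as a single substitution; your computation of $\si_t^*\cL_\Pi$, of $\dr\cL_\Pi/\dr p$ along $\si_t$, and of the defect $W_t$ all check out against (\ref{N45}), (\ref{N21}) and (\ref{m53}), and the $t^0$ and $t^2$ coefficients do yield $u^k\dr_kc'=0$ and $T^{\la\m}_{ij}\cF^i_\la\cF^j_\m=0$, which is exactly $\bL_{J^1u}L=0$ given (\ref{m53}). Your forward direction, routed through Theorem \ref{y84} and the pull-back relation $L_\Pi={\rm pr}_2^*L_N$, is also sound and is more economical than the paper's term-by-term check. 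The one claim I would not let stand is the parenthetical assertion that this half avoids (\ref{y49'}): it rests on $\wt u$ being ${\rm pr}_2$-projectable onto $v_N$, i.e.\ on $\wt u(\cP^\la_i)$ being a function of $\cP$ alone, and that is precisely the content of (\ref{y55}), which the paper derives from (\ref{y49'}); without that condition the difference $\wt u(\cP^\m_j)-v_N(\cP^\m_j)$ contains $\cR$-dependent terms that do not obviously vanish. This does not damage the proof of the theorem as stated (whose hypothesis includes (\ref{y49'})), but the side remark about dispensing with the hypothesis in one direction should be dropped or justified separately. What the paper's route buys in exchange is the explicit, slightly stronger conclusion that each of the three summands of $L_\Pi$ is separately invariant, together with the transformation laws of $\cF$ and $\si_0p$, which are reused elsewhere.
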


\begin{proof}
Due to the condition (\ref{y49'}), the vector field $\wt u$
(\ref{y41}) preserves the splitting (\ref{N20}), i.e.,
\mar{y55}\beq
\wt u(\cP^\la_i)=-\dr_i u^k\cP^\la_k, \qquad \wt
u(\cR^\la_i)=-\dr_i u^k\cR^\la_k.
\eeq
The vector field $\wt u$ gives rise to the vector field
(\ref{Q4'}):
\mar{y40}\beq
J\wt u=u^i\dr_i - \dr_iu^j p^\la_j\dr^i_\la +d_\la u^i\dr^\la_i,
\label{y40}
\eeq
on $\Pi\op\times_Y J^1Y$, and we obtain the Lagrangian symmetry
condition
\mar{m49'}\beq
(u^i\dr_i-\dr_j u^ip^\la_i\dr^j_\la +d_\la u^i\dr_i^\la)\cL_\Pi=0.
\label{m49'}
\eeq
It is readily observed that the first and third terms of a
Lagrangian $L_\Pi$ are separately invariant due to the relations
(\ref{y31}) and (\ref{m53}). Its second term is invariant owing to
the equality (\ref{y53}). Conversely, let the invariance condition
(\ref{m49'}) hold. It falls into the independent equalities
\mar{m50}\beq
J\wt u(\si_0{}^{ij}_{\la\m}p^\la_i p^\m_j) =0, \qquad
 J\wt u(p^\la_i\cF_\la^i)=0, \qquad
u^i\dr_i c'=0, \label{m50}
\eeq
i.e., the Lagrangian $L_\Pi$ is invariant iff its three summands
are separately invariant. One obtains at once from the second
condition (\ref{m50}) that the quantity $\cF$ is transformed as
the dual of momenta $p$. Then the first condition (\ref{m50})
shows that the quantity $\si_0p$ is transformed by the same law as
$\cF$. It follows that the term $a\cF\cF$ in the Lagrangian $L$
(\ref{cmp31}) is transformed as $a(\si_0 p)(\si_0 p)=\si_0pp$,
i.e., it is invariant. Then this Lagrangian is invariant due to
the third equality (\ref{m50}).
\end{proof}

\begin{remark} \label{ps77} \mar{ps77}
At the same time, a Lagrangian $L_\Pi$ may possess additional
non-classical symmetries which do not come from symmetries of an
original Lagrangian $L$. For instance, let us assume that $Y\to X$
is an affine bundle modelled over a vector bundle $\ol Y\to X$. In
this case, the Legendre bundle $\Pi$ (\ref{00}) is isomorphic to
the product
\be
\Pi=Y\op\times_X (\ol Y^*\op\ot_X\op\w^n T^*X\op\ot_X TX)
\ee
such that transition functions of coordinates $p^\la_i$ are
independent of $y^i$. Then the splitting (\ref{N20}) takes a form
\mar{y26}\beq
\Pi=Y\op\times_X(\ol{\Ker \si_0}\op\oplus_X\ol N_L), \label{y26}
\eeq
where $\ol{\Ker \si_0}$ and $\ol N_L$ are fibre bundles over $X$
such that
\be
\Ker \si_0=\pi^*\ol{\Ker \si_0},
\ee
and $N_L=\pi^*\ol N_L$ are their pull-backs onto $Y$. The
splitting (\ref{y26}) keeps the coordinate form (\ref{N20'}). The
splittings (\ref{N18}) and (\ref{y26}) lead to the decomposition
\mar{y25}\beq
\Pi\op\times_Y J^1Y=(\ol{\Ker \si_0}\op\oplus_X\ol N_L)\op\times_Y
(\Ker\wh L\op\oplus_Y {\rm Im}(\si_0\circ\wh L)). \label{y25}
\eeq
In view of this decomposition, let us associate to any section
$\xi$ of $\ol{\Ker \si_0}\to X$ the vector field
\mar{y27}\beq
u_\Pi=\xi_a (M^{-1})^a{}_i^\la\dr^i_\la, \label{y27}
\eeq
on $\Pi$. Its lift (\ref{Q4'}) onto $\Pi\op\times_Y J^1Y$ keeps
the coordinate form
\mar{y27'}\beq
Ju_\Pi=\xi_a (M^{-1})^a{}_i^\la\dr^i_\la. \label{y27'}
\eeq
It is readily observed that the Lie derivative of the Lagrangian
$L_\Pi$ (\ref{m32}) along the vector field (\ref{y27'}) vanishes,
i.e., $u_\Pi$ is a symmetry of $L_\Pi$. Moreover, the vector
fields (\ref{y27}), parameterized by sections $\xi$ of $\ol{\Ker
\si_0}\to X$, is a gauge symmetry of $L_\Pi$
\cite{book09,gauge09}. However, it does not come from symmetries
of an original Lagrangian $L$.
\end{remark}

\section{PS Hamiltonian gauge theory}

Yang -- Mills gauge theory of principal connections provides the
most physically relevant example of a quadratic Lagrangian and PS
Hamiltonian systems \cite{book,book09}. The peculiarity of gauge
theory lies in the fact that the splittings (\ref{N18}) and
(\ref{N20}) of its configuration and momentum phase spaces are
canonical.

Let $P\to X$ be a principal bundle with a structure Lie group $G$.
Being $G$-equivariant, principal connections on $P\to X$ are
represented by sections of the affine bundle
\mar{br3}\beq
C=J^1P/G\to X, \label{br3}
\eeq
called the bundle of principal connections. It is modelled over a
vector bundle $T^*X\ot V_GP$, where $V_GP=VP/G\to X$ is the fibre
bundle in Lie algebras $\cG$ of the group $G$. Given a basis
$\{\ve_r\}$ for $\cG$, we obtain local fibre bases $\{e_r\}$ for
$V_GP$. The connection bundle $C$ (\ref{br3}) is coordinated by
$(x^\m,a^r_\m)$ such that, written relative to these coordinates,
sections $A=A^r_\m dx^\m\ot e_r$ of $C\to X$ are familiar local
connection one-forms, regarded as gauge potentials.

There is one-to-one correspondence between the sections $\xi=\xi^r
e_r$ of $V_GP\to X$ and the vector fields on $P$ which are
infinitesimal generators of one-parameter groups of vertical
automorphisms (gauge transformations) of $P$. Any section $\xi$ of
$V_GP\to X$ yields the vector field
\mar{br6}\beq
u_\xi=u^r_\m \dr_r^\m=(\dr_\m\xi^r + c^r_{pq}a^p_\m\xi^q)\dr_r^\m
\label{br6}
\eeq
on $C$, where $c^r_{pq}$ are the structure constants of a Lie
algebra $\cG$.

The configuration space of gauge theory is the first order jet
manifold $J^1C$ equipped with the adapted coordinates
$(x^\la,a^m_\la,a^m_{\m\la})$. This configuration space admits the
canonical splitting
\mar{N31'}\ben
&& J^1C= C_+\op\oplus_C  C_-=C_+\op\oplus (C\op\times_X\op\w^2T^*X\op\ot_X V_GP),
\label{N31'} \\
&&  a^r_{\m\la}=\frac12(a^r_{\m\la}+a^r_{\la\m}-c^r_{pq}a^p_\m
a^q_\la) +\frac12(a^r_{\m\la}-a^r_{\la\m} +c^r_{pq}a^p_\m
a^q_\la), \nonumber
\een
with the corresponding projections
\mar{b4127,8}\ben
&&{\cal S}: J^1 C\to C_+, \qquad
{\cal S}^r_{\m\la}= a^r_{\m\la}+a^r_{\la\m}-c^r_{pq}a^p_\m
a^q_\la, \label{b4127}\\
&& \cF: J^1 C\to C_-,\qquad
 \cF^r_{\m\la}= a^r_{\m\la}-a^r_{\la\m} +c^r_{pq}a^p_\m a^q_\la,
\label{b4128}
\een
where $\cF$  is the strength of gauge fields.

Gauge theory of principal connections on $P\to X$ is characterized
by almost regular first order Yang--Mills Lagrangian
\mar{5.1'}\beq
L_{\rm YM}=\frac{1}{4}a^G_{pq}g^{\la\m}g^{\bt\n}\cF^p_{\la
\beta}\cF^q_{\m\n}\sqrt{\nm g}\,\om, \qquad  g=\det(g_{\m\nu}), \label{5.1'}
\eeq
on $J^1C$, where  $a^G$ is a non-degenerate $G$-invariant metric
on the Lie algebra $\cG_r$ and $g$ is a non-degenerate world
metric on $X$. It possesses the gauge symmetries $u_\x$
(\ref{br6}). Their jet prolongation onto $J^1C$ read
\mar{ps50}\beq
J^1u_\xi=(\dr_\mu\x^r+c^r_{qp}a^q_\mu\x^p)\dr^\mu_r +
(c^r_{pq}(a^p_{\la\m}\xi^q +a^p_\m\dr_\la\xi^q)
+\dr_\la\dr_\m\xi^r)\dr_r^{\la\m}, \label{ps50}
\eeq
and we have transformation laws
\mar{ps60}\ben
&& J^1u_\xi(\cF^r_{\la\m})=c^r_{pq}\cF^p_{\la\m}\xi^q,\nonumber \\
&& J^1u_\xi(\cS^r_{\la\m})= c^r_{pq}\cS^p_{\la\m}\xi^q
+c^r_{pq}a^p_\m\dr_\la\xi^q+ \dr_\la\dr_\m\xi^r. \label{ps60}
\een

The Euler -- Lagrange operator of the Yang -- Mills Lagrangian
$L_{\rm YM}$ (\ref{5.1'}) is
\be
\dl L_{\rm YM}=\cE_{\rm YM}=\cE^\m_r\thh^\m_r\w\om=(\dl^n_rd_\la
+c^n_{rp}a^p_\la)(a^G_{nq}g^{\m\al}g^{\la\bt}
\cF^q_{\al\bt}\sqrt{|g|})\thh_\m^r\w\om.
\ee
Its kernel defines the Yang -- Mills equation
\mar{57f14}\beq
\cE^\m_r= (\dl^n_rd_\la
+c^n_{rp}a^p_\la)(a^G_{nq}g^{\m\al}g^{\la\bt}
\cF^q_{\al\bt}\sqrt{|g|})=0. \label{57f14}
\eeq

A momentum phase space of gauge theory is the Legendre bundle
\mar{b4129}\beq
\pi_{\Pi C}: \Pi\to C, \qquad \Pi =\op\wedge^n T^*X\op\otimes_C
TX\op\otimes_C [C\times\ol C]^*, \label{b4129}
\eeq
endowed with holonomic coordinates $(x^\la,a^p_\la,p^{\mu\la}_m)$.
The Legendre bundle $\Pi$ (\ref{b4129}) admits the canonical
decomposition (\ref{N20}):
\ben
&& \Pi=
\Pi_+\op\oplus_C\Pi_-,\label{N32} \\
&&  p^{\mu\lambda}_m= \cR^{(\mu\lambda)}_m +
\cP^{[\mu\lambda]}_m= p^{(\mu\lambda)}_m +
p^{[\mu\lambda]}_m=\frac{1}{2}(p^{\mu\lambda}_m+ p^{\lambda\mu}_m)
+ \frac{1}{2}(p^{\mu\lambda}_m- p^{\lambda\mu}_m). \nonumber
\een

The Legendre map associated to the Lagrangian (\ref{5.1'}) takes a
form
 \ben
 &&p^{(\mu\la)}_m\circ\wh L_{YM}=0, \label{5.2a}\\
&&p^{[\mu\la]}_m\circ\wh
L_{YM}=a^G_{mn}g^{\m\al}g^{\la\bt} \cF^n_{\al\bt}\sqrt{\mid
g\mid}. \label{5.2b}
\een
A glance at this morphism shows that $\Ker\wh L_{YM}=C_+$, and the
Lagrangian constraint space is
\mar{ps66}\beq
N_L=\wh L_{YM}(J^1C)=\Pi_-. \label{ps66}
\eeq
It is defined by the equation $p^{(\mu\la)}_m=0$ (\ref{5.2a}).
Obviously, $N_L$ is an imbedded submanifold of $\Pi$, and the
Lagrangian $L_{\rm YM}$ is almost regular.

Let us consider connections $\G$ on a fibre bundle $C\to X$ which
take their values into $\Ker\wh L$, i.e.,
\mar{69}\beq
 \G:C\to C_+, \qquad
\G^r_{\la\m}-\G^r_{\m\la}+c^r_{pq}a^p_\la a^q_\m=0. \label{69}
\eeq
Given a symmetric linear connection $K$ (\ref{08}) on $T^*X$,
every principal connection $B$ on a principal bundle $P\to X$
gives rise to a connection $\G_B: C\to C_+$ such that
\be
\G_B\circ B={\cal S}\circ J^1B.
\ee
It reads
\beq
\G_B{}^r_{\la\m}=\frac{1}{2} [\dr_\mu B^r_\la+\dr_\la B^r_\mu
-c^r_{pq}a^p_\la a^q_\mu  +  c^r_{pq} (a^p_\la B^q_\m +a^p_\m
B^q_\la)] - K_\la{}^\bt{}_\mu(a^r_\bt-B^r_\bt). \label{3.7}
\eeq

Given the connection (\ref{3.7}), the corresponding Hamiltonian
form (\ref{N22}):
\ben
&&H_B=p^{\la\m}_r d a^r_\m\w\om_\la-
p^{\la\m}_r\G_B{}^r_{\la\m}\om-\wt{\cH}_{YM}\om, \label{5.3}\\
&&\wt{\cH}_{YM}= \frac{1}{4}a^{mn}_Gg_{\mu\nu}
g_{\lambda\beta} p^{[\mu\lambda]}_m p^{[\nu\beta]}_n\sqrt{\nm g},
\nonumber
\een
is associated to the Lagrangian $L_{\rm YM}$ (\ref{5.1'}). It is
the Poincar\'e -- Cartan form of a Lagrangian
\mar{yyy}\beq
L_H=[p^{\la\m}_r (a^r_{\la\m}- \G_B{}^r_{\la\m})
-\wt{\cH}_{YM}]\om \label{yyy}
\eeq
on $\Pi\op\times_C J^1C$. The pull-back of any Hamiltonian form
$H_B$ (\ref{5.3}) onto the Lagrangian constraint space $N_L$
(\ref{ps66}) is the constrained Hamiltonian form (\ref{b4300}):
\mar{b4130}\beq
H_N=i^*_NH_B=p^{[\la\m]}_r(da^r_\mu\w\om_\la + \frac12
c^r_{pq}a^p_\la a^q_\mu\om) -\wt{\cH}_{YM}\om. \label{b4130}
\eeq
The corresponding constrained Lagrangian $L_N$ on
\mar{ps67}\beq
N_L\op\times_C J^1C= \Pi_-\op\ot_C J^1C \label{ps67}
\eeq
reads
\mar{yyy1}\beq
L_N= (p^{[\la\m]}_r\cF^r_{\la\mu} -\wt{\cH}_{YM})\om. \label{yyy1}
\eeq
Its pull-back $L_\Pi$ onto $\Pi\op\times_C J^1C$ is
\mar{yyy10}\beq
L_\Pi= (p^{\la\m}_r\cF^r_{\la\mu} -\wt{\cH}_{YM})\om.
\label{yyy10}
\eeq

Note that, in contrast with the Lagrangian (\ref{yyy}), the
constrained Lagrangian $L_N$ (\ref{yyy1}) possesses gauge
symmetries as follows. Gauge symmetries $u_\xi$ (\ref{br6}) of the
Yang -- Mills Lagrangian give rise to vector fields (\ref{y41}):
\mar{ps51}\beq
\wt u_\xi=(\dr_\mu\x^r+c^r_{qp}a^q_\mu\x^p)\dr^\mu_r -
c^r_{qp}\x^p p^{\la\m}_r\dr^q_{\la\m} \label{ps51}
\eeq
on $\Pi$. Vector fields $J^1u_\x$ (\ref{ps50}) and  $\wt u_\xi$
(\ref{ps51}) provide gauge symmetries
\mar{ps68}\beq
J\wt u_\x=J^1u +\wt u \label{ps68}
\eeq
of the Lagrangians $L_N$ (\ref{yyy1}) and $L_\Pi$ (\ref{yyy10}) in
accordance with Theorems \ref{y84} -- \ref{y41'}.

The Hamiltonian form $H_B$ (\ref{5.3}) yields the covariant
Hamilton equation which  consist of the equation (\ref{5.2b}) and
the equations
\ben
&& a^m_{\la\mu}+
a^m_{\m\la}=2\G_B{}^m_{(\la\mu)},\label{5.6}\\
&& p^{\la\mu}_{\la r}=c^q_{pr}r^p_\la
p^{[\la\m]}_q - c^q_{rp}B^p_\la p^{(\la\mu)}_q +K_\la{}^\mu{}_\nu
p^{(\la\nu)}_r. \label{5.5}
\een
The Hamilton equations (\ref{5.6}) and (\ref{5.2b}) are similar to
the equations (\ref{N23}) and (\ref{N28}), respectively. The
Hamilton equations (\ref{5.2b}) and (\ref{5.5}) restricted to the
Lagrangian constraint space (\ref{5.2a}) are precisely the
constrained Hamilton equation (\ref{N44}) for the constrained
Hamiltonian form $H_N$ (\ref{b4130}), and they are equivalent to
the Yang -- Mills equation (\ref{57f14}) for gauge potentials
$A=\pi_{\Pi C}\circ r$.

Different Hamiltonian forms $H_B$ lead to different equations
(\ref{5.6}). This equation is independent of momenta and, thus, it
exemplifies the gauge-type condition (\ref{N23}):
\be
\G_B{}\circ A={\cal S}\circ J^1A.
\ee
A glance at this condition shows that, given a solution $A$ of the
Yang -- Mills equation, there always exists a Hamiltonian form
$H_B$ (e.g., $H_{B=A}$) which obeys the condition (\ref{2.36}),
i.e.,
 \be
\wh H_B\circ\wh L_{YM}\circ J^1A=J^1A.
\ee
Consequently, the Hamiltonian forms $H_B$ (\ref{5.3})
parameterized by principal connections $B$ constitute a complete
set.

It should be emphasized that the gauge-type condition (\ref{5.6})
differs from the familiar gauge conditions in gauge theory which
single out a representative of each gauge coset (with the accuracy
to Gribov's ambiguity). Namely, if a gauge potential $A$ is a
solution of the Yang -- Mills equation, there exists a gauge
conjugate potential $A'$ which also is a solution of the Yang --
Mills equation and satisfies a given gauge condition. At the same
time, not every solution of the Yang -- Mills equation is a
solution of the system of the Yang -- Mills equation and a certain
gauge condition. In other words, there are solutions of the  Yang
-- Mills equation which are not singled out by the gauge
conditions known in gauge theory. In this sense, this set of gauge
conditions is not complete. In gauge theory, this lack is not
essential since one can think of all gauge conjugate potentials as
being physically equivalent, but not in the case of other
non-regular Lagrangian systems, e.g., that of Proca fields
(\cite{book}, Example 4.6.5).

In the framework of the PS Hamiltonian description of quadratic
Lagrangian systems, there is a complete set of gauge-type
conditions in the sense that, for any solution of the Euler --
Lagrange equation, there exist constrained Hamilton equation
equivalent to this Euler -- Lagrange equation and a supplementary
gauge-type condition which this solution satisfies.

In gauge theory where gauge conjugate solutions are treated
physically equivalent, one may replace  the equation (\ref{5.6})
with a condition on the quantity
\be
({\cal S}\circ J^1A)^r_{\la\m} =\frac12 (\dr_\la A^r_\mu +\dr_\mu
A^r_\la -c^r_{pq}A^p_\la A^q_\mu),
\ee
which supplements the Yang -- Mills equation and plays a role of a
gauge condition due to the gauge transformation law (\ref{ps60}).
In particular,
\mar{ps70}\beq
g^{\la\m}({\cal S}\circ J^1A)^r_{\la\m} =\al^r(x) \label{ps70}
\eeq
recovers the familiar generalized Lorentz gauge condition.

\section{Affine Lagrangian and Hamiltonian systems}

Let us turn now to a case of an affine Lagrangian system on a
fibre bundle $Y\to X$ whose Lagrangian is given by the coordinate
expression
\beq
L=\cL\om, \qquad \cL=b^\la_i y^i_\la + c, \label{N24}
\eeq
where $b$ and $c$ are local functions on $Y$. The corresponding
Legendre map $\wh L$ (\ref{b330}) takes a form
\beq
p^\la_i\circ\wh L = b^\la_i. \label{N25}
\eeq
We have the commutative diagram
\be
&& \begin{array}{rcl}
 {J^1Y}  & \op\longrightarrow^{\wh L} &  {Q\subset \Pi}  \\
  & \searrow  \nearrow & _{b}\\
 & {Y} &
\end{array}, \\
&& b= b^\la_i\om_\la\ot dy^i,
\ee
where $Q=b(Y)$ is the image of a  section $b$ of a Legendre bundle
$\Pi\to Y$. Clearly, the Lagrangian (\ref{N24}) is almost regular
without fail.

Let $\G$ be an arbitrary connection (\ref{ps9}) on a fibre bundle
$Y\to X$, and let $\wh\G$ the associated Hamiltonian map
(\ref{b420}). This Hamiltonian map satisfies the condition
(\ref{ps130}), where $\wh L$ is the Legendre morphism (\ref{N25}).
Let us consider the Hamiltonian form (\ref{ps131}) corresponding
to $\wh\G$. It reads
\beq
H=H_\G+L\circ \G=p^\la_idy^i\w\om_\la - (p^\la_i
-b^\la_i)\G_\la^i\om + c\om, \label{N26}
\eeq
and is weakly associated to the affine Lagrangian (\ref{N24}). The
corresponding Hamiltonian map
\beq
y^i_\la\circ\wh H = \G^i_\la \label{N27}
\eeq
coincides with $\wh\G$, i.e., $H$ (\ref{N26}) is associated to
$L$.

The Hamiltonian form $H$ (\ref{N26}) is affine in canonical
momenta. It follows that the Hamilton equation (\ref{b4100a}) for
$H$ reduce to the gauge-type condition
\be
\dr_\la r^i=\G^i_\la,
\ee
whose solutions are integral sections of the connection $\G$.

Conversely, for each section $s$ of a fibre bundle $Y\to X$, there
exists a connection $\G$ on $Y$ whose integral section is $s$.
Then, the corresponding Hamiltonian map (\ref{N27}) obeys the
condition (\ref{2.36}). It follows that the Hamiltonian forms
(\ref{N26}) parameterized by connections $\G$ on a fibre bundle
$Y\to X$ constitute a complete family.

The most physically relevant examples of affine Lagrangian and PS
Hamiltonian systems are Dirac spinor fields and metric
affine-gravitation theory \cite{book,sard94,book95}.

\section{Functional integral quantization}

The fact that PS Hamiltonian system with the Hamiltonian form $H$
(\ref{b418}) on a Legendre bundle $\Pi$ is equivalent to a first
order Lagrangian system on $\Pi$ with the Lagrangian $L_H$
(\ref{Q3}) enables us to quantize this PS Hamiltonian system in
the framework of familiar perturbative quantum field theory
\cite{bs04,quant94}.

If there is no constraints and the matrix $\dr^2\cH/\dr p^\m_i\dr
p^\nu_j$ is non-degenerate and positive-definite, this
quantization is given by the generating functional
\mar{m2}\beq
Z=\cN^{-1}\int\exp\{\int(\cL_\cH +\Lambda + iJ_iy^i+iJ^i_\m
p^\m_i) dx \}\op\prod_x [dp(x)][dy(x)] \label{m2}
\eeq
of Euclidean Green functions, where $\Lambda$ comes from the
normalization condition
\be
\int \exp\{\int(-\frac12\dr_\m^i\dr_\nu^j\cH p^\m_i
p^\nu_j+\La)dx\}\op\prod_x[dp(x)]=1.
\ee
If a Hamiltonian $\cH$ is degenerate, the Lagrangian $L_H$
(\ref{Q3}) may admit gauge symmetries. In this case, integration
of a generating functional along gauge group orbits must be
finite. If there are constraints, the Lagrangian system with the
Lagrangian $L_H$ (\ref{Q3}) restricted to a constraint space is
quantized.

In order to verify this functional integral quantization scheme,
we apply it to PS Hamiltonian systems associated to Lagrangian
systems with quadratic Lagrangians (\ref{N12}). Note that, in the
framework of perturbative quantum field theory, any Lagrangian is
split into a sum of some quadratic Lagrangian (\ref{N12}) and an
interaction term quantized as a perturbation.

For instance, let the Lagrangian (\ref{N12}) be hyperregular,
i.e., the matrix function $a$ is non-degenerate. Then there exists
a unique associated Hamiltonian system whose associated
Hamiltonian form $H$ (\ref{cc311}) is quadratic in momenta
$p^\m_i$, and so is the corresponding Lagrangian $L_H$ (\ref{Q3}).
If the matrix function $a$ is positive-definite on an Euclidean
space-time, the generating functional (\ref{m2}) is a Gaussian
integral of momenta $p^\m_i(x)$. Integrating $Z$ with respect to
$p^\m_i(x)$, one restarts the generating functional of quantum
field theory with the original Lagrangian (\ref{N12}). We extend
this result to theories with almost regular Lagrangians $L$
(\ref{N12}), e.g., Yang -- Mills gauge theory. The key point is
that, though such a Lagrangian $L$ yields Lagrangian constraints
$N_L$ (\ref{zzz}) and admits different associated Hamiltonian
forms $H$, all the Lagrangians $L_H$ coincide on a constraint
space $J^1Y\times_Y N_L$, and we have a unique constrained
Lagrangian system with a Lagrangian $L_N$ which is equivalent to
the original one.

Let us quantize a Lagrangian system with the Lagrangian
$L_N=\cL_N\om$ (\ref{bv2}) on a product $J^1Y\times_Y N_L$. In the
framework of a perturbative quantum field theory, we should assume
that $X=\mathbb R^n$ and $Y\to X$ is a trivial affine bundle. It
follows that both the original coordinates $(x^\la,y^i,p^\la_i)$
and the adapted coordinates $(x^\la,y^i,p_a,p_A)$ on the Legendre
bundle $\Pi$ are global. Passing to field theory on an Euclidean
space $\mathbb R^n$, we also assume that the matrix $a$ in the
Lagrangian $L$ (\ref{cmp31}) is positive-definite, i.e.,
$a_{AA}>0$.

Let us start with the Lagrangian (\ref{bv2}) without gauge
symmetries. Since a Lagrangian constraint space $N_L$ can be
equipped with the adapted coordinates $p_A$, the generating
functional of Euclidean Green functions of a Lagrangian system in
question reads
\mar{m43}\beq
Z=\cN^{-1}\int\exp\{\int (\cL_N  +\frac12{\rm tr}\,\ln \ol \si_0+
iJ_iy^i+iJ^Ap_A)\om\}\op\prod_x [dp_A(x)][dy(x)], \label{m43}
\eeq
where $\cL_N$ with respect to the adapted coordinates is given by
an expression
\be
\cL_N=M^{-1}{}_i^{\la A}p_A\cF_\la^i
-\frac12\op\sum_A(a_{AA})^{-1}(p_A)^2 + c',
\ee
and $\ol \si_0$ is a square matrix
\be
\ol \si_0^{AB}=M^{-1}{}_i^{\la A} M^{-1}{}_j^{\m B}
\si_0{}^{ij}_{\la\m} =\dl^{AB}(a_{AA})^{-1}.
\ee

The generating functional (\ref{m43}) is a Gaussian integral of
functional variables $p_A(x)$. Its integration with respect to
$p_A(x)$ under the condition $J^A=0$ restarts a generating
functional
\mar{m44}\beq
Z=\cN^{-1}\int\exp\{\int (\cL + iJ_iy^i)\om\}\op\prod_x [dy(x)],
\label{m44}
\eeq
of the original Lagrangian field system on $Y$ with the Lagrangian
(\ref{cmp31}). However, the generating functional (\ref{m43}) can
not be rewritten with respect to the original variables $p^\m_i$,
unless $a$ is a non-degenerate matrix function.

In order to overcome this difficulty, let us consider a Lagrangian
system on the whole Legendre manifold $\Pi$ with the Lagrangian
$L_\Pi$ (\ref{m32}). Since this Lagrangian is constant along the
fibres of a vector bundle $\Pi\to N_L$, the integration of the
generating functional of this field model with respect to
variables $p_a(x)$ should be finite. One can choose the generating
functional in a form
\mar{m21}\beq
Z=\cN^{-1}\int\exp\{\int
(\cL_\Pi-\frac12\si_1{}^{ij}_{\la\m}p^\la_ip^\m_j
 +\frac12{\rm tr}\,\ln \si +
iJ_iy^i+iJ^i_\m p_i^\m)\om\} \op\prod_x [dp(x)][dy(x)].
\label{m21}
\eeq
Its integration with respect to momenta $p_i^\la(x)$ restarts the
generating functional (\ref{m44}) of the original Lagrangian
system on $Y$.

\begin{remark}
Strictly speaking, since a Lagrangian $L_\Pi$ may possess gauge
symmetries (Remark \ref{ps77}), in order to obtain the generating
functional (\ref{m21}), one can follow a procedure of quantization
of gauge-invariant Lagrangian systems. In a case of the Lagrangian
$L_\Pi$ (\ref{m32}), this procedure is rather trivial because the
space of momenta variables $p_a(x)$ coincides with the translation
subgroup of the gauge group Aut$\,\Ker\si_0$.
\end{remark}

Now let us suppose that the Lagrangian $L_N$ (\ref{bv2}) and,
consequently, the Lagrangian $L_\Pi$ (\ref{m32}) (Theorem
\ref{y41'}) are invariant under some gauge group $G_X$ of vertical
automorphisms of a fibre bundle $Y\to X$ (and the induced
automorphisms of $\Pi\to X$) which acts freely on a space of
sections of $Y\to X$. Its infinitesimal generators are represented
by vertical vector fields $u=u^i(x^\m,y^j)\dr_i$ on $Y\to X$ which
give rise to the vector fields $\ol u=J\wt u$ (\ref{y40}):
\mar{m47}\beq
\ol u=u^i\dr_i-\dr_j u^ip^\la_i\dr^j_\la +d_\la u^i\dr_i^\la,
\qquad d_\la=\dr_\la +y_\la^i\dr_i, \label{m47}
\eeq
on $\Pi\op\times_Y J^1Y$. Let us also assume that $G_X$ is indexed
by $m$ parameter functions $\xi^r(x)$ such that
$u=u^i(x^\la,y^j,\xi^r)\dr_i$, where
\mar{m48}\beq
u^i(x^\la,y^j,\xi^r)=u_r^i(x^\la,y^j)\xi^r
+u_r^{i\m}(x^\la,y^j)\dr_\m\xi^r \label{m48}
\eeq
are linear first order differential operators on a space of
parameters $\xi^r(x)$ \cite{book09,gauge09}. The vector fields
$u(\xi^r)$ must satisfy the commutation relations
\be
[u(\xi^q),u(\xi'^p)]=u(c^r_{pq}\xi'^p\xi^q),
\ee
where $c^r_{pq}$ are structure constants. The Lagrangian $L_\Pi$
(\ref{m32}) is invariant under the above mentioned gauge
transformations iff its Lie derivative $\bL_{\ol u}L_\Pi$ along
vector fields (\ref{m47}) vanishes, i.e.,
\mar{m49s}\beq
(u^i\dr_i-\dr_j u^ip^\la_i\dr^j_\la +d_\la u^i\dr_i^\la)\cL_\Pi=0.
\label{m49s}
\eeq
Since the operator $\bL_{\ol u}$ is linear in momenta $p^\m_i$,
the condition (\ref{m49s}) falls into the independent conditions
(\ref{m50}). It follows that a Lagrangian $L_\Pi$ is
gauge-invariant iff its three summands are separately
gauge-invariant.

Since $\cS^i_\la=y^i_\la-\cF^i_\la$, one can easily derive from
the formula (\ref{m53}) the transformation law
\mar{m54}\beq
\ol u(\cS^i_\m)=d_\m u^i-\dr_ju^i\cF^j_\m= d_\m
u^i-\dr_ju^i(y^j_\m-\cS^j_\m) =\dr_\m u^i +\dr_ju^i\cS^j_\m
\label{m54}
\eeq
of $\cS$. A glance at this expression shows that the gauge group
$G_X$ acts freely on a space of sections $\cS(x)$ of the fibre
bundle $\Ker\wh L\to Y$ in the splitting (\ref{N18}). Let the
number $m$ of parameters of a gauge group $G_X$ do not exceed the
fibre dimension of $\Ker\wh L\to Y$. Then some combinations
$b^r{}_i^\m\cS^i_\m$ of $\cS^i_\m$ can be used as the gauge
condition
\be
b^r{}_i^\m\cS^i_\m(x)-\al^r(x)=0,
\ee
similar to the generalized Lorentz gauge (\ref{ps70}) in
Yang--Mills gauge theory.

Turn now to quantization of a Lagrangian system with the
gauge-invariant Lagrangian $L_\Pi$ (\ref{m32}). In accordance with
the well-known quantization procedure, let us modify the
generating functional (\ref{m21}) as follows
\mar{m55}\ben
&& Z=\cN^{-1}\int\exp\{\int (\cL_\Pi-\frac12\si_1{}^{ij}_{\la\m}p^\la_ip^\m_j
 +\frac12{\rm tr}\,\ln \si -\frac12 h_{rs}\al^r\al^s +
iJ_iy^i+iJ^i_\m p_i^\m)\om\} \nonumber \\
&& \qquad \Delta\op\prod_x
\op\times^r\dl(b^r{}_i^\m\cS^i_\m(x)-\al^r(x))[d\al(x)]
[dp(x)][dy(x)]=
\nonumber\\
&& \cN'^{-1}\int\exp\{\int (\cL_\Pi-\frac12\si_1{}^{ij}_{\la\m}p^\la_ip^\m_j
 +\frac12{\rm tr}\,\ln \si -\frac12 h_{rs}b^r{}_i^\m b^s{}_j^\la\cS^i_\m
\cS^j_\la +
iJ_iy^i+iJ^i_\m p_i^\m)\om\} \nonumber \\
&& \qquad \Delta\op\prod_x [dp(x)][dy(x)],
\label{m55}
\een
where
\be
\int\exp\{\int (-\frac12 h_{rs}\al^r\al^s)\om\} \op\prod_x
[d\al(x)]
\ee
is a Gaussian integral, and the factor $\Delta$ is defined by the
condition
\be
\Delta\int \op\prod_x
\op\times^r\dl(u(\xi)(b^r{}_i^\m\cS^i_\m))[d\xi(x)]=1.
\ee
We have the linear second order differential operator
\mar{m56}\beq
M^r_s\xi^s= u(\xi)(b^r{}_i^\m\cS^i_\m(x))=b^r{}_i^\m(\dr_\m
u^i(\xi) + \dr_ju^i(\xi)\cS^j_\m) \label{m56}
\eeq
on the parameter functions $\xi(x)$, and obtain $\Delta=\det M$.
Then the generating functional (\ref{m55}) takes a form
\mar{m57}\ben
&& Z=\cN'^{-1}\int\exp\{\int (\cL_\Pi-\frac12\si_1{}^{ij}_{\la\m}p^\la_ip^\m_j
 +\frac12{\rm tr}\,\ln \si -\frac12 h_{rs}b^r{}_i^\m b^s{}_j^\la\cS^i_\m
\cS^j_\la -\ol c_r M^r_sc^s + \nonumber \\
&& \qquad iJ_iy^i+iJ^i_\m p_i^\m)\om\} \op\prod_x [d\ol c][d c][dp(x)][dy(x)],
\label{m57}
\een
where $\ol c_r$, $c^s$ are odd ghost fields. Integrating $Z$
(\ref{m57}) with respect to momenta under the condition
$J^i_\m=0$, we come to the generating functional
\mar{m58}\beq
Z=\cN'^{-1}\int\exp\{\int (\cL-\frac12 h_{rs}b^r{}_i^\m
b^s{}_j^\la\cS^i_\m \cS^j_\la -\ol c_r M^r_sc^s+iJ_iy^i)\om\}
\op\prod_x [d\ol c][d c][dy(x)] \label{m58}
\eeq
of the original Lagrangian system on $Y$ with the gauge-invariant
Lagrangian $L$ (\ref{cmp31}).

Note that the Lagrangian
\mar{m59}\beq
\cL'=\cL-\frac12 h_{rs}b^r{}_i^\m b^s{}_j^\la\cS^i_\m \cS^j_\la
-\ol c_r M^r_sc^s \label{m59}
\eeq
fails to be gauge-invariant, but it admits the BRST symmetry whose
odd operator reads
\mar{m60}\ben
&& \vt=u^i(x^\m,y^i,c^s)\dr_i +d_\la u^i(x^\m,y^i,c^s)\dr_i^\la +\ol
v_r(x^\m,y^i,y^i_\m)\frac{\dr}{\dr\ol c_r} +  \nonumber\\
&& \qquad v^r(x^\m,y^i,c^s)
\frac{\dr}{\dr c^r}  +d_\la v^r(x^\m,y^i,c^s) \frac{\dr}{\dr
c^r_\la} +d_\m d_\la v^r(x^\m,y^i,c^s) \frac{\dr}{\dr
c^r_{\m\la}},
 \label{m60}\\
&& d_\la=\dr_\la +y^i_\la\dr_i + y^i_{\la\m}\dr_i^\m +
c^r_\la \frac{\dr}{\dr c^r} +c_{\la\m}^r\frac{\dr}{\dr c^r_\m}.
\nonumber
\een
Its components $u^i(x^\m,y^i,c^s)$ are given by the expression
(\ref{m48}) where parameter functions $\xi^r(x)$ are replaced with
the ghosts $c^r$. The components $\ol v_r$ and $v^r$ of the BRST
operator $\vt$ can be derived from the condition
\be
\vt(\cL')= - h_{rs} M^r_qb^s{}_j^\la \cS^j_\la c^q -\ol v_rM^r_q
c^q +\ol c_r \vt(\vt(b^r{}_j^\la \cS^j_\la))=0
\ee
of the BRST invariance of $\cL'$. This condition falls into the
two independent relations
\be
&&  h_{rs} M^r_qb^s{}_j^\la \cS^j_\la +\ol v_rM^r_q =0,\\
&& \vt(c^q)(\vt(c^p)(b^r{}_j^\la \cS^j_\la)) =u(c^p)(u(c^q)
(b^r{}_j^\la \cS^j_\la)) +u(v^r)(b^r{}_j^\la \cS^j_\la)=
u(\frac12c^r_{pq}c^pc^q+v^r)(b^r{}_j^\la \cS^j_\la)=0.
\ee
Hence, we obtain
\be
\ol v_r=-h_{rs}b^s{}_j^\la \cS^j_\la, \qquad v^r=-\frac12
c^r_{pq}c^pc^q.
\ee

In particular, let us turn to Yang -- Mills gauge theory of
principal connections in Section 13. Its constrained Lagrangian
$L_\Pi$ (\ref{yyy10}) is invariant under the gauge transformations
(\ref{ps68}). In view of the transformation law (\ref{ps60}), one
can chose the gauge condition (\ref{ps70}):
\be
g^{\la\m}S^r_{\la\m}(x)-\al^r(x)= \frac12g^{\la\m}(\dr_\la
a^r_\m(x) +\dr_\m a^r_\la(x))-\al^r(x)=0,
\ee
which is the familiar generalized Lorentz gauge. The corresponding
second-order differential operator (\ref{m56}) reads
\be
M^r_s\xi^s=g^{\la\m}(\frac12 c^r_{pq} (\dr_\la a^r_\m +\dr_\m
a^r_\la)\xi^q +c^r_{pq}a^p_\m\dr_\la\xi^q +\dr_\la\dr_\m\xi^r).
\ee
Passing to the Euclidean space and repeating the above
quantization procedure, we come to the generating functional
\be
&& Z=\cN^{-1}\int\exp\{\int (p^{\la\m}_r\cF^r_{\la\m}- a^{mn}_Gg_{\mu\nu}
g_{\la\beta} p^{\mu\la}_m p^{\nu\bt}_n\sqrt{|g|}-\\
&& \qquad \frac18 a^G_{rs}g^{\al\nu}g^{\la\m}(\dr_\al a^r_\nu +\dr_\nu a^r_\al)
(\dr_\la a^s_\m +\dr_\m a^s_\la) -g^{\la\m}\ol c_r(\frac12
c^r_{pq} (\dr_\la a^r_\m
+\dr_\m a^r_\la)c^q+c^r_{pq}a^p_\m c^q_\la +c_{\la\m}^r)  \\
&& \qquad + iJ_r^\mu a^r_\mu +iJ^r_{\m\la} p_r^{\m\la})\om\} \op\prod_x
[d\ol c][d c][dp(x)][da(x)].
\ee
Its integration with respect to momenta restarts the familiar
generating functional of gauge theory.

\section{Algebraic quantization. Quantum PS bracket}

Canonical quantization of time-dependent non-relativistic
mechanics on a fibre bundle $Q\to \mathbb R$ in Section 7 is
adequately formulated is geometric quantization of the vertical
Poisson bracket $\{,\}_V$ (\ref{m72})
\cite{quant2002,book05,book10}. This fact motivates us to
investigate quantization of the PS bracket $\{,\}_{PS}$
(\ref{xx3}).

Let us note that one can quantize only linear spaces of variables,
and therefore the condition of $Y\to X$ to be a vector bundle is
not a loss of generality.

In particular, in order to quantize the PS bracket $\{,\}_{PS}$,
one can be based on the fact that this bracket defines the Lie
bracket (\ref{ps45}) of Noether Hamiltonian currents which brings
a vector space $\cJ(\Pi)$ of these currents into a Lie algebra.
Then a representation of this algebra by operators acting in some
space can be treated as a variant of quantization of the PS
bracket $\{,\}_{PS}$. However, such kind quantization fails to be
a quantization of fields, but that of currents in the spirit of
the well known current algebra approach \cite{current}.

In a different way, we can restrict our consideration to a
subspace of linear functions in $y^i$ and $p^\la_i$ represented as
$(n-1)$-forms $F^\la \om_\la$ due to the isomorphism (\ref{000})
and can modify the PS bracket $\{,\}_{PS}$ (\ref{xx3}) as
\be
\{F,G\}'=\op\int_W \{F,G\}_{PS}
\ee
where $W$ is some compact $(n-1)$-dimensional submanifold of $X$.
This bracket leads us to a nuclear algebra of canonical
commutation relations whose representations can be investigated in
a standard way \cite{book05,sard2002}.

\end{document}